%
%
\documentclass[citeautoscript,aps,prl,twocolumn,superscriptaddress]{revtex4-1}  
\usepackage{graphicx}  
\usepackage{dcolumn}   
\usepackage{bm}        
\usepackage{amssymb}   
\usepackage{amsmath}
\usepackage{amsthm}
\usepackage{amsfonts}
\usepackage[caption=false]{subfig}
\usepackage{dsfont}
\newtheorem{theorem}{Theorem}
\usepackage{braket}
\usepackage{hyperref}
 \usepackage{physics}
\usepackage{xr}
\usepackage{mathtools}
\DeclarePairedDelimiter\ceil{\lceil}{\rceil}
\DeclarePairedDelimiter\floor{\lfloor}{\rfloor}
\usepackage{multibib}
\newcites{A}{References for Supplemental Material}

\usepackage[usenames]{xcolor}

\hyphenation{ALPGEN}
\hyphenation{EVTGEN}
\hyphenation{PYTHIA}

\begin{document}
\title{Bounding the finite-size error of quantum many-body dynamics simulations}
\author{Zhiyuan Wang}
\affiliation{Department of Physics and Astronomy, Rice University, Houston, Texas 77005,
  USA}
\affiliation{Rice Center for Quantum Materials, Rice University, Houston, Texas 77005, USA}
\author{Michael Foss-Feig}
\affiliation{Honeywell $|$ Quantum Solutions}
\author{Kaden R.~A. Hazzard}
\affiliation{Department of Physics and Astronomy, Rice University, Houston, Texas 77005,
USA}
\affiliation{Rice Center for Quantum Materials, Rice University, Houston, Texas 77005, USA}
\date{\today}

\begin{abstract}
  Finite-size error~(FSE), the discrepancy between an observable in a finite system and in the thermodynamic limit, is ubiquitous in numerical simulations of quantum many body systems. Although a rough estimate of these errors can be obtained from  a sequence of finite-size results, a strict, quantitative bound on the magnitude of FSE is still missing. Here we derive rigorous upper bounds on the FSE of local observables in real time quantum dynamics simulations initialized from a product state. In $d$-dimensional locally interacting systems with a finite local Hilbert space, our bound implies $ |\langle \hat{S}(t)\rangle_L-\langle \hat{S}(t)\rangle_\infty|\leq C(2v t/L)^{cL-\mu}$, with $v$, $C$, $c$, $\mu $ constants independent of $L$ and $t$, which we compute explicitly. For periodic boundary conditions~(PBC), the constant $c$ is twice as large as that for open boundary conditions~(OBC), suggesting that PBC have smaller FSE than OBC at early times. The bound can be generalized to a large class of correlated initial states as well. As a byproduct, we prove that the FSE of local observables in   ground state simulations decays exponentially with $L$, under a suitable spectral gap condition. Our bounds are practically useful in determining the validity of finite-size results, as we demonstrate in simulations of the one-dimensional~(1D) quantum Ising and Fermi-Hubbard models.
  
\end{abstract}

\maketitle
\paragraph{Introduction}
Numerical simulations are crucial to our  understanding of many-body quantum matter, and are routinely applied in all fields of physics and in chemistry. %
Unfortunately, many  numerical techniques popular in these fields incur significant FSEs when approximating properties of a large~(potentially infinite) system by properties of a finite one.
The most direct example is exact diagonalization (ED), which exactly solves the finite system  numerically~\cite{laflorencie:simulations_2004,noack:diagonalization_2005,
sandvik:computational_2010,laeuchli:introduction_2011}.  	
Accessible system sizes are limited since the Hilbert space dimension grows exponentially with system size; for the simplest case of interacting spin-1/2s, a state-of-the-art ground state calculation is limited to  $\sim45$ spins~\footnote{Reaching even these system sizes is possible only if internal, translation, and point group symmetries are utilized, and if state-of-the-art algorithms and large-scale computational resources are employed. Researchers usually employ much smaller systems for computational convenience.}.
FSEs also significantly affect  other techniques, such as density matrix renormalization group (DMRG)~\cite{white:density_1992,hallberg2006new,schollwoeck:density_2011,stoudenmire2012studying},
many tensor network algorithms~\cite{Perez2007matrix,orus:practical_2014}, quantum dynamical typicality-based algorithms~\cite{bartsch2009dynamical,elsayed2013regression,steinigeweg2014pushing,steinigeweg2014spin,steinigeweg2014scaling}, and quantum Monte Carlo (QMC)~\cite{nightingale:quantum_1999}, and they are a significant source of  error for simulating quantum systems on quantum computers~\cite{childs:toward_2018}   and  for analog quantum simulations using ultracold matter~\cite{bloch:quantum_2012}, trapped ions~\cite{blatt:quantum_2012}, and other platforms~\cite{altman2019quantum}. 

It is often difficult to characterize  FSEs. The standard method to assess them is to calculate and compare observables for different system sizes, ideally using finite-size scaling~\cite{sandvik2010computational}. Although useful, this method has limitations. One is that it offers no guarantees. Two different system sizes may have results that closely agree, but at larger sizes the physics changes and observables  deviate~\cite{bausch2018size}. Another is that one may not be able to study multiple system sizes that are sufficiently large to get a good estimate of the convergence. 

 In this paper, we 
 derive \textit{rigorous} upper bounds on the error of approximating observables in a large, possibly infinite, quantum many-body system by results in a smaller one.  The bounds are applicable to arbitrary Hamiltonians for which a Lieb-Robinson~(LR)  bound exists. For quantum dynamics simulations starting from product initial states and evolving under locally interacting Hamiltonians with a finite local Hilbert space, the bound for a local observable ${\hat S}$ is  
 \begin{equation}\label{eq:dynamicEB}
 |\langle \hat{S}(t)\rangle_L-\langle \hat{S}(t)\rangle_\infty|\leq C (2v t/L)^{c L-\mu},
 \end{equation}
 where $v$, $C$, $c$, and $\mu$ are constants that can be computed explicitly and  depend on the Hamiltonian, observable, and boundary condition. 
 Such dynamics is explored in a wide variety of ultracold matter experiments, such as quantum quenches and slow ramps in  Rydberg atoms~\cite{zeiher:many_2016,takei:direct_2016,bernien2017probing,guardado-sanchez:probing_2018,lienhard:observing_2018,orioli:relaxation_2018}, molecules~\cite{yan:observation_2013,hazzard:many_2014,seeselberg:extending_2018},  Fermi gases~\cite{smale:observation_2019}, atoms in optical lattices~\cite{de2013nonequilibrium, meldgin:probing_2016,choi:exploring_2016, bordia2017probing, gabardos2020relaxation}, and optical clocks~\cite{goban:emergence_2018}. This dynamics can probe fundamental phenomena, such as many-body localization~\cite{eisert2015quantum, nandkishore:many_2015,luitz2016extended,luitz2017ergodic,parameswaran:many_2018}, prethermalization~\cite{mori:thermalization_2018,schmied:non-thermal_2018}, and generation of topological defects near critical points~\cite{simon:quantum_2011}. 
 This bound is then extended to a large family of correlated initial states satisfying an exponential clustering condition. While our main focus is on  dynamics, we also 
 show that the FSE of local observables in a many-body ground state decays exponentially in system size, under a suitable spectral gap condition.
 
 The idea behind our bound is that locality -- specifically that one piece of a system does not instantly affect far-away pieces -- imposes strong constraints on quantum dynamics~\cite{bravyi2006,hastings2010locality}. This can be seen by considering evolution under a Hamiltonian initiated from a product state~(other scenarios can be understood by similar arguments). As illustrated in Fig.~\ref{fig:basicconfig}, an observable in a region $X$ will be affected by FSEs only after a long enough time for information to propagate from the  boundary to $X$. This idea is made precise by relating FSE to unequal time correlation functions, which can then be bounded by a LR bound~\cite{Lieb1972}, a direct consequence of locality. Although similar ideas of applying LR bounds to analyze the performance of some numerical algorithms have been employed in Refs.~\cite{Osborne2006,Osborne2007a,Osborne2007b,kliesch2014lieb,woods2015simulating,woods2016dynamical,haah2018quantum,tran2019locality,huang2020computing}, the connection to FSE has not been made explicit, and the practical utility of the bounds for numerics was not demonstrated. This idea has also been applied to estimate FSE in a non-rigorous way, for example in Ref.~\cite{daug2020dynamical}.
 
\begin{figure}
\center{\includegraphics[width=0.9\linewidth]{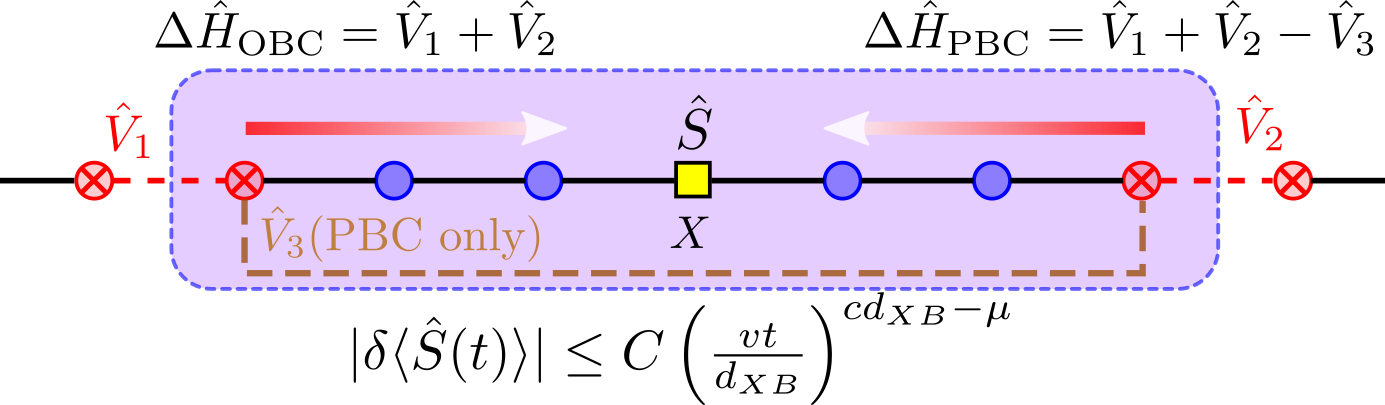}}
\caption{\label{fig:basicconfig} An illustration of our bounds. 
	In locally interacting systems, information propagates no faster than the LR speed $v$, so it takes a finite amount of time $t_c\sim d_{XB}/v$ for the effect of the boundary, $\Delta\hat{H}$ to affect the center site observable $\hat{S}$. Here $d_{XB}$ is the distance between the supports $X$~(yellow square) and $B$~(red crossed circles) of $\hat S(t)$ and $\Delta {\hat H}$, respectively. 
	}
\end{figure}  

Our FSE bound not only shows the convergence of finite-size approximations in principle, but is tight enough to be useful in practice, which we demonstrate in simulations of some prototypical models. For example, in the dynamics following a sudden change of parameters in a 1D transverse field Ising model~(TFIM) with $L=21$ sites, 
the error bounds for the transverse magnetization and nearest-neighbor correlations remain extremely small to times where they have evolved close to equilibrium. Furthermore, the bounds are reasonably tight: the time at which the error bound becomes significant is only 20--25\% smaller than the time at which the actual FSE becomes noticeable. We similarly demonstrate this 
for the non-equilibrium relaxation of the Fermi-Hubbard model (FHM) from a checkerboard
state, inspired by experiments and theory of Refs.~\cite{trotzky:probing_2012,bauer:temporal_2015}. The precision  of these bounds is enabled by the major quantitative improvements offered by recent LR bounds~\cite{Lucas2019operator,wang:tightening_2019}.

In addition to their quantitative utility, these bounds provide insights into the convergence of numerical methods, and  open the way  to designing new algorithms. One immediate consequence of the bounds is to rigorously show that the FSE decays exponentially with the linear dimension of the system
 for periodic boundary condition~(PBC), as well as for open boundary condition~(OBC) provided that one measures observables only near the center of the system, as commonly employed in the DMRG community.
If one instead averages the measurement over all sites in OBC, then our bound indicates that the error decays only algebraically. Similar behavior at finite temperature has been observed and analyzed in Ref.~\cite{iyer:optimization_2015}. Furthermore, if one compares PBC to OBC with center site measurement, our error bound for PBC decays twice as fast with distance as the bound for OBC at early times, suggesting that PBC gives more reliable results at early times~\footnote{While the fact that the error bound for PBC is smaller than that for OBC does not necessarily imply that the actual FSE in PBC is smaller, in the Supplemental Material~\cite{Suppl} we use short-time perturbative arguments to show that the actual error in PBC is indeed smaller at early times for most initial product states.}.
These insights may lead to new methods; one example is that they show why the moving-average cluster expansion (MACE) method of Ref.~\cite{hazzard:many_2014} converges exponentially faster than alternative schemes.

\paragraph{A simple bound for both OBC and PBC}

Consider the dynamical evolution of a quantum many-body system on an infinite $d$-dimensional lattice, governed by a locally interacting Hamiltonian $\hat{H}$. For illustrative purpose, in Fig.~\ref{fig:basicconfig} we draw the configuration for a 1D nearest-neighbor interacting lattice model.
Let $|\psi\rangle$ be the initial product state, $\hat{S}$ be a local observable to be measured that acts on a finite region $X$~(center point in Fig.~\ref{fig:basicconfig}), and let $\Delta \hat{H}=\sum_j \hat{V}_j$ be the sum of all the interaction terms between the inner and outer parts of the system~(red links in Fig.~\ref{fig:basicconfig}). If PBC are used, we further subtract from $\Delta\hat{H}$ the interaction between the first and the last site~(brown link in Fig.~\ref{fig:basicconfig}). Let $\hat{H}_L$ and $|\psi_L\rangle$ denote the Hamiltonian and the initial state of the finite-size simulation, respectively~(i.e. the restriction of $\hat{H}$ and $|\psi\rangle$ to the $L$-site inner system). Denote $\hat{H}'=\hat{H}-\Delta\hat{H}$, so that $\hat{H}'$ decouples into two commuting terms, one acting only on the inner system, the other acting only on the outer system. The FSE of the observable $\hat{S}$ is
\begin{equation}\label{eq:deltadef}
\delta \langle \hat{S}(t)\rangle_{\psi} \equiv|\langle e^{i\hat{H}_L t} \hat{S} e^{-i\hat{H}_L t}\rangle_{\psi_L}-\langle e^{i\hat{H}t}\hat{S} e^{-i\hat{H}t} \rangle_{\psi}|.
\end{equation}
where $\langle A\rangle_{\psi}\equiv \langle\psi|A|\psi\rangle$, and we set $\hbar=1$ throughout. Since $\hat{H}'$ decouples into two independent spatial regions (inner and outer) and $|\psi\rangle$ is a product state, the first term in Eq.~\eqref{eq:deltadef} can  be rewritten as $\langle e^{i\hat{H}' t} \hat{S} e^{-i\hat{H}' t}\rangle_{\psi}$. Inserted into Eq.~\eqref{eq:deltadef}, the two expectation values are taken in the same state $|\psi\rangle$, so their difference can be bounded by the operator norm $|\langle \psi|\hat{A}|\psi\rangle|\leq \|\hat{A}\|$. 
Using the unitary invariance of operator norm $\|\hat{A}\|=\|\hat{U}\hat{A}\hat{V}\|$ for arbitrary unitary operators $\hat{U},\hat{V}$, we have
\begin{equation} \label{eq:MLR2}
|\delta \langle \hat{S}(t)\rangle_{\psi}|\leq \| \hat{U}_I(t)\hat{S}\hat{U}_I(t)^\dagger-\hat{S} \|.
\end{equation}
where $\hat{U}_I(t)=e^{-i\hat{H}t}e^{i\hat{H}'t}$ is the evolution operator in the interaction picture, which satisfies $\hat{U}_I(0)=1$ and
$i\partial_t \hat{U}_I(t)=\hat{U}_I(t)\Delta\hat{H}(t)$,
where $\Delta\hat{H}(t)=e^{-i\hat{H}'t}\Delta\hat{H} e^{i\hat{H}'t}$.  Now applying the fundamental theorem of calculus and the triangle inequality, we obtain a bound on the FSE:
\begin{eqnarray}\label{eq:deltaS0tintro}
  |\delta \langle \hat{S}(t)\rangle_{\psi}|&\leq& \int^t_0\| \frac{d}{dt'}[\hat{U}_I(t')\hat{S}\hat{U}_I(t')^\dagger-\hat{S}] \|dt'\nonumber\\
  &=& \int^t_0\| \hat{U}_I(t')[\Delta\hat{H}(t'),\hat{S}]\hat{U}_I(t')^\dagger \|dt'\nonumber\\
  &=& \int^t_0\| [\Delta\hat{H}(t'),\hat{S}] \|dt'.
\end{eqnarray}
The integrand is the quantity bounded by LR bounds, so to upper bound the FSE, one can insert the relevant LR bound. We focus on locally interacting systems, but Eq.~\eqref{eq:deltaS0tintro} applies equally to long-range interactions by substituting the corresponding LR bounds~\cite{hastings2006,Gorshkov2014,Gong2014persistence,Foss2015nearly,tran2019locality,Lucas2019longrange,kuwahara2020strictly,tran2020hierarchy} in those systems. For a locally-interacting system, the currently tightest LR bound is obtained by computing the series in Eq.~\eqref{eq:Lucas_bound_commu} of the Supplemental Material~(SM)~\cite{Suppl}, which is based on Refs.~\cite{Lucas2019operator,wang:tightening_2019}, although this may not be efficiently computable in general. A slightly looser but efficiently computable method is discussed in Ref.~\cite{wang:tightening_2019}, in which one solves a system of first order linear differential equations for a number of variables proportional to the system size. To see the qualitative features of the bound for large systems, we can insert the simple expression given in Eq.~(3) of Ref.~\cite{wang:tightening_2019} into Eq.~\eqref{eq:deltaS0tintro} to obtain
\begin{equation}\label{eq:factorialLR}
  |\delta \langle \hat{S}(t)\rangle_{\psi}|\leq  \sum_{j} c_j\left(\frac{v|t|}{d_{Xj}}\right)^{D(\hat{S},\hat{V}_j)},
\end{equation}
where $c_j$ are constants independent of $t$ and $d_{Xj}$, $D(\hat{S},\hat{V}_j)$ is the distance between the operators $\hat{S},\hat{V}_j$ in the commutativity graph~(CG) as introduced in Ref.~\cite{wang:tightening_2019}, and $v$ is the LR speed. The distance in the CG is related to the distance in real space $d_{Xj}$ by $D(\hat{S},\hat{V}_j)= \eta d_{Xj}-\mu$, where $\eta,\mu$ are~(straightforwardly determined) constants, and $d_{Xj}$ is the distance between $X$ and $j$ in real space. Therefore the rhs of Eq.~\eqref{eq:factorialLR} is bounded by $(\frac{vt}{d_{XB}})^{\eta d_{XB}-\mu}$, where $d_{XB}=\min_{j\in B} d_{Xj}$. Despite its simplicity, the  $t$-dependence of this bound generically agrees with   the exact error to lowest order in $t$ in OBC~\cite{Suppl}.

Besides its practical utility for bounding FSE in calculations, as demonstrated below, this result has qualitative implications. One is to rigorously support the common practice of measuring observables close to the center site in OBC numerics~(e.g. in the DMRG community), rather than averaging over all sites. This minimizes the error bound, since the center size maximizes $d_{XB}$. This choice yields our main result in Eq.~\eqref{eq:dynamicEB} for the OBC case. Our bound allows one to extend this. For example, in dimension greater than one, we can minimize FSE by choosing an optimal cluster shape that minimizes the rhs of Eq.~\eqref{eq:factorialLR}, and run simulations on the optimal shape.

\paragraph{An improved bound for PBC}
In the previous section we treated PBC in a way similar to OBC. But it turns out that the resulting bound in Eqs.~\eqref{eq:deltaS0tintro} and \eqref{eq:factorialLR} is qualitatively loose at small $t$ for PBC. The reason for this can be intuitively understood as follows. The two terms in the rhs of Eq.~\eqref{eq:deltadef} can be expanded in $t$. As we discuss in greater detail in the SM~\cite{Suppl}, 
the FSE for $\hat{S}(t)$ actually is only contributed by terms in $e^{i\hat{H}t}\hat{S} e^{-i\hat{H}t}$ whose spatial span is larger than $L$ and terms in $e^{i\hat{H}_L t} \hat{S} e^{-i\hat{H}_L t}$ that wrap around the whole periodic system. 
The leading order of these terms is proportional to $t^{\mathcal L}$, where ${\mathcal L}$ is the length of the shortest non-contractible loop on the PBC commutativity graph, which is roughly twice as large as the exponent $D(\hat{S},\hat{V}_j)$ in Eq.~\eqref{eq:factorialLR}. The SM~\cite{Suppl} extends methods developed in Refs.~\cite{Lucas2019operator,wang:tightening_2019} to derive a rigorous upper bound for $|\delta\langle\hat{S}(t)\rangle|$ that leads to this improved $t^{\mathcal{L}}$ scaling. The main result is
\begin{equation}\label{eq:deltaSPBC}
|\delta \langle \hat{S}(t)\rangle^{(\text{PBC})}_{\psi}|\leq  \sum_{1\leq p\leq d}C_p \left(\frac{2v_p t}{ L_p}\right)^{\mathcal{L}_p},
\end{equation}
where the constant $C_p$ is given in Eq.~\eqref{eq:const_Cp}, $v_p$ is the LR speed in the $p$-th direction given in Eq.~\eqref{eq:directionalv_p}, and  ${\mathcal L}_p$ is the size of the periodic system in the $p$-th direction in commutativity graph. $\mathcal{L}_p$ is related to the real space system size $L_p$ by $\mathcal{L}_p= \eta_p L_p-\mu_p$ for constant integers $\eta_p,\mu_p$. 
We note that while this bound improves the small-time exponent of the PBC bound by a factor of $2$ compared to Eqs.~\eqref{eq:deltaS0tintro} and \eqref{eq:factorialLR}, the timescale $t_c\approx \min_p L_p/2v_p$ on which the bound exponentially grows is still approximately the same as Eq.~\eqref{eq:factorialLR}. Besides its quantitative utility, Eq.~\eqref{eq:deltaSPBC} shows that in anisotropic systems where $v_p$ is different in each direction, one should choose $L_p\propto v_p$ in order to minimize the FSE.  %

\paragraph{FSE in non-degenerate gapped ground states}
So far, we have been discussing FSEs of quantum dynamics simulations. We now derive a bound on FSE of local observables in non-degenerate ground states under a gap assumption. This result is interesting in its own right, and will also be useful for our subsequent generalization of the dynamics error bound to correlated initial states. 

The Hamiltonian $\hat{H}$, observable $\hat{S}$, boundary terms $\Delta\hat{H}$, etc. are the same as before. For convenience we suppose that the operator  $\hat{S}=\hat{S}_l$ has unit norm and acts nontrivially only within a region $X=X_l$ of diameter $l$ which sits on the center of the finite-size cluster. The difference now is that we consider the observable $\langle\hat{S}\rangle=\mathrm{Tr}[\hat{\rho} \hat{S}]$ given by the ground state density matrix $\hat{\rho}$. The numerical simulation approximates this thermodynamic quantity by the expectation value in the finite size ground state $\mathrm{Tr}[\hat{\rho}_L \hat{S}]$, where $\hat{\rho}_L$ is the ground state density matrix of $\hat{H}_L$. Our result relies on an assumption that the interpolated Hamiltonain $\hat{H}(\lambda)\equiv \hat{H}-\lambda\Delta\hat{H}$ is non-degenerate for all $0\leq \lambda\leq 1$ and has a uniform spectral gap $\min_{0\leq \lambda\leq 1}\Delta(\lambda)=\Delta>0$. When this condition is satisfied, then analogously to Eq.~\eqref{eq:dynamicEB} we have~\cite{Suppl}
\begin{equation}\label{eq:initialstateassumption}
	|\mathrm{Tr}[\hat{\rho}\hat{S}_l-\hat{\rho}_L\hat{S}_l]|\leq C e^{-(L-l)/2\xi}.
\end{equation}

\paragraph{Bounds for correlated initial states}
We now generalize our error bound to dynamics initiated from  a class of (possibly mixed) initial states $\hat{\rho}$, for which there exists a sufficiently good finite size approximation $\hat{\rho}_L$ satisfying Eq.~\eqref{eq:initialstateassumption}. This includes non-degenerate gapped ground states~(under the condition described above), but also includes translation invariant matrix product states~(MPS) with a finite bond dimension~\footnote{For translation invariant MPS with a finite bond dimension, $\hat{\rho}_L$ can be taken as the $L$-site periodic version of $\hat{\rho}$. That $\hat{\rho}_L$ satisfies the condition in Eq.~\eqref{eq:initialstateassumption} can be proved using the transfer operator method which is used to prove that MPS has finite correlation length, see, e.g. Refs.~\cite{schollwoeck:density_2011,Perez2007matrix,orus:practical_2014}. The parameter $2\xi$ can simply be taken as the correlation length of the MPS.}, and finite temperature thermal states~\cite{kliesch2014locality} $\hat{\rho}=e^{-\beta \hat{H}}/\mathrm{Tr} [e^{-\beta \hat{H}}]$ above a certain temperature, where
$\hat{\rho}_L = e^{-\beta \hat{H}_L}/\mathrm{Tr} [e^{-\beta \hat{H}_L}]$, i.e. the thermal state of $ \hat{H}_L$. %

Given that we have an initial state $\hat{\rho}_L$ satisfying Eq.~\eqref{eq:initialstateassumption}, we can bound the dynamics FSE as
\begin{eqnarray}\label{eq:deltaS_correlatedrho}
	\delta \langle \hat{S}(t)\rangle_{\rho} &=&|\langle  \hat{S}_L(t) \rangle_{\rho_L}-\langle \hat{S}(t) \rangle_{\rho}|\\
	&\leq& |\mathrm{Tr}[(\hat{\rho}_L-\hat{\rho}_{[L]})\hat{S}_L(t)]|+|\langle  \hat{S}_L(t)- \hat{S}(t) \rangle_{\rho}|\nonumber,
\end{eqnarray}
where $\hat{S}_L(t)=e^{i\hat{H}_L t} \hat{S} e^{-i\hat{H}_L t}$, $\hat{\rho}_{[L]}$ is the reduced density matrix of $\hat{\rho}$ on the finite cluster, and in the second line we used the triangle inequality.
The second term can be bounded using the same method as in Eq.~\eqref{eq:deltadef}, since $\mathrm{Tr}[\hat{\rho}\hat{A}]\leq \|\hat{A}\|$ for any density matrix $\hat{\rho}$. To bound the first term, we insert the expansion $\hat{S}_L(t)=\hat{S}_0(t)+\sum^L_{l=1}[\hat{S}_l(t)-\hat{S}_{l-1}(t)]$ into Eq.~\eqref{eq:deltaS_correlatedrho},
and notice that  $\hat{S}_l(t)-\hat{S}_{l-1}(t)$ is an operator acting on $X_l$, whose norm is bounded by Eqs.~(\ref{eq:deltaS0tintro},\ref{eq:factorialLR}) to be $\|\hat{S}_l(t)-\hat{S}_{l-1}(t)\|\leq C(2vt/l)^{\eta l/2}$ for some constant $C$. For initial states satisfying Eq.~\eqref{eq:initialstateassumption}, this implies
\begin{eqnarray}\label{eq:deltaS_correlatedrho_res}
	\delta \langle \hat{S}(t)\rangle_{\rho} \leq C_1 e^{(vt-L/2)/\xi}+C_2 e^{\eta(vt-L/2)},
\end{eqnarray}
where $C_1$ and $C_2$ are model-dependent constants that can be explicitly determined.%

\paragraph{Example: 1D TFIM}
We test our dynamics error bounds in simulations of prototypical models for quantum many-body physics, starting with
the TFIM,
\begin{equation}\label{eq:HIsing}
\hat{H}=-J\sum_{j}\hat{\sigma}^z_j\hat{\sigma}^z_{j+1}-h\sum_{j}\hat{\sigma}^x_j.
\end{equation}
This is a canonical model for quantum phase transitions~\cite{vojta2003quantum,sachdev2007quantum}, and occurs in materials like CoNb$_2$O$_6$~\cite{coldea2010quantum}, 
 cold atom~\cite{simon2011quantum,labuhn2016tunable,guardado2018probing} and trapped ion~\cite{friedenauer2008simulating,kim2010quantum,kim2011quantum,lanyon2011universal,britton2012engineered} experiments, and superconducting circuits~\cite{barends2016digitized,harris2018phase}. 
We numerically study the  dynamics of this model at the critical point $J=h$ for several $L$,  and calculate the exact evolution for $L=\infty$. Specifically, we study the dynamics of $\langle \hat{\sigma}^x(t)\rangle$ starting from  $|\psi(0)\rangle=|\rightarrow\rightarrow\ldots\rightarrow\rangle$. Analogous  dynamics in the 2D TFIM has been explored in Rydberg atom experiments~\cite{guardado-sanchez:probing_2018,lienhard:observing_2018}.
 
Fig.~\ref{fig:mainfig}a shows $\langle\hat{\sigma}^x_j(t)\rangle$    from $L=5$ to $21$ using PBC, along with the exact $L=\infty$ solution~\cite{dziarmaga2005dynamics}. 
To obtain a FSE bound for $\langle\hat{\sigma}^x_j(t)\rangle$, we use the LR bound given in 
Eq.~\eqref{eq:TFIMcommutator_bound} of Ref.~\cite{Suppl}~[obtained from the general bound Eq.~\eqref{eq:Lucas_bound_dbcommu}],
which, after inserting into Eq.~\eqref{eq:deltaS0tintro}, yields
\begin{equation}\label{eq:errIsing}
 |\delta \langle\hat{\sigma}^x_j(t)\rangle|\leq 4\sqrt{\frac{J}{h}}\frac{(2\sqrt{Jh}t)^{2L-1}}{(2L-1)!}+\frac{J}{h}\frac{(4\sqrt{Jh}t)^{2L-2}}{(2L-2)!}.
\end{equation}
As Fig.~\ref{fig:mainfig}a shows, this error bound provides a guarantee of the numerical calculations' accuracy out to interesting and useful timescales. For the $L=21$-site calculation, the bound guarantees that the results are accurate~(within $10^{-2}$) up to times $Jt\sim 3.5$, where the observable has nearly reached equilibrium. Furthermore, this time 
is in reasonable accord with the true time at which FSE becomes important~(within 20\%).

We emphasize that the FSE  bound never made use of the TFIM's exact solution. The bound Eq.~\eqref{eq:deltaS0tintro} can be applied to any system including in dimensions greater than one. 
As we will now demonstrate in the 1D FHM, the bound still provides a useful guarantee of the finite-size results when no exact solution is available.

\begin{figure}
	\includegraphics[width=\columnwidth]{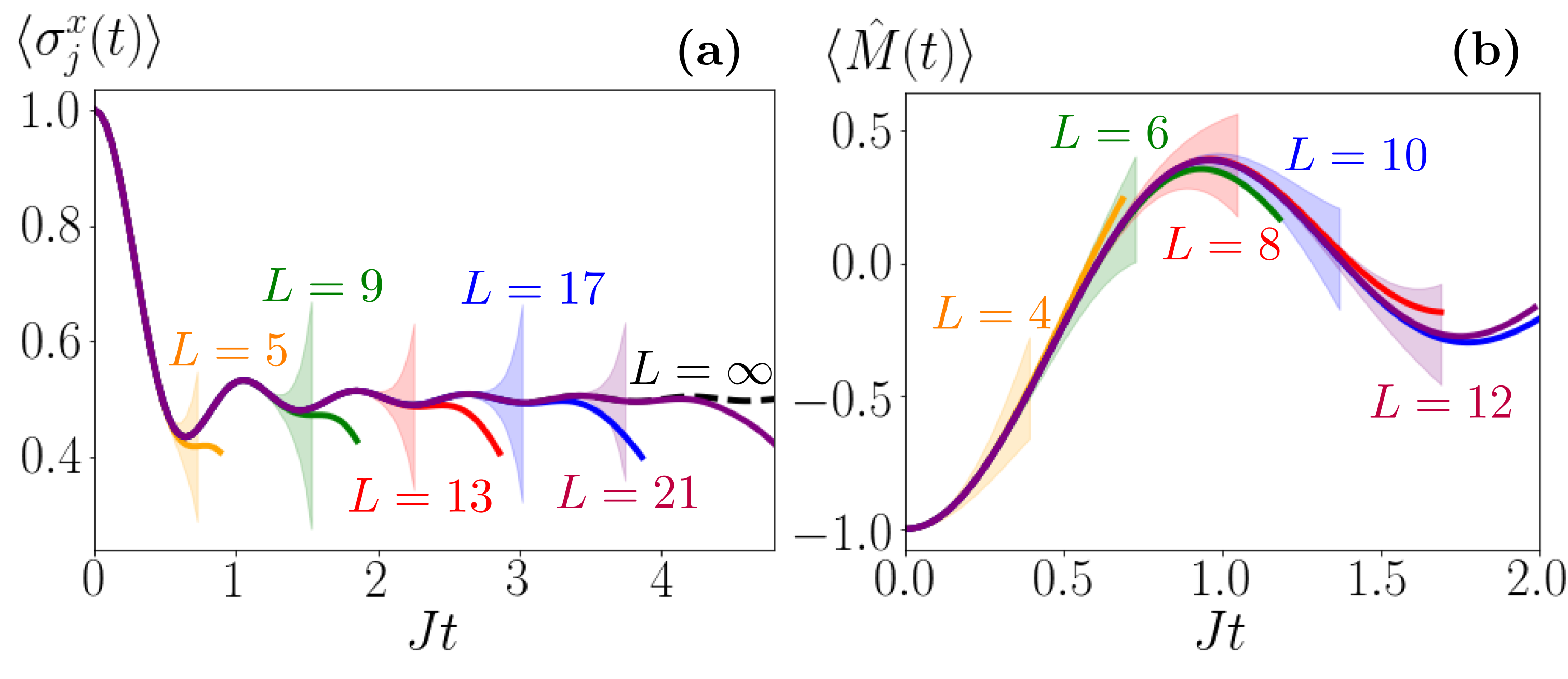}
	\caption{\label{fig:mainfig}Numerically exact evolution for (a) $\langle\hat{\sigma}^x_j(t)\rangle$ in the $L$-site PBC TFIM at $J=h$, with $|\psi(0)\rangle=|\rightarrow\rightarrow\ldots\rightarrow\rangle$, and (b) $\langle\hat{M}(t)\rangle$ in the $L$-site OBC FHM at $U=0.5 J$, with $|\psi(0)\rangle=|202020\ldots20\rangle$. The dashed curve in TFIM is the exact $L\to\infty$ result. The shaded areas for each curve represent the region in which the actual $L\to\infty$ values must lie according to the FSE bounds in Eq.~\eqref{eq:errIsing} and Eqs.~(\ref{eq:deltaS0tintro}, \ref{eq:Lucas_bound_commu}), respectively.}
\end{figure}

\paragraph{Example: FHM}
The 1D FHM describes  spin-1/2 fermions in a lattice whose Hamiltonian is
\begin{equation}\label{eq:FHH}  
  \hat{H}=-J\sum_{\langle ij\rangle,\sigma=\uparrow,\downarrow}( \hat{a}^\dagger_{i\sigma}\hat{a}^{\phantom{\dagger}}_{j\sigma}+\mathrm{H.c.})+U\sum_i \hat{n}^\uparrow_{i} \hat{n}^\downarrow_{i}.
\end{equation}
The FHM exhibits rich behavior, such as a metal-Mott insulator transition, and potentially  high-temperature superconductivity. It is a reasonable approximation of some real materials, such as FeO, NiO, CoO~\cite{anisimov1991band}, and has been realized in ultracold atoms~\cite{esslinger2010fermi,parsons2016site,boll2016spin,mazurenko2017cold,brown2019bad}. 

We numerically study the relaxation dynamics of a charge density wave state $|\psi(0)\rangle=|202020\ldots20\rangle$, analogous to previous theory~\cite{FHnonequilibrium} and experiments~\cite{pertot2014relaxation}, where $2$ (0) means a doubly occupied (empty) site.  We run the finite-size simulations in OBC, 
and measure the density imbalance  
$\hat{M}(t)=[\hat{N}_{\mathrm{even}}(t)-\hat{N}_{\mathrm{odd}}(t)]/L$~\cite{FHnonequilibrium}. 
To get a FSE bound for $\hat{M}(t)$, we use the currently tightest LR bound, obtained by numerically summing the series in Eq.~\eqref{eq:Lucas_bound_commu} of the SM~\cite{Suppl}
and inserting the result into Eq.~\eqref{eq:deltaS0tintro}.
Fig.~\ref{fig:mainfig}b  shows the results.

Our error bound can be compared to estimates of FSE obtained from comparing calculations of different sizes. For example, one can take the difference between the $L=10$ and $L=12$ as a rough estimate of the FSE of the $L=12$ calculation.  Our bound is comparable in its guaranteed timescale of convergence to this conventional estimate. For example, we can guarantee that the FSE in $\langle \hat{M}(t)\rangle$ of the $L=12$ result is less than $1\%$ for $Jt\leq 1.2$, comparable to the time $Jt\sim 1.6$ where the $L=10$ and $12$ results differ noticeably. 

\paragraph{Conclusions}
We have presented a rigorous upper bound on the FSE of local observables measured in numerical simulations of quantum dynamics starting from a large class of initial states. For product initial states, the bounds show an advantage of using PBC at early times. We also presented a generalization to simulations of local observables in non-degenerate gapped ground states. 
In all the cases we considered, the bounds decay exponentially in system size, and guarantee the accuracy of finite size dynamics simulation up to a time scale $t_c\sim L/2v$. These insights into FSE can motivate better algorithms.

The quantitative utility of the bounds is demonstrated in the 1D TFIM and 1D FHM. In both cases, the error bounds are extremely small up to timescales where there is interesting physics and even equilibration, and they are reasonably tight compared to the actual FSEs. We expect these bounds to provide useful tools to researchers going forward, providing FSE bounds on numerical calculations and suggesting new numerical methods that minimize this error. 

\acknowledgements
We thank Miles Stoudenmire, Miroslav Hopjan, Bhuvanesh Sundar and Ian White for discussions, and Brian Neyenhuis for a careful reading of the manuscript. This work was supported in part by the Welch Foundation~(C-1872), the National Science Foundation~(PHY-1848304), and the Office of Naval Research~(N00014-20-1-2695).

\bibliography{finite-size-error}

\begin{thebibliography}{99}%
\makeatletter
\providecommand \@ifxundefined [1]{%
 \@ifx{#1\undefined}
}%
\providecommand \@ifnum [1]{%
 \ifnum #1\expandafter \@firstoftwo
 \else \expandafter \@secondoftwo
 \fi
}%
\providecommand \@ifx [1]{%
 \ifx #1\expandafter \@firstoftwo
 \else \expandafter \@secondoftwo
 \fi
}%
\providecommand \natexlab [1]{#1}%
\providecommand \enquote  [1]{``#1''}%
\providecommand \bibnamefont  [1]{#1}%
\providecommand \bibfnamefont [1]{#1}%
\providecommand \citenamefont [1]{#1}%
\providecommand \href@noop [0]{\@secondoftwo}%
\providecommand \href [0]{\begingroup \@sanitize@url \@href}%
\providecommand \@href[1]{\@@startlink{#1}\@@href}%
\providecommand \@@href[1]{\endgroup#1\@@endlink}%
\providecommand \@sanitize@url [0]{\catcode `\\12\catcode `\$12\catcode
  `\&12\catcode `\#12\catcode `\^12\catcode `\_12\catcode `\%12\relax}%
\providecommand \@@startlink[1]{}%
\providecommand \@@endlink[0]{}%
\providecommand \url  [0]{\begingroup\@sanitize@url \@url }%
\providecommand \@url [1]{\endgroup\@href {#1}{\urlprefix }}%
\providecommand \urlprefix  [0]{URL }%
\providecommand \Eprint [0]{\href }%
\providecommand \doibase [0]{http://dx.doi.org/}%
\providecommand \selectlanguage [0]{\@gobble}%
\providecommand \bibinfo  [0]{\@secondoftwo}%
\providecommand \bibfield  [0]{\@secondoftwo}%
\providecommand \translation [1]{[#1]}%
\providecommand \BibitemOpen [0]{}%
\providecommand \bibitemStop [0]{}%
\providecommand \bibitemNoStop [0]{.\EOS\space}%
\providecommand \EOS [0]{\spacefactor3000\relax}%
\providecommand \BibitemShut  [1]{\csname bibitem#1\endcsname}%
\let\auto@bib@innerbib\@empty
\bibitem [{\citenamefont {Laflorencie}\ and\ \citenamefont
  {Poiblanc}(2004)}]{laflorencie:simulations_2004}%
  \BibitemOpen
  \bibfield  {author} {\bibinfo {author} {\bibfnamefont {N.}~\bibnamefont
  {Laflorencie}}\ and\ \bibinfo {author} {\bibfnamefont {D.}~\bibnamefont
  {Poiblanc}},\ }\enquote {\bibinfo {title} {Quantum magnetism. lecture notes
  in physics},}\ \ (\bibinfo  {publisher} {Springer-Verlag},\ \bibinfo {year}
  {2004})\ Chap.\ \bibinfo {chapter} {Simulations of pure and doped
  low-dimensional spin-1/2 gapped systems}\BibitemShut {NoStop}%
\bibitem [{\citenamefont {Noack}\ and\ \citenamefont
  {Manmana}(2005)}]{noack:diagonalization_2005}%
  \BibitemOpen
  \bibfield  {author} {\bibinfo {author} {\bibfnamefont {R.~M.}\ \bibnamefont
  {Noack}}\ and\ \bibinfo {author} {\bibfnamefont {S.~R.}\ \bibnamefont
  {Manmana}},\ }\href {\doibase 10.1063/1.2080349} {\bibfield  {journal}
  {\bibinfo  {journal} {AIP Conf. Proc.}\ }\textbf {\bibinfo {volume} {789}},\
  \bibinfo {pages} {93} (\bibinfo {year} {2005})}\BibitemShut {NoStop}%
\bibitem [{\citenamefont
  {Sandvik}(2010{\natexlab{a}})}]{sandvik:computational_2010}%
  \BibitemOpen
  \bibfield  {author} {\bibinfo {author} {\bibfnamefont {A.~W.}\ \bibnamefont
  {Sandvik}},\ }\href {\doibase 10.1063/1.3518900} {\bibfield  {journal}
  {\bibinfo  {journal} {AIP Conf. Proc.}\ }\textbf {\bibinfo {volume} {1297}},\
  \bibinfo {pages} {135} (\bibinfo {year} {2010}{\natexlab{a}})}\BibitemShut
  {NoStop}%
\bibitem [{\citenamefont {L{\"a}uchli}(2011)}]{laeuchli:introduction_2011}%
  \BibitemOpen
  \bibfield  {author} {\bibinfo {author} {\bibfnamefont {A.}~\bibnamefont
  {L{\"a}uchli}},\ }\enquote {\bibinfo {title} {Introduction to frustrated
  magnetism: Materials, experiments, theory},}\ \ (\bibinfo  {publisher}
  {Springer},\ \bibinfo {year} {2011})\ Chap.\ \bibinfo {chapter} {Numerical
  Simulations of Frustrated Systems}, pp.\ \bibinfo {pages} {481 --
  511}\BibitemShut {NoStop}%
\bibitem [{Note1()}]{Note1}%
  \BibitemOpen
  \bibinfo {note} {Reaching even these system sizes is possible only if
  internal, translation, and point group symmetries are utilized, and if
  state-of-the-art algorithms and large-scale computational resources are
  employed. Researchers usually employ much smaller systems for computational
  convenience.}\BibitemShut {Stop}%
\bibitem [{\citenamefont {White}(1992)}]{white:density_1992}%
  \BibitemOpen
  \bibfield  {author} {\bibinfo {author} {\bibfnamefont {S.~R.}\ \bibnamefont
  {White}},\ }\href {\doibase 10.1103/PhysRevLett.69.2863} {\bibfield
  {journal} {\bibinfo  {journal} {Phys. Rev. Lett.}\ }\textbf {\bibinfo
  {volume} {69}},\ \bibinfo {pages} {2863} (\bibinfo {year}
  {1992})}\BibitemShut {NoStop}%
\bibitem [{\citenamefont {Hallberg}(2006)}]{hallberg2006new}%
  \BibitemOpen
  \bibfield  {author} {\bibinfo {author} {\bibfnamefont {K.~A.}\ \bibnamefont
  {Hallberg}},\ }\href@noop {} {\bibfield  {journal} {\bibinfo  {journal} {Adv.
  Phys.}\ }\textbf {\bibinfo {volume} {55}},\ \bibinfo {pages} {477} (\bibinfo
  {year} {2006})}\BibitemShut {NoStop}%
\bibitem [{\citenamefont {Schollw{\"o}ck}(2011)}]{schollwoeck:density_2011}%
  \BibitemOpen
  \bibfield  {author} {\bibinfo {author} {\bibfnamefont {U.}~\bibnamefont
  {Schollw{\"o}ck}},\ }\href {\doibase
  https://doi.org/10.1016/j.aop.2010.09.012} {\bibfield  {journal} {\bibinfo
  {journal} {Ann. Phys.}\ }\textbf {\bibinfo {volume} {326}},\ \bibinfo {pages}
  {96 } (\bibinfo {year} {2011})}\BibitemShut {NoStop}%
\bibitem [{\citenamefont {Stoudenmire}\ and\ \citenamefont
  {White}(2012)}]{stoudenmire2012studying}%
  \BibitemOpen
  \bibfield  {author} {\bibinfo {author} {\bibfnamefont {E.~M.}\ \bibnamefont
  {Stoudenmire}}\ and\ \bibinfo {author} {\bibfnamefont {S.~R.}\ \bibnamefont
  {White}},\ }\href@noop {} {\bibfield  {journal} {\bibinfo  {journal} {Annu.
  Rev. Condens. Matter Phys.}\ }\textbf {\bibinfo {volume} {3}},\ \bibinfo
  {pages} {111} (\bibinfo {year} {2012})}\BibitemShut {NoStop}%
\bibitem [{\citenamefont {Perez-Garcia}\ \emph {et~al.}(2007)\citenamefont
  {Perez-Garcia}, \citenamefont {Verstraete}, \citenamefont {Wolf},\ and\
  \citenamefont {Cirac}}]{Perez2007matrix}%
  \BibitemOpen
  \bibfield  {author} {\bibinfo {author} {\bibfnamefont {D.}~\bibnamefont
  {Perez-Garcia}}, \bibinfo {author} {\bibfnamefont {F.}~\bibnamefont
  {Verstraete}}, \bibinfo {author} {\bibfnamefont {M.~M.}\ \bibnamefont
  {Wolf}}, \ and\ \bibinfo {author} {\bibfnamefont {J.~I.}\ \bibnamefont
  {Cirac}},\ }\href@noop {} {\bibfield  {journal} {\bibinfo  {journal} {Quantum
  Info. Comput.}\ }\textbf {\bibinfo {volume} {7}},\ \bibinfo {pages}
  {401–430} (\bibinfo {year} {2007})}\BibitemShut {NoStop}%
\bibitem [{\citenamefont {Orús}(2014)}]{orus:practical_2014}%
  \BibitemOpen
  \bibfield  {author} {\bibinfo {author} {\bibfnamefont {R.}~\bibnamefont
  {Orús}},\ }\href {\doibase https://doi.org/10.1016/j.aop.2014.06.013}
  {\bibfield  {journal} {\bibinfo  {journal} {Ann. Phys.}\ }\textbf {\bibinfo
  {volume} {349}},\ \bibinfo {pages} {117 } (\bibinfo {year}
  {2014})}\BibitemShut {NoStop}%
\bibitem [{\citenamefont {Bartsch}\ and\ \citenamefont
  {Gemmer}(2009)}]{bartsch2009dynamical}%
  \BibitemOpen
  \bibfield  {author} {\bibinfo {author} {\bibfnamefont {C.}~\bibnamefont
  {Bartsch}}\ and\ \bibinfo {author} {\bibfnamefont {J.}~\bibnamefont
  {Gemmer}},\ }\href@noop {} {\bibfield  {journal} {\bibinfo  {journal} {Phys.
  Rev. Lett.}\ }\textbf {\bibinfo {volume} {102}},\ \bibinfo {pages} {110403}
  (\bibinfo {year} {2009})}\BibitemShut {NoStop}%
\bibitem [{\citenamefont {Elsayed}\ and\ \citenamefont
  {Fine}(2013)}]{elsayed2013regression}%
  \BibitemOpen
  \bibfield  {author} {\bibinfo {author} {\bibfnamefont {T.~A.}\ \bibnamefont
  {Elsayed}}\ and\ \bibinfo {author} {\bibfnamefont {B.~V.}\ \bibnamefont
  {Fine}},\ }\href@noop {} {\bibfield  {journal} {\bibinfo  {journal} {Phys.
  Rev. Lett.}\ }\textbf {\bibinfo {volume} {110}},\ \bibinfo {pages} {070404}
  (\bibinfo {year} {2013})}\BibitemShut {NoStop}%
\bibitem [{\citenamefont {Steinigeweg}\ \emph
  {et~al.}(2014{\natexlab{a}})\citenamefont {Steinigeweg}, \citenamefont
  {Khodja}, \citenamefont {Niemeyer}, \citenamefont {Gogolin},\ and\
  \citenamefont {Gemmer}}]{steinigeweg2014pushing}%
  \BibitemOpen
  \bibfield  {author} {\bibinfo {author} {\bibfnamefont {R.}~\bibnamefont
  {Steinigeweg}}, \bibinfo {author} {\bibfnamefont {A.}~\bibnamefont {Khodja}},
  \bibinfo {author} {\bibfnamefont {H.}~\bibnamefont {Niemeyer}}, \bibinfo
  {author} {\bibfnamefont {C.}~\bibnamefont {Gogolin}}, \ and\ \bibinfo
  {author} {\bibfnamefont {J.}~\bibnamefont {Gemmer}},\ }\href@noop {}
  {\bibfield  {journal} {\bibinfo  {journal} {Phys. Rev. Lett.}\ }\textbf
  {\bibinfo {volume} {112}},\ \bibinfo {pages} {130403} (\bibinfo {year}
  {2014}{\natexlab{a}})}\BibitemShut {NoStop}%
\bibitem [{\citenamefont {Steinigeweg}\ \emph
  {et~al.}(2014{\natexlab{b}})\citenamefont {Steinigeweg}, \citenamefont
  {Gemmer},\ and\ \citenamefont {Brenig}}]{steinigeweg2014spin}%
  \BibitemOpen
  \bibfield  {author} {\bibinfo {author} {\bibfnamefont {R.}~\bibnamefont
  {Steinigeweg}}, \bibinfo {author} {\bibfnamefont {J.}~\bibnamefont {Gemmer}},
  \ and\ \bibinfo {author} {\bibfnamefont {W.}~\bibnamefont {Brenig}},\
  }\href@noop {} {\bibfield  {journal} {\bibinfo  {journal} {Phys. Rev. Lett.}\
  }\textbf {\bibinfo {volume} {112}},\ \bibinfo {pages} {120601} (\bibinfo
  {year} {2014}{\natexlab{b}})}\BibitemShut {NoStop}%
\bibitem [{\citenamefont {Steinigeweg}\ \emph
  {et~al.}(2014{\natexlab{c}})\citenamefont {Steinigeweg}, \citenamefont
  {Heidrich-Meisner}, \citenamefont {Gemmer}, \citenamefont {Michielsen},\ and\
  \citenamefont {De~Raedt}}]{steinigeweg2014scaling}%
  \BibitemOpen
  \bibfield  {author} {\bibinfo {author} {\bibfnamefont {R.}~\bibnamefont
  {Steinigeweg}}, \bibinfo {author} {\bibfnamefont {F.}~\bibnamefont
  {Heidrich-Meisner}}, \bibinfo {author} {\bibfnamefont {J.}~\bibnamefont
  {Gemmer}}, \bibinfo {author} {\bibfnamefont {K.}~\bibnamefont {Michielsen}},
  \ and\ \bibinfo {author} {\bibfnamefont {H.}~\bibnamefont {De~Raedt}},\
  }\href@noop {} {\bibfield  {journal} {\bibinfo  {journal} {Phys. Rev. B}\
  }\textbf {\bibinfo {volume} {90}},\ \bibinfo {pages} {094417} (\bibinfo
  {year} {2014}{\natexlab{c}})}\BibitemShut {NoStop}%
\bibitem [{\citenamefont {Nightingale}\ and\ \citenamefont
  {Umrigar}(1999)}]{nightingale:quantum_1999}%
  \BibitemOpen
  \bibinfo {editor} {\bibfnamefont {M.~P.}\ \bibnamefont {Nightingale}}\ and\
  \bibinfo {editor} {\bibfnamefont {C.~J.}\ \bibnamefont {Umrigar}},\ eds.,\
  \href@noop {} {\emph {\bibinfo {title} {Quantum Monte Carlo Methods in
  Physics and Chemistry}}}\ (\bibinfo  {publisher} {Springer},\ \bibinfo {year}
  {1999})\BibitemShut {NoStop}%
\bibitem [{\citenamefont {Childs}\ \emph {et~al.}(2018)\citenamefont {Childs},
  \citenamefont {Maslov}, \citenamefont {Nam}, \citenamefont {Ross},\ and\
  \citenamefont {Su}}]{childs:toward_2018}%
  \BibitemOpen
  \bibfield  {author} {\bibinfo {author} {\bibfnamefont {A.~M.}\ \bibnamefont
  {Childs}}, \bibinfo {author} {\bibfnamefont {D.}~\bibnamefont {Maslov}},
  \bibinfo {author} {\bibfnamefont {Y.}~\bibnamefont {Nam}}, \bibinfo {author}
  {\bibfnamefont {N.~J.}\ \bibnamefont {Ross}}, \ and\ \bibinfo {author}
  {\bibfnamefont {Y.}~\bibnamefont {Su}},\ }\href {\doibase
  10.1073/pnas.1801723115} {\bibfield  {journal} {\bibinfo  {journal} {Proc.
  Natl. Acad. Sci.}\ }\textbf {\bibinfo {volume} {115}},\ \bibinfo {pages}
  {9456} (\bibinfo {year} {2018})}\BibitemShut {NoStop}%
\bibitem [{\citenamefont {Bloch}\ \emph {et~al.}(2012)\citenamefont {Bloch},
  \citenamefont {Dalibard},\ and\ \citenamefont
  {Nascimbène}}]{bloch:quantum_2012}%
  \BibitemOpen
  \bibfield  {author} {\bibinfo {author} {\bibfnamefont {I.}~\bibnamefont
  {Bloch}}, \bibinfo {author} {\bibfnamefont {J.}~\bibnamefont {Dalibard}}, \
  and\ \bibinfo {author} {\bibfnamefont {S.}~\bibnamefont {Nascimbène}},\
  }\href@noop {} {\bibfield  {journal} {\bibinfo  {journal} {Nat. Phys.}\
  }\textbf {\bibinfo {volume} {8}},\ \bibinfo {pages} {267} (\bibinfo {year}
  {2012})}\BibitemShut {NoStop}%
\bibitem [{\citenamefont {Blatt}\ and\ \citenamefont
  {Roos}(2012)}]{blatt:quantum_2012}%
  \BibitemOpen
  \bibfield  {author} {\bibinfo {author} {\bibfnamefont {R.}~\bibnamefont
  {Blatt}}\ and\ \bibinfo {author} {\bibfnamefont {C.~F.}\ \bibnamefont
  {Roos}},\ }\href@noop {} {\bibfield  {journal} {\bibinfo  {journal} {Nat.
  Phys.}\ }\textbf {\bibinfo {volume} {8}},\ \bibinfo {pages} {277} (\bibinfo
  {year} {2012})}\BibitemShut {NoStop}%
\bibitem [{\citenamefont {Altman}\ \emph {et~al.}(2019)\citenamefont {Altman},
  \citenamefont {Brown}, \citenamefont {Carleo}, \citenamefont {Carr},
  \citenamefont {Demler}, \citenamefont {Chin}, \citenamefont {DeMarco},
  \citenamefont {Economou}, \citenamefont {Eriksson}, \citenamefont {Fu} \emph
  {et~al.}}]{altman2019quantum}%
  \BibitemOpen
  \bibfield  {author} {\bibinfo {author} {\bibfnamefont {E.}~\bibnamefont
  {Altman}}, \bibinfo {author} {\bibfnamefont {K.~R.}\ \bibnamefont {Brown}},
  \bibinfo {author} {\bibfnamefont {G.}~\bibnamefont {Carleo}}, \bibinfo
  {author} {\bibfnamefont {L.~D.}\ \bibnamefont {Carr}}, \bibinfo {author}
  {\bibfnamefont {E.}~\bibnamefont {Demler}}, \bibinfo {author} {\bibfnamefont
  {C.}~\bibnamefont {Chin}}, \bibinfo {author} {\bibfnamefont {B.}~\bibnamefont
  {DeMarco}}, \bibinfo {author} {\bibfnamefont {S.~E.}\ \bibnamefont
  {Economou}}, \bibinfo {author} {\bibfnamefont {M.~A.}\ \bibnamefont
  {Eriksson}}, \bibinfo {author} {\bibfnamefont {K.-M.~C.}\ \bibnamefont {Fu}},
   \emph {et~al.},\ }\href@noop {} {\bibfield  {journal} {\bibinfo  {journal}
  {arXiv:1912.06938}\ } (\bibinfo {year} {2019})}\BibitemShut {NoStop}%
\bibitem [{\citenamefont
  {Sandvik}(2010{\natexlab{b}})}]{sandvik2010computational}%
  \BibitemOpen
  \bibfield  {author} {\bibinfo {author} {\bibfnamefont {A.~W.}\ \bibnamefont
  {Sandvik}},\ }in\ \href@noop {} {\emph {\bibinfo {booktitle} {AIP Conference
  Proceedings}}},\ Vol.\ \bibinfo {volume} {1297}\ (\bibinfo {organization}
  {American Institute of Physics},\ \bibinfo {year} {2010})\ pp.\ \bibinfo
  {pages} {135--338}\BibitemShut {NoStop}%
\bibitem [{\citenamefont {Bausch}\ \emph {et~al.}(2018)\citenamefont {Bausch},
  \citenamefont {Cubitt}, \citenamefont {Lucia}, \citenamefont {Perez-Garcia},\
  and\ \citenamefont {Wolf}}]{bausch2018size}%
  \BibitemOpen
  \bibfield  {author} {\bibinfo {author} {\bibfnamefont {J.}~\bibnamefont
  {Bausch}}, \bibinfo {author} {\bibfnamefont {T.~S.}\ \bibnamefont {Cubitt}},
  \bibinfo {author} {\bibfnamefont {A.}~\bibnamefont {Lucia}}, \bibinfo
  {author} {\bibfnamefont {D.}~\bibnamefont {Perez-Garcia}}, \ and\ \bibinfo
  {author} {\bibfnamefont {M.~M.}\ \bibnamefont {Wolf}},\ }\href@noop {}
  {\bibfield  {journal} {\bibinfo  {journal} {Proc. Natl. Acad. Sci. U.S.A}\
  }\textbf {\bibinfo {volume} {115}},\ \bibinfo {pages} {19} (\bibinfo {year}
  {2018})}\BibitemShut {NoStop}%
\bibitem [{\citenamefont {Zeiher}\ \emph {et~al.}(2016)\citenamefont {Zeiher},
  \citenamefont {van Bijnen}, \citenamefont {Schau\ss{}}, \citenamefont {Hild},
  \citenamefont {Choi}, \citenamefont {Pohl}, \citenamefont {Bloch},\ and\
  \citenamefont {Gross}}]{zeiher:many_2016}%
  \BibitemOpen
  \bibfield  {author} {\bibinfo {author} {\bibfnamefont {J.}~\bibnamefont
  {Zeiher}}, \bibinfo {author} {\bibfnamefont {R.}~\bibnamefont {van Bijnen}},
  \bibinfo {author} {\bibfnamefont {P.}~\bibnamefont {Schau\ss{}}}, \bibinfo
  {author} {\bibfnamefont {S.}~\bibnamefont {Hild}}, \bibinfo {author}
  {\bibfnamefont {J.-Y.}\ \bibnamefont {Choi}}, \bibinfo {author}
  {\bibfnamefont {T.}~\bibnamefont {Pohl}}, \bibinfo {author} {\bibfnamefont
  {I.}~\bibnamefont {Bloch}}, \ and\ \bibinfo {author} {\bibfnamefont
  {C.}~\bibnamefont {Gross}},\ }\href@noop {} {\bibfield  {journal} {\bibinfo
  {journal} {Nat. Phys.}\ }\textbf {\bibinfo {volume} {12}},\ \bibinfo {pages}
  {1095} (\bibinfo {year} {2016})}\BibitemShut {NoStop}%
\bibitem [{\citenamefont {Takei}\ \emph {et~al.}(2016)\citenamefont {Takei},
  \citenamefont {Sommer}, \citenamefont {Genes}, \citenamefont {Pupillo},
  \citenamefont {Goto}, \citenamefont {Koyasu}, \citenamefont {Chiba},
  \citenamefont {Weidemüller},\ and\ \citenamefont
  {Ohmori}}]{takei:direct_2016}%
  \BibitemOpen
  \bibfield  {author} {\bibinfo {author} {\bibfnamefont {N.}~\bibnamefont
  {Takei}}, \bibinfo {author} {\bibfnamefont {C.}~\bibnamefont {Sommer}},
  \bibinfo {author} {\bibfnamefont {C.}~\bibnamefont {Genes}}, \bibinfo
  {author} {\bibfnamefont {G.}~\bibnamefont {Pupillo}}, \bibinfo {author}
  {\bibfnamefont {H.}~\bibnamefont {Goto}}, \bibinfo {author} {\bibfnamefont
  {K.}~\bibnamefont {Koyasu}}, \bibinfo {author} {\bibfnamefont
  {H.}~\bibnamefont {Chiba}}, \bibinfo {author} {\bibfnamefont
  {M.}~\bibnamefont {Weidemüller}}, \ and\ \bibinfo {author} {\bibfnamefont
  {K.}~\bibnamefont {Ohmori}},\ }\href {https://doi.org/10.1038/ncomms13449}
  {\bibfield  {journal} {\bibinfo  {journal} {Nature Communications}\ }\textbf
  {\bibinfo {volume} {7}},\ \bibinfo {pages} {13449} (\bibinfo {year}
  {2016})}\BibitemShut {NoStop}%
\bibitem [{\citenamefont {Bernien}\ \emph {et~al.}(2017)\citenamefont
  {Bernien}, \citenamefont {Schwartz}, \citenamefont {Keesling}, \citenamefont
  {Levine}, \citenamefont {Omran}, \citenamefont {Pichler}, \citenamefont
  {Choi}, \citenamefont {Zibrov}, \citenamefont {Endres}, \citenamefont
  {Greiner} \emph {et~al.}}]{bernien2017probing}%
  \BibitemOpen
  \bibfield  {author} {\bibinfo {author} {\bibfnamefont {H.}~\bibnamefont
  {Bernien}}, \bibinfo {author} {\bibfnamefont {S.}~\bibnamefont {Schwartz}},
  \bibinfo {author} {\bibfnamefont {A.}~\bibnamefont {Keesling}}, \bibinfo
  {author} {\bibfnamefont {H.}~\bibnamefont {Levine}}, \bibinfo {author}
  {\bibfnamefont {A.}~\bibnamefont {Omran}}, \bibinfo {author} {\bibfnamefont
  {H.}~\bibnamefont {Pichler}}, \bibinfo {author} {\bibfnamefont
  {S.}~\bibnamefont {Choi}}, \bibinfo {author} {\bibfnamefont {A.~S.}\
  \bibnamefont {Zibrov}}, \bibinfo {author} {\bibfnamefont {M.}~\bibnamefont
  {Endres}}, \bibinfo {author} {\bibfnamefont {M.}~\bibnamefont {Greiner}},
  \emph {et~al.},\ }\href@noop {} {\bibfield  {journal} {\bibinfo  {journal}
  {Nature}\ }\textbf {\bibinfo {volume} {551}},\ \bibinfo {pages} {579}
  (\bibinfo {year} {2017})}\BibitemShut {NoStop}%
\bibitem [{\citenamefont {Guardado-Sanchez}\ \emph
  {et~al.}(2018{\natexlab{a}})\citenamefont {Guardado-Sanchez}, \citenamefont
  {Brown}, \citenamefont {Mitra}, \citenamefont {Devakul}, \citenamefont
  {Huse}, \citenamefont {Schau\ss{}},\ and\ \citenamefont
  {Bakr}}]{guardado-sanchez:probing_2018}%
  \BibitemOpen
  \bibfield  {author} {\bibinfo {author} {\bibfnamefont {E.}~\bibnamefont
  {Guardado-Sanchez}}, \bibinfo {author} {\bibfnamefont {P.~T.}\ \bibnamefont
  {Brown}}, \bibinfo {author} {\bibfnamefont {D.}~\bibnamefont {Mitra}},
  \bibinfo {author} {\bibfnamefont {T.}~\bibnamefont {Devakul}}, \bibinfo
  {author} {\bibfnamefont {D.~A.}\ \bibnamefont {Huse}}, \bibinfo {author}
  {\bibfnamefont {P.}~\bibnamefont {Schau\ss{}}}, \ and\ \bibinfo {author}
  {\bibfnamefont {W.~S.}\ \bibnamefont {Bakr}},\ }\href {\doibase
  10.1103/PhysRevX.8.021069} {\bibfield  {journal} {\bibinfo  {journal} {Phys.
  Rev. X}\ }\textbf {\bibinfo {volume} {8}},\ \bibinfo {pages} {021069}
  (\bibinfo {year} {2018}{\natexlab{a}})}\BibitemShut {NoStop}%
\bibitem [{\citenamefont {Lienhard}\ \emph {et~al.}(2018)\citenamefont
  {Lienhard}, \citenamefont {de~L\'es\'eleuc}, \citenamefont {Barredo},
  \citenamefont {Lahaye}, \citenamefont {Browaeys}, \citenamefont {Schuler},
  \citenamefont {Henry},\ and\ \citenamefont
  {L\"auchli}}]{lienhard:observing_2018}%
  \BibitemOpen
  \bibfield  {author} {\bibinfo {author} {\bibfnamefont {V.}~\bibnamefont
  {Lienhard}}, \bibinfo {author} {\bibfnamefont {S.}~\bibnamefont
  {de~L\'es\'eleuc}}, \bibinfo {author} {\bibfnamefont {D.}~\bibnamefont
  {Barredo}}, \bibinfo {author} {\bibfnamefont {T.}~\bibnamefont {Lahaye}},
  \bibinfo {author} {\bibfnamefont {A.}~\bibnamefont {Browaeys}}, \bibinfo
  {author} {\bibfnamefont {M.}~\bibnamefont {Schuler}}, \bibinfo {author}
  {\bibfnamefont {L.-P.}\ \bibnamefont {Henry}}, \ and\ \bibinfo {author}
  {\bibfnamefont {A.~M.}\ \bibnamefont {L\"auchli}},\ }\href {\doibase
  10.1103/PhysRevX.8.021070} {\bibfield  {journal} {\bibinfo  {journal} {Phys.
  Rev. X}\ }\textbf {\bibinfo {volume} {8}},\ \bibinfo {pages} {021070}
  (\bibinfo {year} {2018})}\BibitemShut {NoStop}%
\bibitem [{\citenamefont {Orioli}\ \emph {et~al.}(2018)\citenamefont {Orioli},
  \citenamefont {Signoles}, \citenamefont {Wildhagen}, \citenamefont
  {G\"unter}, \citenamefont {Berges}, \citenamefont {Whitlock},\ and\
  \citenamefont {Weidem\"uller}}]{orioli:relaxation_2018}%
  \BibitemOpen
  \bibfield  {author} {\bibinfo {author} {\bibfnamefont {A.~P.}\ \bibnamefont
  {Orioli}}, \bibinfo {author} {\bibfnamefont {A.}~\bibnamefont {Signoles}},
  \bibinfo {author} {\bibfnamefont {H.}~\bibnamefont {Wildhagen}}, \bibinfo
  {author} {\bibfnamefont {G.}~\bibnamefont {G\"unter}}, \bibinfo {author}
  {\bibfnamefont {J.}~\bibnamefont {Berges}}, \bibinfo {author} {\bibfnamefont
  {S.}~\bibnamefont {Whitlock}}, \ and\ \bibinfo {author} {\bibfnamefont
  {M.}~\bibnamefont {Weidem\"uller}},\ }\href {\doibase
  10.1103/PhysRevLett.120.063601} {\bibfield  {journal} {\bibinfo  {journal}
  {Phys. Rev. Lett.}\ }\textbf {\bibinfo {volume} {120}},\ \bibinfo {pages}
  {063601} (\bibinfo {year} {2018})}\BibitemShut {NoStop}%
\bibitem [{\citenamefont {Yan}\ \emph {et~al.}(2013)\citenamefont {Yan},
  \citenamefont {Moses}, \citenamefont {Gadway}, \citenamefont {Covey},
  \citenamefont {Hazzard}, \citenamefont {Rey}, \citenamefont {Jin},\ and\
  \citenamefont {Ye}}]{yan:observation_2013}%
  \BibitemOpen
  \bibfield  {author} {\bibinfo {author} {\bibfnamefont {B.}~\bibnamefont
  {Yan}}, \bibinfo {author} {\bibfnamefont {S.~A.}\ \bibnamefont {Moses}},
  \bibinfo {author} {\bibfnamefont {B.}~\bibnamefont {Gadway}}, \bibinfo
  {author} {\bibfnamefont {J.~P.}\ \bibnamefont {Covey}}, \bibinfo {author}
  {\bibfnamefont {K.~R.~A.}\ \bibnamefont {Hazzard}}, \bibinfo {author}
  {\bibfnamefont {A.~M.}\ \bibnamefont {Rey}}, \bibinfo {author} {\bibfnamefont
  {D.~S.}\ \bibnamefont {Jin}}, \ and\ \bibinfo {author} {\bibfnamefont
  {J.}~\bibnamefont {Ye}},\ }\href@noop {} {\bibfield  {journal} {\bibinfo
  {journal} {Nature}\ }\textbf {\bibinfo {volume} {501}},\ \bibinfo {pages}
  {521} (\bibinfo {year} {2013})}\BibitemShut {NoStop}%
\bibitem [{\citenamefont {Hazzard}\ \emph {et~al.}(2014)\citenamefont
  {Hazzard}, \citenamefont {Gadway}, \citenamefont {Foss-Feig}, \citenamefont
  {Yan}, \citenamefont {Moses}, \citenamefont {Covey}, \citenamefont {Yao},
  \citenamefont {Lukin}, \citenamefont {Ye}, \citenamefont {Jin},\ and\
  \citenamefont {Rey}}]{hazzard:many_2014}%
  \BibitemOpen
  \bibfield  {author} {\bibinfo {author} {\bibfnamefont {K.~R.~A.}\
  \bibnamefont {Hazzard}}, \bibinfo {author} {\bibfnamefont {B.}~\bibnamefont
  {Gadway}}, \bibinfo {author} {\bibfnamefont {M.}~\bibnamefont {Foss-Feig}},
  \bibinfo {author} {\bibfnamefont {B.}~\bibnamefont {Yan}}, \bibinfo {author}
  {\bibfnamefont {S.~A.}\ \bibnamefont {Moses}}, \bibinfo {author}
  {\bibfnamefont {J.~P.}\ \bibnamefont {Covey}}, \bibinfo {author}
  {\bibfnamefont {N.~Y.}\ \bibnamefont {Yao}}, \bibinfo {author} {\bibfnamefont
  {M.~D.}\ \bibnamefont {Lukin}}, \bibinfo {author} {\bibfnamefont
  {J.}~\bibnamefont {Ye}}, \bibinfo {author} {\bibfnamefont {D.~S.}\
  \bibnamefont {Jin}}, \ and\ \bibinfo {author} {\bibfnamefont {A.~M.}\
  \bibnamefont {Rey}},\ }\href {\doibase 10.1103/PhysRevLett.113.195302}
  {\bibfield  {journal} {\bibinfo  {journal} {Phys. Rev. Lett.}\ }\textbf
  {\bibinfo {volume} {113}},\ \bibinfo {pages} {195302} (\bibinfo {year}
  {2014})}\BibitemShut {NoStop}%
\bibitem [{\citenamefont {See\ss{}elberg}\ \emph {et~al.}(2018)\citenamefont
  {See\ss{}elberg}, \citenamefont {Luo}, \citenamefont {Li}, \citenamefont
  {Bause}, \citenamefont {Kotochigova}, \citenamefont {Bloch},\ and\
  \citenamefont {Gohle}}]{seeselberg:extending_2018}%
  \BibitemOpen
  \bibfield  {author} {\bibinfo {author} {\bibfnamefont {F.}~\bibnamefont
  {See\ss{}elberg}}, \bibinfo {author} {\bibfnamefont {X.-Y.}\ \bibnamefont
  {Luo}}, \bibinfo {author} {\bibfnamefont {M.}~\bibnamefont {Li}}, \bibinfo
  {author} {\bibfnamefont {R.}~\bibnamefont {Bause}}, \bibinfo {author}
  {\bibfnamefont {S.}~\bibnamefont {Kotochigova}}, \bibinfo {author}
  {\bibfnamefont {I.}~\bibnamefont {Bloch}}, \ and\ \bibinfo {author}
  {\bibfnamefont {C.}~\bibnamefont {Gohle}},\ }\href {\doibase
  10.1103/PhysRevLett.121.253401} {\bibfield  {journal} {\bibinfo  {journal}
  {Phys. Rev. Lett.}\ }\textbf {\bibinfo {volume} {121}},\ \bibinfo {pages}
  {253401} (\bibinfo {year} {2018})}\BibitemShut {NoStop}%
\bibitem [{\citenamefont {Smale}\ \emph {et~al.}(2019)\citenamefont {Smale},
  \citenamefont {He}, \citenamefont {Olsen}, \citenamefont {Jackson},
  \citenamefont {Sharum}, \citenamefont {Trotzky}, \citenamefont {Marino},
  \citenamefont {Rey},\ and\ \citenamefont
  {Thywissen}}]{smale:observation_2019}%
  \BibitemOpen
  \bibfield  {author} {\bibinfo {author} {\bibfnamefont {S.}~\bibnamefont
  {Smale}}, \bibinfo {author} {\bibfnamefont {P.}~\bibnamefont {He}}, \bibinfo
  {author} {\bibfnamefont {B.~A.}\ \bibnamefont {Olsen}}, \bibinfo {author}
  {\bibfnamefont {K.~G.}\ \bibnamefont {Jackson}}, \bibinfo {author}
  {\bibfnamefont {H.}~\bibnamefont {Sharum}}, \bibinfo {author} {\bibfnamefont
  {S.}~\bibnamefont {Trotzky}}, \bibinfo {author} {\bibfnamefont
  {J.}~\bibnamefont {Marino}}, \bibinfo {author} {\bibfnamefont {A.~M.}\
  \bibnamefont {Rey}}, \ and\ \bibinfo {author} {\bibfnamefont {J.~H.}\
  \bibnamefont {Thywissen}},\ }\href {\doibase 10.1126/sciadv.aax1568}
  {\bibfield  {journal} {\bibinfo  {journal} {Sci. Adv.}\ }\textbf {\bibinfo
  {volume} {5}} (\bibinfo {year} {2019}),\ 10.1126/sciadv.aax1568}\BibitemShut
  {NoStop}%
\bibitem [{\citenamefont {de~Paz}\ \emph {et~al.}(2013)\citenamefont {de~Paz},
  \citenamefont {Sharma}, \citenamefont {Chotia}, \citenamefont {Marechal},
  \citenamefont {Huckans}, \citenamefont {Pedri}, \citenamefont {Santos},
  \citenamefont {Gorceix}, \citenamefont {Vernac},\ and\ \citenamefont
  {Laburthe-Tolra}}]{de2013nonequilibrium}%
  \BibitemOpen
  \bibfield  {author} {\bibinfo {author} {\bibfnamefont {A.}~\bibnamefont
  {de~Paz}}, \bibinfo {author} {\bibfnamefont {A.}~\bibnamefont {Sharma}},
  \bibinfo {author} {\bibfnamefont {A.}~\bibnamefont {Chotia}}, \bibinfo
  {author} {\bibfnamefont {E.}~\bibnamefont {Marechal}}, \bibinfo {author}
  {\bibfnamefont {J.~H.}\ \bibnamefont {Huckans}}, \bibinfo {author}
  {\bibfnamefont {P.}~\bibnamefont {Pedri}}, \bibinfo {author} {\bibfnamefont
  {L.}~\bibnamefont {Santos}}, \bibinfo {author} {\bibfnamefont
  {O.}~\bibnamefont {Gorceix}}, \bibinfo {author} {\bibfnamefont
  {L.}~\bibnamefont {Vernac}}, \ and\ \bibinfo {author} {\bibfnamefont
  {B.}~\bibnamefont {Laburthe-Tolra}},\ }\href@noop {} {\bibfield  {journal}
  {\bibinfo  {journal} {Phys. Rev. Lett.}\ }\textbf {\bibinfo {volume} {111}},\
  \bibinfo {pages} {185305} (\bibinfo {year} {2013})}\BibitemShut {NoStop}%
\bibitem [{\citenamefont {Meldgin}\ \emph {et~al.}(2016)\citenamefont
  {Meldgin}, \citenamefont {Ray}, \citenamefont {Russ}, \citenamefont {Chen},
  \citenamefont {Ceperley},\ and\ \citenamefont
  {DeMarco}}]{meldgin:probing_2016}%
  \BibitemOpen
  \bibfield  {author} {\bibinfo {author} {\bibfnamefont {C.}~\bibnamefont
  {Meldgin}}, \bibinfo {author} {\bibfnamefont {U.}~\bibnamefont {Ray}},
  \bibinfo {author} {\bibfnamefont {P.}~\bibnamefont {Russ}}, \bibinfo {author}
  {\bibfnamefont {D.}~\bibnamefont {Chen}}, \bibinfo {author} {\bibfnamefont
  {D.~M.}\ \bibnamefont {Ceperley}}, \ and\ \bibinfo {author} {\bibfnamefont
  {B.}~\bibnamefont {DeMarco}},\ }\href@noop {} {\bibfield  {journal} {\bibinfo
   {journal} {Nat. Phys.}\ }\textbf {\bibinfo {volume} {12}},\ \bibinfo {pages}
  {646} (\bibinfo {year} {2016})}\BibitemShut {NoStop}%
\bibitem [{\citenamefont {Choi}\ \emph {et~al.}(2016)\citenamefont {Choi},
  \citenamefont {Hild}, \citenamefont {Zeiher}, \citenamefont {Schau{\ss}},
  \citenamefont {Rubio-Abadal}, \citenamefont {Yefsah}, \citenamefont
  {Khemani}, \citenamefont {Huse}, \citenamefont {Bloch},\ and\ \citenamefont
  {Gross}}]{choi:exploring_2016}%
  \BibitemOpen
  \bibfield  {author} {\bibinfo {author} {\bibfnamefont {J.-Y.}\ \bibnamefont
  {Choi}}, \bibinfo {author} {\bibfnamefont {S.}~\bibnamefont {Hild}}, \bibinfo
  {author} {\bibfnamefont {J.}~\bibnamefont {Zeiher}}, \bibinfo {author}
  {\bibfnamefont {P.}~\bibnamefont {Schau{\ss}}}, \bibinfo {author}
  {\bibfnamefont {A.}~\bibnamefont {Rubio-Abadal}}, \bibinfo {author}
  {\bibfnamefont {T.}~\bibnamefont {Yefsah}}, \bibinfo {author} {\bibfnamefont
  {V.}~\bibnamefont {Khemani}}, \bibinfo {author} {\bibfnamefont {D.~A.}\
  \bibnamefont {Huse}}, \bibinfo {author} {\bibfnamefont {I.}~\bibnamefont
  {Bloch}}, \ and\ \bibinfo {author} {\bibfnamefont {C.}~\bibnamefont
  {Gross}},\ }\href {\doibase 10.1126/science.aaf8834} {\bibfield  {journal}
  {\bibinfo  {journal} {Science}\ }\textbf {\bibinfo {volume} {352}},\ \bibinfo
  {pages} {1547} (\bibinfo {year} {2016})}\BibitemShut {NoStop}%
\bibitem [{\citenamefont {Bordia}\ \emph {et~al.}(2017)\citenamefont {Bordia},
  \citenamefont {L{\"u}schen}, \citenamefont {Scherg}, \citenamefont
  {Gopalakrishnan}, \citenamefont {Knap}, \citenamefont {Schneider},\ and\
  \citenamefont {Bloch}}]{bordia2017probing}%
  \BibitemOpen
  \bibfield  {author} {\bibinfo {author} {\bibfnamefont {P.}~\bibnamefont
  {Bordia}}, \bibinfo {author} {\bibfnamefont {H.}~\bibnamefont {L{\"u}schen}},
  \bibinfo {author} {\bibfnamefont {S.}~\bibnamefont {Scherg}}, \bibinfo
  {author} {\bibfnamefont {S.}~\bibnamefont {Gopalakrishnan}}, \bibinfo
  {author} {\bibfnamefont {M.}~\bibnamefont {Knap}}, \bibinfo {author}
  {\bibfnamefont {U.}~\bibnamefont {Schneider}}, \ and\ \bibinfo {author}
  {\bibfnamefont {I.}~\bibnamefont {Bloch}},\ }\href@noop {} {\bibfield
  {journal} {\bibinfo  {journal} {Phys. Rev. X}\ }\textbf {\bibinfo {volume}
  {7}},\ \bibinfo {pages} {041047} (\bibinfo {year} {2017})}\BibitemShut
  {NoStop}%
\bibitem [{\citenamefont {Gabardos}\ \emph {et~al.}(2020)\citenamefont
  {Gabardos}, \citenamefont {Zhu}, \citenamefont {Lepoutre}, \citenamefont
  {Rey}, \citenamefont {Laburthe-Tolra},\ and\ \citenamefont
  {Vernac}}]{gabardos2020relaxation}%
  \BibitemOpen
  \bibfield  {author} {\bibinfo {author} {\bibfnamefont {L.}~\bibnamefont
  {Gabardos}}, \bibinfo {author} {\bibfnamefont {B.}~\bibnamefont {Zhu}},
  \bibinfo {author} {\bibfnamefont {S.}~\bibnamefont {Lepoutre}}, \bibinfo
  {author} {\bibfnamefont {A.~M.}\ \bibnamefont {Rey}}, \bibinfo {author}
  {\bibfnamefont {B.}~\bibnamefont {Laburthe-Tolra}}, \ and\ \bibinfo {author}
  {\bibfnamefont {L.}~\bibnamefont {Vernac}},\ }\href@noop {} {\bibfield
  {journal} {\bibinfo  {journal} {arXiv:2005.13487}\ } (\bibinfo {year}
  {2020})}\BibitemShut {NoStop}%
\bibitem [{\citenamefont {Goban}\ \emph {et~al.}(2018)\citenamefont {Goban},
  \citenamefont {Hutson}, \citenamefont {Marti}, \citenamefont {Campbell},
  \citenamefont {Perlin}, \citenamefont {Julienne}, \citenamefont {D'Incao},
  \citenamefont {Rey},\ and\ \citenamefont {Ye}}]{goban:emergence_2018}%
  \BibitemOpen
  \bibfield  {author} {\bibinfo {author} {\bibfnamefont {A.}~\bibnamefont
  {Goban}}, \bibinfo {author} {\bibfnamefont {R.~B.}\ \bibnamefont {Hutson}},
  \bibinfo {author} {\bibfnamefont {G.~E.}\ \bibnamefont {Marti}}, \bibinfo
  {author} {\bibfnamefont {S.~L.}\ \bibnamefont {Campbell}}, \bibinfo {author}
  {\bibfnamefont {M.~A.}\ \bibnamefont {Perlin}}, \bibinfo {author}
  {\bibfnamefont {P.~S.}\ \bibnamefont {Julienne}}, \bibinfo {author}
  {\bibfnamefont {J.~P.}\ \bibnamefont {D'Incao}}, \bibinfo {author}
  {\bibfnamefont {A.~M.}\ \bibnamefont {Rey}}, \ and\ \bibinfo {author}
  {\bibfnamefont {J.}~\bibnamefont {Ye}},\ }\href@noop {} {\bibfield  {journal}
  {\bibinfo  {journal} {Nature}\ }\textbf {\bibinfo {volume} {563}},\ \bibinfo
  {pages} {369} (\bibinfo {year} {2018})}\BibitemShut {NoStop}%
\bibitem [{\citenamefont {Eisert}\ \emph {et~al.}(2015)\citenamefont {Eisert},
  \citenamefont {Friesdorf},\ and\ \citenamefont
  {Gogolin}}]{eisert2015quantum}%
  \BibitemOpen
  \bibfield  {author} {\bibinfo {author} {\bibfnamefont {J.}~\bibnamefont
  {Eisert}}, \bibinfo {author} {\bibfnamefont {M.}~\bibnamefont {Friesdorf}}, \
  and\ \bibinfo {author} {\bibfnamefont {C.}~\bibnamefont {Gogolin}},\
  }\href@noop {} {\bibfield  {journal} {\bibinfo  {journal} {Nat. Phys.}\
  }\textbf {\bibinfo {volume} {11}},\ \bibinfo {pages} {124} (\bibinfo {year}
  {2015})}\BibitemShut {NoStop}%
\bibitem [{\citenamefont {Nandkishore}\ and\ \citenamefont
  {Huse}(2015)}]{nandkishore:many_2015}%
  \BibitemOpen
  \bibfield  {author} {\bibinfo {author} {\bibfnamefont {R.}~\bibnamefont
  {Nandkishore}}\ and\ \bibinfo {author} {\bibfnamefont {D.~A.}\ \bibnamefont
  {Huse}},\ }\href {\doibase 10.1146/annurev-conmatphys-031214-014726}
  {\bibfield  {journal} {\bibinfo  {journal} {Annu. Rev. Condens. Matter
  Phys.}\ }\textbf {\bibinfo {volume} {6}},\ \bibinfo {pages} {15} (\bibinfo
  {year} {2015})}\BibitemShut {NoStop}%
\bibitem [{\citenamefont {Luitz}\ \emph {et~al.}(2016)\citenamefont {Luitz},
  \citenamefont {Laflorencie},\ and\ \citenamefont {Alet}}]{luitz2016extended}%
  \BibitemOpen
  \bibfield  {author} {\bibinfo {author} {\bibfnamefont {D.~J.}\ \bibnamefont
  {Luitz}}, \bibinfo {author} {\bibfnamefont {N.}~\bibnamefont {Laflorencie}},
  \ and\ \bibinfo {author} {\bibfnamefont {F.}~\bibnamefont {Alet}},\
  }\href@noop {} {\bibfield  {journal} {\bibinfo  {journal} {Phys. Rev. B}\
  }\textbf {\bibinfo {volume} {93}},\ \bibinfo {pages} {060201(R)} (\bibinfo
  {year} {2016})}\BibitemShut {NoStop}%
\bibitem [{\citenamefont {Luitz}\ and\ \citenamefont
  {Lev}(2017)}]{luitz2017ergodic}%
  \BibitemOpen
  \bibfield  {author} {\bibinfo {author} {\bibfnamefont {D.~J.}\ \bibnamefont
  {Luitz}}\ and\ \bibinfo {author} {\bibfnamefont {Y.~B.}\ \bibnamefont
  {Lev}},\ }\href@noop {} {\bibfield  {journal} {\bibinfo  {journal} {Ann.
  Phys. (Berlin)}\ }\textbf {\bibinfo {volume} {529}},\ \bibinfo {pages}
  {1600350} (\bibinfo {year} {2017})}\BibitemShut {NoStop}%
\bibitem [{\citenamefont {Parameswaran}\ and\ \citenamefont
  {Vasseur}(2018)}]{parameswaran:many_2018}%
  \BibitemOpen
  \bibfield  {author} {\bibinfo {author} {\bibfnamefont {S.~A.}\ \bibnamefont
  {Parameswaran}}\ and\ \bibinfo {author} {\bibfnamefont {R.}~\bibnamefont
  {Vasseur}},\ }\href {\doibase 10.1088/1361-6633/aac9ed} {\bibfield  {journal}
  {\bibinfo  {journal} {Rep. Prog. Phys.}\ }\textbf {\bibinfo {volume} {81}},\
  \bibinfo {pages} {082501} (\bibinfo {year} {2018})}\BibitemShut {NoStop}%
\bibitem [{\citenamefont {Mori}\ \emph {et~al.}(2018)\citenamefont {Mori},
  \citenamefont {Ikeda}, \citenamefont {Kaminishi},\ and\ \citenamefont
  {Ueda}}]{mori:thermalization_2018}%
  \BibitemOpen
  \bibfield  {author} {\bibinfo {author} {\bibfnamefont {T.}~\bibnamefont
  {Mori}}, \bibinfo {author} {\bibfnamefont {T.~N.}\ \bibnamefont {Ikeda}},
  \bibinfo {author} {\bibfnamefont {E.}~\bibnamefont {Kaminishi}}, \ and\
  \bibinfo {author} {\bibfnamefont {M.}~\bibnamefont {Ueda}},\ }\href {\doibase
  10.1088/1361-6455/aabcdf} {\bibfield  {journal} {\bibinfo  {journal} {J.
  Phys. B}\ }\textbf {\bibinfo {volume} {51}},\ \bibinfo {pages} {112001}
  (\bibinfo {year} {2018})}\BibitemShut {NoStop}%
\bibitem [{\citenamefont {Schmied}\ \emph {et~al.}()\citenamefont {Schmied},
  \citenamefont {Mikheev},\ and\ \citenamefont
  {Gasenzer}}]{schmied:non-thermal_2018}%
  \BibitemOpen
  \bibfield  {author} {\bibinfo {author} {\bibfnamefont {C.-M.}\ \bibnamefont
  {Schmied}}, \bibinfo {author} {\bibfnamefont {A.~N.}\ \bibnamefont
  {Mikheev}}, \ and\ \bibinfo {author} {\bibfnamefont {T.}~\bibnamefont
  {Gasenzer}},\ }\href@noop {} {\bibinfo  {journal} {arXiv:1810.08143}\
  }\BibitemShut {NoStop}%
\bibitem [{\citenamefont {Simon}\ \emph
  {et~al.}(2011{\natexlab{a}})\citenamefont {Simon}, \citenamefont {Bakr},
  \citenamefont {Ma}, \citenamefont {Tai}, \citenamefont {Preiss},\ and\
  \citenamefont {Greiner}}]{simon:quantum_2011}%
  \BibitemOpen
\bibfield  {journal} {  }\bibfield  {author} {\bibinfo {author} {\bibfnamefont
  {J.}~\bibnamefont {Simon}}, \bibinfo {author} {\bibfnamefont {W.~S.}\
  \bibnamefont {Bakr}}, \bibinfo {author} {\bibfnamefont {R.}~\bibnamefont
  {Ma}}, \bibinfo {author} {\bibfnamefont {M.~E.}\ \bibnamefont {Tai}},
  \bibinfo {author} {\bibfnamefont {P.~M.}\ \bibnamefont {Preiss}}, \ and\
  \bibinfo {author} {\bibfnamefont {M.}~\bibnamefont {Greiner}},\ }\href@noop
  {} {\bibfield  {journal} {\bibinfo  {journal} {Nature}\ }\textbf {\bibinfo
  {volume} {472}},\ \bibinfo {pages} {307} (\bibinfo {year}
  {2011}{\natexlab{a}})}\BibitemShut {NoStop}%
\bibitem [{\citenamefont {Bravyi}\ \emph {et~al.}(2006)\citenamefont {Bravyi},
  \citenamefont {Hastings},\ and\ \citenamefont {Verstraete}}]{bravyi2006}%
  \BibitemOpen
  \bibfield  {author} {\bibinfo {author} {\bibfnamefont {S.}~\bibnamefont
  {Bravyi}}, \bibinfo {author} {\bibfnamefont {M.~B.}\ \bibnamefont
  {Hastings}}, \ and\ \bibinfo {author} {\bibfnamefont {F.}~\bibnamefont
  {Verstraete}},\ }\href@noop {} {\bibfield  {journal} {\bibinfo  {journal}
  {Phys. Rev. Lett.}\ }\textbf {\bibinfo {volume} {97}},\ \bibinfo {pages}
  {050401} (\bibinfo {year} {2006})}\BibitemShut {NoStop}%
\bibitem [{\citenamefont {Hastings}(2010)}]{hastings2010locality}%
  \BibitemOpen
  \bibfield  {author} {\bibinfo {author} {\bibfnamefont {M.~B.}\ \bibnamefont
  {Hastings}},\ }\href@noop {} {\bibfield  {journal} {\bibinfo  {journal}
  {arXiv:1008.5137}\ } (\bibinfo {year} {2010})}\BibitemShut {NoStop}%
\bibitem [{\citenamefont {Lieb}\ and\ \citenamefont
  {Robinson}(1972)}]{Lieb1972}%
  \BibitemOpen
  \bibfield  {author} {\bibinfo {author} {\bibfnamefont {E.~H.}\ \bibnamefont
  {Lieb}}\ and\ \bibinfo {author} {\bibfnamefont {D.~W.}\ \bibnamefont
  {Robinson}},\ }\href@noop {} {\bibfield  {journal} {\bibinfo  {journal}
  {Commun. Math. Phys.}\ }\textbf {\bibinfo {volume} {28}},\ \bibinfo {pages}
  {251} (\bibinfo {year} {1972})}\BibitemShut {NoStop}%
\bibitem [{\citenamefont {Osborne}(2006)}]{Osborne2006}%
  \BibitemOpen
  \bibfield  {author} {\bibinfo {author} {\bibfnamefont {T.~J.}\ \bibnamefont
  {Osborne}},\ }\href@noop {} {\bibfield  {journal} {\bibinfo  {journal} {Phys.
  Rev. Lett.}\ }\textbf {\bibinfo {volume} {97}},\ \bibinfo {pages} {157202}
  (\bibinfo {year} {2006})}\BibitemShut {NoStop}%
\bibitem [{\citenamefont {Osborne}(2007{\natexlab{a}})}]{Osborne2007a}%
  \BibitemOpen
  \bibfield  {author} {\bibinfo {author} {\bibfnamefont {T.~J.}\ \bibnamefont
  {Osborne}},\ }\href@noop {} {\bibfield  {journal} {\bibinfo  {journal} {Phys.
  Rev. A}\ }\textbf {\bibinfo {volume} {75}},\ \bibinfo {pages} {032321}
  (\bibinfo {year} {2007}{\natexlab{a}})}\BibitemShut {NoStop}%
\bibitem [{\citenamefont {Osborne}(2007{\natexlab{b}})}]{Osborne2007b}%
  \BibitemOpen
  \bibfield  {author} {\bibinfo {author} {\bibfnamefont {T.~J.}\ \bibnamefont
  {Osborne}},\ }\href@noop {} {\bibfield  {journal} {\bibinfo  {journal} {Phys.
  Rev. A}\ }\textbf {\bibinfo {volume} {75}},\ \bibinfo {pages} {042306}
  (\bibinfo {year} {2007}{\natexlab{b}})}\BibitemShut {NoStop}%
\bibitem [{\citenamefont {Kliesch}\ \emph
  {et~al.}(2014{\natexlab{a}})\citenamefont {Kliesch}, \citenamefont
  {Gogolin},\ and\ \citenamefont {Eisert}}]{kliesch2014lieb}%
  \BibitemOpen
  \bibfield  {author} {\bibinfo {author} {\bibfnamefont {M.}~\bibnamefont
  {Kliesch}}, \bibinfo {author} {\bibfnamefont {C.}~\bibnamefont {Gogolin}}, \
  and\ \bibinfo {author} {\bibfnamefont {J.}~\bibnamefont {Eisert}},\ }in\
  \href@noop {} {\emph {\bibinfo {booktitle} {Many-Electron Approaches in
  Physics, Chemistry and Mathematics}}}\ (\bibinfo  {publisher} {Springer},\
  \bibinfo {year} {2014})\ pp.\ \bibinfo {pages} {301--318}\BibitemShut
  {NoStop}%
\bibitem [{\citenamefont {Woods}\ \emph {et~al.}(2015)\citenamefont {Woods},
  \citenamefont {Cramer},\ and\ \citenamefont {Plenio}}]{woods2015simulating}%
  \BibitemOpen
  \bibfield  {author} {\bibinfo {author} {\bibfnamefont {M.~P.}\ \bibnamefont
  {Woods}}, \bibinfo {author} {\bibfnamefont {M.}~\bibnamefont {Cramer}}, \
  and\ \bibinfo {author} {\bibfnamefont {M.~B.}\ \bibnamefont {Plenio}},\
  }\href@noop {} {\bibfield  {journal} {\bibinfo  {journal} {Phys. Rev. Lett.}\
  }\textbf {\bibinfo {volume} {115}},\ \bibinfo {pages} {130401} (\bibinfo
  {year} {2015})}\BibitemShut {NoStop}%
\bibitem [{\citenamefont {Woods}\ and\ \citenamefont
  {Plenio}(2016)}]{woods2016dynamical}%
  \BibitemOpen
  \bibfield  {author} {\bibinfo {author} {\bibfnamefont {M.~P.}\ \bibnamefont
  {Woods}}\ and\ \bibinfo {author} {\bibfnamefont {M.~B.}\ \bibnamefont
  {Plenio}},\ }\href@noop {} {\bibfield  {journal} {\bibinfo  {journal} {J.
  Math. Phys.}\ }\textbf {\bibinfo {volume} {57}},\ \bibinfo {pages} {022105}
  (\bibinfo {year} {2016})}\BibitemShut {NoStop}%
\bibitem [{\citenamefont {Haah}\ \emph {et~al.}(2018)\citenamefont {Haah},
  \citenamefont {Hastings}, \citenamefont {Kothari},\ and\ \citenamefont
  {Low}}]{haah2018quantum}%
  \BibitemOpen
  \bibfield  {author} {\bibinfo {author} {\bibfnamefont {J.}~\bibnamefont
  {Haah}}, \bibinfo {author} {\bibfnamefont {M.}~\bibnamefont {Hastings}},
  \bibinfo {author} {\bibfnamefont {R.}~\bibnamefont {Kothari}}, \ and\
  \bibinfo {author} {\bibfnamefont {G.~H.}\ \bibnamefont {Low}},\ }in\
  \href@noop {} {\emph {\bibinfo {booktitle} {2018 IEEE 59th Annual Symposium
  on Foundations of Computer Science (FOCS)}}}\ (\bibinfo {organization}
  {IEEE},\ \bibinfo {year} {2018})\ pp.\ \bibinfo {pages}
  {350--360}\BibitemShut {NoStop}%
\bibitem [{\citenamefont {Tran}\ \emph {et~al.}(2019)\citenamefont {Tran},
  \citenamefont {Guo}, \citenamefont {Su}, \citenamefont {Garrison},
  \citenamefont {Eldredge}, \citenamefont {Foss-Feig}, \citenamefont {Childs},\
  and\ \citenamefont {Gorshkov}}]{tran2019locality}%
  \BibitemOpen
  \bibfield  {author} {\bibinfo {author} {\bibfnamefont {M.~C.}\ \bibnamefont
  {Tran}}, \bibinfo {author} {\bibfnamefont {A.~Y.}\ \bibnamefont {Guo}},
  \bibinfo {author} {\bibfnamefont {Y.}~\bibnamefont {Su}}, \bibinfo {author}
  {\bibfnamefont {J.~R.}\ \bibnamefont {Garrison}}, \bibinfo {author}
  {\bibfnamefont {Z.}~\bibnamefont {Eldredge}}, \bibinfo {author}
  {\bibfnamefont {M.}~\bibnamefont {Foss-Feig}}, \bibinfo {author}
  {\bibfnamefont {A.~M.}\ \bibnamefont {Childs}}, \ and\ \bibinfo {author}
  {\bibfnamefont {A.~V.}\ \bibnamefont {Gorshkov}},\ }\href@noop {} {\bibfield
  {journal} {\bibinfo  {journal} {Phys. Rev. X}\ }\textbf {\bibinfo {volume}
  {9}},\ \bibinfo {pages} {031006} (\bibinfo {year} {2019})}\BibitemShut
  {NoStop}%
\bibitem [{\citenamefont {Huang}(2020)}]{huang2020computing}%
  \BibitemOpen
  \bibfield  {author} {\bibinfo {author} {\bibfnamefont {Y.}~\bibnamefont
  {Huang}},\ }\href@noop {} {\bibfield  {journal} {\bibinfo  {journal}
  {arXiv:2001.10763}\ } (\bibinfo {year} {2020})}\BibitemShut {NoStop}%
\bibitem [{\citenamefont {Da{\u{g}}}\ and\ \citenamefont
  {Sun}(2020)}]{daug2020dynamical}%
  \BibitemOpen
  \bibfield  {author} {\bibinfo {author} {\bibfnamefont {C.~B.}\ \bibnamefont
  {Da{\u{g}}}}\ and\ \bibinfo {author} {\bibfnamefont {K.}~\bibnamefont
  {Sun}},\ }\href@noop {} {\bibfield  {journal} {\bibinfo  {journal} {arXiv
  preprint arXiv:2004.12287}\ } (\bibinfo {year} {2020})}\BibitemShut {NoStop}%
\bibitem [{\citenamefont {Trotzky}\ \emph {et~al.}(2012)\citenamefont
  {Trotzky}, \citenamefont {Chen}, \citenamefont {Flesch}, \citenamefont
  {McCulloch}, \citenamefont {Schollwöck}, \citenamefont {Eisert},\ and\
  \citenamefont {Bloch}}]{trotzky:probing_2012}%
  \BibitemOpen
  \bibfield  {author} {\bibinfo {author} {\bibfnamefont {S.}~\bibnamefont
  {Trotzky}}, \bibinfo {author} {\bibfnamefont {Y.-A.}\ \bibnamefont {Chen}},
  \bibinfo {author} {\bibfnamefont {A.}~\bibnamefont {Flesch}}, \bibinfo
  {author} {\bibfnamefont {I.~P.}\ \bibnamefont {McCulloch}}, \bibinfo {author}
  {\bibfnamefont {U.}~\bibnamefont {Schollwöck}}, \bibinfo {author}
  {\bibfnamefont {J.}~\bibnamefont {Eisert}}, \ and\ \bibinfo {author}
  {\bibfnamefont {I.}~\bibnamefont {Bloch}},\ }\href
  {https://doi.org/10.1038/nphys2232} {\bibfield  {journal} {\bibinfo
  {journal} {Nat. Phys.}\ }\textbf {\bibinfo {volume} {8}},\ \bibinfo {pages}
  {325} (\bibinfo {year} {2012})}\BibitemShut {NoStop}%
\bibitem [{\citenamefont {Bauer}\ \emph {et~al.}(2015)\citenamefont {Bauer},
  \citenamefont {Dorfner},\ and\ \citenamefont
  {Heidrich-Meisner}}]{bauer:temporal_2015}%
  \BibitemOpen
  \bibfield  {author} {\bibinfo {author} {\bibfnamefont {A.}~\bibnamefont
  {Bauer}}, \bibinfo {author} {\bibfnamefont {F.}~\bibnamefont {Dorfner}}, \
  and\ \bibinfo {author} {\bibfnamefont {F.}~\bibnamefont {Heidrich-Meisner}},\
  }\href {\doibase 10.1103/PhysRevA.91.053628} {\bibfield  {journal} {\bibinfo
  {journal} {Phys. Rev. A}\ }\textbf {\bibinfo {volume} {91}},\ \bibinfo
  {pages} {053628} (\bibinfo {year} {2015})}\BibitemShut {NoStop}%
\bibitem [{\citenamefont {Chen}\ and\ \citenamefont
  {Lucas}(2019{\natexlab{a}})}]{Lucas2019operator}%
  \BibitemOpen
  \bibfield  {author} {\bibinfo {author} {\bibfnamefont {C.-F.}\ \bibnamefont
  {Chen}}\ and\ \bibinfo {author} {\bibfnamefont {A.}~\bibnamefont {Lucas}},\
  }\href@noop {} {\bibfield  {journal} {\bibinfo  {journal} {arXiv preprint
  arXiv:1905.03682}\ } (\bibinfo {year} {2019}{\natexlab{a}})}\BibitemShut
  {NoStop}%
\bibitem [{\citenamefont {Wang}\ and\ \citenamefont
  {Hazzard}(2020)}]{wang:tightening_2019}%
  \BibitemOpen
  \bibfield  {author} {\bibinfo {author} {\bibfnamefont {Z.}~\bibnamefont
  {Wang}}\ and\ \bibinfo {author} {\bibfnamefont {K.~R.~A.}\ \bibnamefont
  {Hazzard}},\ }\href {\doibase 10.1103/PRXQuantum.1.010303} {\bibfield
  {journal} {\bibinfo  {journal} {PRX Quantum}\ }\textbf {\bibinfo {volume}
  {1}},\ \bibinfo {pages} {010303} (\bibinfo {year} {2020})}\BibitemShut
  {NoStop}%
\bibitem [{\citenamefont {Iyer}\ \emph {et~al.}(2015)\citenamefont {Iyer},
  \citenamefont {Srednicki},\ and\ \citenamefont
  {Rigol}}]{iyer:optimization_2015}%
  \BibitemOpen
  \bibfield  {author} {\bibinfo {author} {\bibfnamefont {D.}~\bibnamefont
  {Iyer}}, \bibinfo {author} {\bibfnamefont {M.}~\bibnamefont {Srednicki}}, \
  and\ \bibinfo {author} {\bibfnamefont {M.}~\bibnamefont {Rigol}},\ }\href
  {\doibase 10.1103/PhysRevE.91.062142} {\bibfield  {journal} {\bibinfo
  {journal} {Phys. Rev. E}\ }\textbf {\bibinfo {volume} {91}},\ \bibinfo
  {pages} {062142} (\bibinfo {year} {2015})}\BibitemShut {NoStop}%
\bibitem [{Note2()}]{Note2}%
  \BibitemOpen
  \bibinfo {note} {While the fact that the error bound for PBC is smaller than
  that for OBC does not necessarily imply that the actual FSE in PBC is
  smaller, in the Supplemental Material~\cite {Suppl} we use short-time
  perturbative arguments to show that the actual error in PBC is indeed smaller
  at early times for most initial product states.}\BibitemShut {Stop}%
\bibitem [{\citenamefont {Hastings}\ and\ \citenamefont
  {Koma}(2006)}]{hastings2006}%
  \BibitemOpen
  \bibfield  {author} {\bibinfo {author} {\bibfnamefont {M.~B.}\ \bibnamefont
  {Hastings}}\ and\ \bibinfo {author} {\bibfnamefont {T.}~\bibnamefont
  {Koma}},\ }\href@noop {} {\bibfield  {journal} {\bibinfo  {journal} {Commun.
  Math. Phys.}\ }\textbf {\bibinfo {volume} {265}},\ \bibinfo {pages} {781}
  (\bibinfo {year} {2006})}\BibitemShut {NoStop}%
\bibitem [{\citenamefont {Richerme}\ \emph {et~al.}(2014)\citenamefont
  {Richerme}, \citenamefont {Gong}, \citenamefont {Lee}, \citenamefont {Senko},
  \citenamefont {Smith}, \citenamefont {Foss-Feig}, \citenamefont {Michalakis},
  \citenamefont {Gorshkov},\ and\ \citenamefont {Monroe}}]{Gorshkov2014}%
  \BibitemOpen
  \bibfield  {author} {\bibinfo {author} {\bibfnamefont {P.}~\bibnamefont
  {Richerme}}, \bibinfo {author} {\bibfnamefont {Z.-X.}\ \bibnamefont {Gong}},
  \bibinfo {author} {\bibfnamefont {A.}~\bibnamefont {Lee}}, \bibinfo {author}
  {\bibfnamefont {C.}~\bibnamefont {Senko}}, \bibinfo {author} {\bibfnamefont
  {J.}~\bibnamefont {Smith}}, \bibinfo {author} {\bibfnamefont
  {M.}~\bibnamefont {Foss-Feig}}, \bibinfo {author} {\bibfnamefont
  {S.}~\bibnamefont {Michalakis}}, \bibinfo {author} {\bibfnamefont {A.~V.}\
  \bibnamefont {Gorshkov}}, \ and\ \bibinfo {author} {\bibfnamefont
  {C.}~\bibnamefont {Monroe}},\ }\href@noop {} {\bibfield  {journal} {\bibinfo
  {journal} {Nature}\ }\textbf {\bibinfo {volume} {511}},\ \bibinfo {pages}
  {198} (\bibinfo {year} {2014})}\BibitemShut {NoStop}%
\bibitem [{\citenamefont {Gong}\ \emph {et~al.}(2014)\citenamefont {Gong},
  \citenamefont {Foss-Feig}, \citenamefont {Michalakis},\ and\ \citenamefont
  {Gorshkov}}]{Gong2014persistence}%
  \BibitemOpen
  \bibfield  {author} {\bibinfo {author} {\bibfnamefont {Z.-X.}\ \bibnamefont
  {Gong}}, \bibinfo {author} {\bibfnamefont {M.}~\bibnamefont {Foss-Feig}},
  \bibinfo {author} {\bibfnamefont {S.}~\bibnamefont {Michalakis}}, \ and\
  \bibinfo {author} {\bibfnamefont {A.~V.}\ \bibnamefont {Gorshkov}},\
  }\href@noop {} {\bibfield  {journal} {\bibinfo  {journal} {Phys. Rev. Lett.}\
  }\textbf {\bibinfo {volume} {113}},\ \bibinfo {pages} {030602} (\bibinfo
  {year} {2014})}\BibitemShut {NoStop}%
\bibitem [{\citenamefont {Foss-Feig}\ \emph {et~al.}(2015)\citenamefont
  {Foss-Feig}, \citenamefont {Gong}, \citenamefont {Clark},\ and\ \citenamefont
  {Gorshkov}}]{Foss2015nearly}%
  \BibitemOpen
  \bibfield  {author} {\bibinfo {author} {\bibfnamefont {M.}~\bibnamefont
  {Foss-Feig}}, \bibinfo {author} {\bibfnamefont {Z.-X.}\ \bibnamefont {Gong}},
  \bibinfo {author} {\bibfnamefont {C.~W.}\ \bibnamefont {Clark}}, \ and\
  \bibinfo {author} {\bibfnamefont {A.~V.}\ \bibnamefont {Gorshkov}},\
  }\href@noop {} {\bibfield  {journal} {\bibinfo  {journal} {Phys. Rev. Lett.}\
  }\textbf {\bibinfo {volume} {114}},\ \bibinfo {pages} {157201} (\bibinfo
  {year} {2015})}\BibitemShut {NoStop}%
\bibitem [{\citenamefont {Chen}\ and\ \citenamefont
  {Lucas}(2019{\natexlab{b}})}]{Lucas2019longrange}%
  \BibitemOpen
  \bibfield  {author} {\bibinfo {author} {\bibfnamefont {C.-F.}\ \bibnamefont
  {Chen}}\ and\ \bibinfo {author} {\bibfnamefont {A.}~\bibnamefont {Lucas}},\
  }\href@noop {} {\bibfield  {journal} {\bibinfo  {journal} {Phys. Rev. Lett.}\
  }\textbf {\bibinfo {volume} {123}},\ \bibinfo {pages} {250605} (\bibinfo
  {year} {2019}{\natexlab{b}})}\BibitemShut {NoStop}%
\bibitem [{\citenamefont {Kuwahara}\ and\ \citenamefont
  {Saito}(2020)}]{kuwahara2020strictly}%
  \BibitemOpen
  \bibfield  {author} {\bibinfo {author} {\bibfnamefont {T.}~\bibnamefont
  {Kuwahara}}\ and\ \bibinfo {author} {\bibfnamefont {K.}~\bibnamefont
  {Saito}},\ }\href@noop {} {\bibfield  {journal} {\bibinfo  {journal} {Phys.
  Rev. X}\ }\textbf {\bibinfo {volume} {10}},\ \bibinfo {pages} {031010}
  (\bibinfo {year} {2020})}\BibitemShut {NoStop}%
\bibitem [{\citenamefont {Tran}\ \emph {et~al.}(2020)\citenamefont {Tran},
  \citenamefont {Chen}, \citenamefont {Ehrenberg}, \citenamefont {Guo},
  \citenamefont {Deshpande}, \citenamefont {Hong}, \citenamefont {Gong},
  \citenamefont {Gorshkov},\ and\ \citenamefont {Lucas}}]{tran2020hierarchy}%
  \BibitemOpen
  \bibfield  {author} {\bibinfo {author} {\bibfnamefont {M.~C.}\ \bibnamefont
  {Tran}}, \bibinfo {author} {\bibfnamefont {C.-F.}\ \bibnamefont {Chen}},
  \bibinfo {author} {\bibfnamefont {A.}~\bibnamefont {Ehrenberg}}, \bibinfo
  {author} {\bibfnamefont {A.~Y.}\ \bibnamefont {Guo}}, \bibinfo {author}
  {\bibfnamefont {A.}~\bibnamefont {Deshpande}}, \bibinfo {author}
  {\bibfnamefont {Y.}~\bibnamefont {Hong}}, \bibinfo {author} {\bibfnamefont
  {Z.-X.}\ \bibnamefont {Gong}}, \bibinfo {author} {\bibfnamefont {A.~V.}\
  \bibnamefont {Gorshkov}}, \ and\ \bibinfo {author} {\bibfnamefont
  {A.}~\bibnamefont {Lucas}},\ }\href {\doibase 10.1103/PhysRevX.10.031009}
  {\bibfield  {journal} {\bibinfo  {journal} {Phys. Rev. X}\ }\textbf {\bibinfo
  {volume} {10}},\ \bibinfo {pages} {031009} (\bibinfo {year}
  {2020})}\BibitemShut {NoStop}%
\bibitem [{Sup()}]{Suppl}%
  \BibitemOpen
  \href@noop {} {}\bibinfo {note} {See Supplemental Material for the comparison
  of the error bounds to perturbation theory at early times, the detailed
  derivation of the PBC error bound in Eq.~(\ref{eq:EB_PBC}), the detailed
  proof of Eqs.~\eqref{eq:initialstateassumption}
  and~\eqref{eq:deltaS_correlatedrho_res}, and detailed derivations and
  expressions for the three different methods to bound the rhs of
  Eq.~(\ref{eq:EB_PBC}). The numerically tightest one is given in
  Eq.~\eqref{eq:Lucas_bound_dbcommu}, a more efficiently computable one is
  given in Eq.~\eqref{eq:dbcommufinal}, and the constants for the simplest one
  Eq.~\eqref{eq:deltaSPBC} are given in
  Eqs.~(\ref{eq:const_Cp},\ref{eq:directionalv_p},\ref{eq:deltaS_simple_final}).}\BibitemShut
  {Stop}%
\bibitem [{Note3()}]{Note3}%
  \BibitemOpen
  \bibinfo {note} {For translation invariant MPS with a finite bond dimension,
  $\protect \mathaccentV {hat}05E{\rho }_L$ can be taken as the $L$-site
  periodic version of $\protect \mathaccentV {hat}05E{\rho }$. That $\protect
  \mathaccentV {hat}05E{\rho }_L$ satisfies the condition in Eq.~\protect
  \textup {\hbox {\mathsurround \z@ \protect \normalfont (\ignorespaces \ref
  {eq:initialstateassumption}\unskip \@@italiccorr )}} can be proved using the
  transfer operator method which is used to prove that MPS has finite
  correlation length, see, e.g. Refs.~\cite
  {schollwoeck:density_2011,Perez2007matrix,orus:practical_2014}. The parameter
  $2\xi $ can simply be taken as the correlation length of the
  MPS.}\BibitemShut {Stop}%
\bibitem [{\citenamefont {Kliesch}\ \emph
  {et~al.}(2014{\natexlab{b}})\citenamefont {Kliesch}, \citenamefont {Gogolin},
  \citenamefont {Kastoryano}, \citenamefont {Riera},\ and\ \citenamefont
  {Eisert}}]{kliesch2014locality}%
  \BibitemOpen
  \bibfield  {author} {\bibinfo {author} {\bibfnamefont {M.}~\bibnamefont
  {Kliesch}}, \bibinfo {author} {\bibfnamefont {C.}~\bibnamefont {Gogolin}},
  \bibinfo {author} {\bibfnamefont {M.~J.}\ \bibnamefont {Kastoryano}},
  \bibinfo {author} {\bibfnamefont {A.}~\bibnamefont {Riera}}, \ and\ \bibinfo
  {author} {\bibfnamefont {J.}~\bibnamefont {Eisert}},\ }\href@noop {}
  {\bibfield  {journal} {\bibinfo  {journal} {Phys. Rev. X}\ }\textbf {\bibinfo
  {volume} {4}},\ \bibinfo {pages} {031019} (\bibinfo {year}
  {2014}{\natexlab{b}})}\BibitemShut {NoStop}%
\bibitem [{\citenamefont {Vojta}(2003)}]{vojta2003quantum}%
  \BibitemOpen
  \bibfield  {author} {\bibinfo {author} {\bibfnamefont {M.}~\bibnamefont
  {Vojta}},\ }\href@noop {} {\bibfield  {journal} {\bibinfo  {journal} {Rep.
  Prog. Phys.}\ }\textbf {\bibinfo {volume} {66}},\ \bibinfo {pages} {2069}
  (\bibinfo {year} {2003})}\BibitemShut {NoStop}%
\bibitem [{\citenamefont {Sachdev}(2007)}]{sachdev2007quantum}%
  \BibitemOpen
  \bibfield  {author} {\bibinfo {author} {\bibfnamefont {S.}~\bibnamefont
  {Sachdev}},\ }\enquote {\bibinfo {title} {Quantum phase transitions},}\ in\
  \href {\doibase 10.1002/9780470022184.hmm108} {\emph {\bibinfo {booktitle}
  {Handbook of Magnetism and Advanced Magnetic Materials}}}\ (\bibinfo
  {publisher} {American Cancer Society},\ \bibinfo {year} {2007})\ \Eprint
  {http://arxiv.org/abs/https://onlinelibrary.wiley.com/doi/pdf/10.1002/9780470022184.hmm108}
  {https://onlinelibrary.wiley.com/doi/pdf/10.1002/9780470022184.hmm108}
  \BibitemShut {NoStop}%
\bibitem [{\citenamefont {Coldea}\ \emph {et~al.}(2010)\citenamefont {Coldea},
  \citenamefont {Tennant}, \citenamefont {Wheeler}, \citenamefont {Wawrzynska},
  \citenamefont {Prabhakaran}, \citenamefont {Telling}, \citenamefont
  {Habicht}, \citenamefont {Smeibidl},\ and\ \citenamefont
  {Kiefer}}]{coldea2010quantum}%
  \BibitemOpen
  \bibfield  {author} {\bibinfo {author} {\bibfnamefont {R.}~\bibnamefont
  {Coldea}}, \bibinfo {author} {\bibfnamefont {D.~A.}\ \bibnamefont {Tennant}},
  \bibinfo {author} {\bibfnamefont {E.~M.}\ \bibnamefont {Wheeler}}, \bibinfo
  {author} {\bibfnamefont {E.}~\bibnamefont {Wawrzynska}}, \bibinfo {author}
  {\bibfnamefont {D.}~\bibnamefont {Prabhakaran}}, \bibinfo {author}
  {\bibfnamefont {M.}~\bibnamefont {Telling}}, \bibinfo {author} {\bibfnamefont
  {K.}~\bibnamefont {Habicht}}, \bibinfo {author} {\bibfnamefont
  {P.}~\bibnamefont {Smeibidl}}, \ and\ \bibinfo {author} {\bibfnamefont
  {K.}~\bibnamefont {Kiefer}},\ }\href@noop {} {\bibfield  {journal} {\bibinfo
  {journal} {Science}\ }\textbf {\bibinfo {volume} {327}},\ \bibinfo {pages}
  {177} (\bibinfo {year} {2010})}\BibitemShut {NoStop}%
\bibitem [{\citenamefont {Simon}\ \emph
  {et~al.}(2011{\natexlab{b}})\citenamefont {Simon}, \citenamefont {Bakr},
  \citenamefont {Ma}, \citenamefont {Tai}, \citenamefont {Preiss},\ and\
  \citenamefont {Greiner}}]{simon2011quantum}%
  \BibitemOpen
  \bibfield  {author} {\bibinfo {author} {\bibfnamefont {J.}~\bibnamefont
  {Simon}}, \bibinfo {author} {\bibfnamefont {W.~S.}\ \bibnamefont {Bakr}},
  \bibinfo {author} {\bibfnamefont {R.}~\bibnamefont {Ma}}, \bibinfo {author}
  {\bibfnamefont {M.~E.}\ \bibnamefont {Tai}}, \bibinfo {author} {\bibfnamefont
  {P.~M.}\ \bibnamefont {Preiss}}, \ and\ \bibinfo {author} {\bibfnamefont
  {M.}~\bibnamefont {Greiner}},\ }\href@noop {} {\bibfield  {journal} {\bibinfo
   {journal} {Nature}\ }\textbf {\bibinfo {volume} {472}},\ \bibinfo {pages}
  {307} (\bibinfo {year} {2011}{\natexlab{b}})}\BibitemShut {NoStop}%
\bibitem [{\citenamefont {Labuhn}\ \emph {et~al.}(2016)\citenamefont {Labuhn},
  \citenamefont {Barredo}, \citenamefont {Ravets}, \citenamefont
  {De~L{\'e}s{\'e}leuc}, \citenamefont {Macr{\`\i}}, \citenamefont {Lahaye},\
  and\ \citenamefont {Browaeys}}]{labuhn2016tunable}%
  \BibitemOpen
  \bibfield  {author} {\bibinfo {author} {\bibfnamefont {H.}~\bibnamefont
  {Labuhn}}, \bibinfo {author} {\bibfnamefont {D.}~\bibnamefont {Barredo}},
  \bibinfo {author} {\bibfnamefont {S.}~\bibnamefont {Ravets}}, \bibinfo
  {author} {\bibfnamefont {S.}~\bibnamefont {De~L{\'e}s{\'e}leuc}}, \bibinfo
  {author} {\bibfnamefont {T.}~\bibnamefont {Macr{\`\i}}}, \bibinfo {author}
  {\bibfnamefont {T.}~\bibnamefont {Lahaye}}, \ and\ \bibinfo {author}
  {\bibfnamefont {A.}~\bibnamefont {Browaeys}},\ }\href@noop {} {\bibfield
  {journal} {\bibinfo  {journal} {Nature}\ }\textbf {\bibinfo {volume} {534}},\
  \bibinfo {pages} {667} (\bibinfo {year} {2016})}\BibitemShut {NoStop}%
\bibitem [{\citenamefont {Guardado-Sanchez}\ \emph
  {et~al.}(2018{\natexlab{b}})\citenamefont {Guardado-Sanchez}, \citenamefont
  {Brown}, \citenamefont {Mitra}, \citenamefont {Devakul}, \citenamefont
  {Huse}, \citenamefont {Schau{\ss}},\ and\ \citenamefont
  {Bakr}}]{guardado2018probing}%
  \BibitemOpen
  \bibfield  {author} {\bibinfo {author} {\bibfnamefont {E.}~\bibnamefont
  {Guardado-Sanchez}}, \bibinfo {author} {\bibfnamefont {P.~T.}\ \bibnamefont
  {Brown}}, \bibinfo {author} {\bibfnamefont {D.}~\bibnamefont {Mitra}},
  \bibinfo {author} {\bibfnamefont {T.}~\bibnamefont {Devakul}}, \bibinfo
  {author} {\bibfnamefont {D.~A.}\ \bibnamefont {Huse}}, \bibinfo {author}
  {\bibfnamefont {P.}~\bibnamefont {Schau{\ss}}}, \ and\ \bibinfo {author}
  {\bibfnamefont {W.~S.}\ \bibnamefont {Bakr}},\ }\href@noop {} {\bibfield
  {journal} {\bibinfo  {journal} {Phys. Rev. X}\ }\textbf {\bibinfo {volume}
  {8}},\ \bibinfo {pages} {021069} (\bibinfo {year}
  {2018}{\natexlab{b}})}\BibitemShut {NoStop}%
\bibitem [{\citenamefont {Friedenauer}\ \emph {et~al.}(2008)\citenamefont
  {Friedenauer}, \citenamefont {Schmitz}, \citenamefont {Glueckert},
  \citenamefont {Porras},\ and\ \citenamefont
  {Sch{\"a}tz}}]{friedenauer2008simulating}%
  \BibitemOpen
  \bibfield  {author} {\bibinfo {author} {\bibfnamefont {A.}~\bibnamefont
  {Friedenauer}}, \bibinfo {author} {\bibfnamefont {H.}~\bibnamefont
  {Schmitz}}, \bibinfo {author} {\bibfnamefont {J.~T.}\ \bibnamefont
  {Glueckert}}, \bibinfo {author} {\bibfnamefont {D.}~\bibnamefont {Porras}}, \
  and\ \bibinfo {author} {\bibfnamefont {T.}~\bibnamefont {Sch{\"a}tz}},\
  }\href@noop {} {\bibfield  {journal} {\bibinfo  {journal} {Nat. Phys.}\
  }\textbf {\bibinfo {volume} {4}},\ \bibinfo {pages} {757} (\bibinfo {year}
  {2008})}\BibitemShut {NoStop}%
\bibitem [{\citenamefont {Kim}\ \emph {et~al.}(2010)\citenamefont {Kim},
  \citenamefont {Chang}, \citenamefont {Korenblit}, \citenamefont {Islam},
  \citenamefont {Edwards}, \citenamefont {Freericks}, \citenamefont {Lin},
  \citenamefont {Duan},\ and\ \citenamefont {Monroe}}]{kim2010quantum}%
  \BibitemOpen
  \bibfield  {author} {\bibinfo {author} {\bibfnamefont {K.}~\bibnamefont
  {Kim}}, \bibinfo {author} {\bibfnamefont {M.-S.}\ \bibnamefont {Chang}},
  \bibinfo {author} {\bibfnamefont {S.}~\bibnamefont {Korenblit}}, \bibinfo
  {author} {\bibfnamefont {R.}~\bibnamefont {Islam}}, \bibinfo {author}
  {\bibfnamefont {E.~E.}\ \bibnamefont {Edwards}}, \bibinfo {author}
  {\bibfnamefont {J.~K.}\ \bibnamefont {Freericks}}, \bibinfo {author}
  {\bibfnamefont {G.-D.}\ \bibnamefont {Lin}}, \bibinfo {author} {\bibfnamefont
  {L.-M.}\ \bibnamefont {Duan}}, \ and\ \bibinfo {author} {\bibfnamefont
  {C.}~\bibnamefont {Monroe}},\ }\href@noop {} {\bibfield  {journal} {\bibinfo
  {journal} {Nature}\ }\textbf {\bibinfo {volume} {465}},\ \bibinfo {pages}
  {590} (\bibinfo {year} {2010})}\BibitemShut {NoStop}%
\bibitem [{\citenamefont {Kim}\ \emph {et~al.}(2011)\citenamefont {Kim},
  \citenamefont {Korenblit}, \citenamefont {Islam}, \citenamefont {Edwards},
  \citenamefont {Chang}, \citenamefont {Noh}, \citenamefont {Carmichael},
  \citenamefont {Lin}, \citenamefont {Duan}, \citenamefont {Wang} \emph
  {et~al.}}]{kim2011quantum}%
  \BibitemOpen
  \bibfield  {author} {\bibinfo {author} {\bibfnamefont {K.}~\bibnamefont
  {Kim}}, \bibinfo {author} {\bibfnamefont {S.}~\bibnamefont {Korenblit}},
  \bibinfo {author} {\bibfnamefont {R.}~\bibnamefont {Islam}}, \bibinfo
  {author} {\bibfnamefont {E.}~\bibnamefont {Edwards}}, \bibinfo {author}
  {\bibfnamefont {M.}~\bibnamefont {Chang}}, \bibinfo {author} {\bibfnamefont
  {C.}~\bibnamefont {Noh}}, \bibinfo {author} {\bibfnamefont {H.}~\bibnamefont
  {Carmichael}}, \bibinfo {author} {\bibfnamefont {G.}~\bibnamefont {Lin}},
  \bibinfo {author} {\bibfnamefont {L.}~\bibnamefont {Duan}}, \bibinfo {author}
  {\bibfnamefont {C.~J.}\ \bibnamefont {Wang}},  \emph {et~al.},\ }\href@noop
  {} {\bibfield  {journal} {\bibinfo  {journal} {New J. Phys.}\ }\textbf
  {\bibinfo {volume} {13}},\ \bibinfo {pages} {105003} (\bibinfo {year}
  {2011})}\BibitemShut {NoStop}%
\bibitem [{\citenamefont {Lanyon}\ \emph {et~al.}(2011)\citenamefont {Lanyon},
  \citenamefont {Hempel}, \citenamefont {Nigg}, \citenamefont {M{\"u}ller},
  \citenamefont {Gerritsma}, \citenamefont {Z{\"a}hringer}, \citenamefont
  {Schindler}, \citenamefont {Barreiro}, \citenamefont {Rambach}, \citenamefont
  {Kirchmair} \emph {et~al.}}]{lanyon2011universal}%
  \BibitemOpen
  \bibfield  {author} {\bibinfo {author} {\bibfnamefont {B.~P.}\ \bibnamefont
  {Lanyon}}, \bibinfo {author} {\bibfnamefont {C.}~\bibnamefont {Hempel}},
  \bibinfo {author} {\bibfnamefont {D.}~\bibnamefont {Nigg}}, \bibinfo {author}
  {\bibfnamefont {M.}~\bibnamefont {M{\"u}ller}}, \bibinfo {author}
  {\bibfnamefont {R.}~\bibnamefont {Gerritsma}}, \bibinfo {author}
  {\bibfnamefont {F.}~\bibnamefont {Z{\"a}hringer}}, \bibinfo {author}
  {\bibfnamefont {P.}~\bibnamefont {Schindler}}, \bibinfo {author}
  {\bibfnamefont {J.~T.}\ \bibnamefont {Barreiro}}, \bibinfo {author}
  {\bibfnamefont {M.}~\bibnamefont {Rambach}}, \bibinfo {author} {\bibfnamefont
  {G.}~\bibnamefont {Kirchmair}},  \emph {et~al.},\ }\href@noop {} {\bibfield
  {journal} {\bibinfo  {journal} {Science}\ }\textbf {\bibinfo {volume}
  {334}},\ \bibinfo {pages} {57} (\bibinfo {year} {2011})}\BibitemShut
  {NoStop}%
\bibitem [{\citenamefont {Britton}\ \emph {et~al.}(2012)\citenamefont
  {Britton}, \citenamefont {Sawyer}, \citenamefont {Keith}, \citenamefont
  {Wang}, \citenamefont {Freericks}, \citenamefont {Uys}, \citenamefont
  {Biercuk},\ and\ \citenamefont {Bollinger}}]{britton2012engineered}%
  \BibitemOpen
  \bibfield  {author} {\bibinfo {author} {\bibfnamefont {J.~W.}\ \bibnamefont
  {Britton}}, \bibinfo {author} {\bibfnamefont {B.~C.}\ \bibnamefont {Sawyer}},
  \bibinfo {author} {\bibfnamefont {A.~C.}\ \bibnamefont {Keith}}, \bibinfo
  {author} {\bibfnamefont {C.-C.~J.}\ \bibnamefont {Wang}}, \bibinfo {author}
  {\bibfnamefont {J.~K.}\ \bibnamefont {Freericks}}, \bibinfo {author}
  {\bibfnamefont {H.}~\bibnamefont {Uys}}, \bibinfo {author} {\bibfnamefont
  {M.~J.}\ \bibnamefont {Biercuk}}, \ and\ \bibinfo {author} {\bibfnamefont
  {J.~J.}\ \bibnamefont {Bollinger}},\ }\href@noop {} {\bibfield  {journal}
  {\bibinfo  {journal} {Nature}\ }\textbf {\bibinfo {volume} {484}},\ \bibinfo
  {pages} {489} (\bibinfo {year} {2012})}\BibitemShut {NoStop}%
\bibitem [{\citenamefont {Barends}\ \emph {et~al.}(2016)\citenamefont
  {Barends}, \citenamefont {Shabani}, \citenamefont {Lamata}, \citenamefont
  {Kelly}, \citenamefont {Mezzacapo}, \citenamefont {Las~Heras}, \citenamefont
  {Babbush}, \citenamefont {Fowler}, \citenamefont {Campbell}, \citenamefont
  {Chen} \emph {et~al.}}]{barends2016digitized}%
  \BibitemOpen
  \bibfield  {author} {\bibinfo {author} {\bibfnamefont {R.}~\bibnamefont
  {Barends}}, \bibinfo {author} {\bibfnamefont {A.}~\bibnamefont {Shabani}},
  \bibinfo {author} {\bibfnamefont {L.}~\bibnamefont {Lamata}}, \bibinfo
  {author} {\bibfnamefont {J.}~\bibnamefont {Kelly}}, \bibinfo {author}
  {\bibfnamefont {A.}~\bibnamefont {Mezzacapo}}, \bibinfo {author}
  {\bibfnamefont {U.}~\bibnamefont {Las~Heras}}, \bibinfo {author}
  {\bibfnamefont {R.}~\bibnamefont {Babbush}}, \bibinfo {author} {\bibfnamefont
  {A.~G.}\ \bibnamefont {Fowler}}, \bibinfo {author} {\bibfnamefont
  {B.}~\bibnamefont {Campbell}}, \bibinfo {author} {\bibfnamefont
  {Y.}~\bibnamefont {Chen}},  \emph {et~al.},\ }\href@noop {} {\bibfield
  {journal} {\bibinfo  {journal} {Nature}\ }\textbf {\bibinfo {volume} {534}},\
  \bibinfo {pages} {222} (\bibinfo {year} {2016})}\BibitemShut {NoStop}%
\bibitem [{\citenamefont {Harris}\ \emph {et~al.}(2018)\citenamefont {Harris},
  \citenamefont {Sato}, \citenamefont {Berkley}, \citenamefont {Reis},
  \citenamefont {Altomare}, \citenamefont {Amin}, \citenamefont {Boothby},
  \citenamefont {Bunyk}, \citenamefont {Deng}, \citenamefont {Enderud} \emph
  {et~al.}}]{harris2018phase}%
  \BibitemOpen
  \bibfield  {author} {\bibinfo {author} {\bibfnamefont {R.}~\bibnamefont
  {Harris}}, \bibinfo {author} {\bibfnamefont {Y.}~\bibnamefont {Sato}},
  \bibinfo {author} {\bibfnamefont {A.}~\bibnamefont {Berkley}}, \bibinfo
  {author} {\bibfnamefont {M.}~\bibnamefont {Reis}}, \bibinfo {author}
  {\bibfnamefont {F.}~\bibnamefont {Altomare}}, \bibinfo {author}
  {\bibfnamefont {M.}~\bibnamefont {Amin}}, \bibinfo {author} {\bibfnamefont
  {K.}~\bibnamefont {Boothby}}, \bibinfo {author} {\bibfnamefont
  {P.}~\bibnamefont {Bunyk}}, \bibinfo {author} {\bibfnamefont
  {C.}~\bibnamefont {Deng}}, \bibinfo {author} {\bibfnamefont {C.}~\bibnamefont
  {Enderud}},  \emph {et~al.},\ }\href@noop {} {\bibfield  {journal} {\bibinfo
  {journal} {Science}\ }\textbf {\bibinfo {volume} {361}},\ \bibinfo {pages}
  {162} (\bibinfo {year} {2018})}\BibitemShut {NoStop}%
\bibitem [{\citenamefont {Dziarmaga}(2005)}]{dziarmaga2005dynamics}%
  \BibitemOpen
  \bibfield  {author} {\bibinfo {author} {\bibfnamefont {J.}~\bibnamefont
  {Dziarmaga}},\ }\href@noop {} {\bibfield  {journal} {\bibinfo  {journal}
  {Phys. Rev. Lett.}\ }\textbf {\bibinfo {volume} {95}},\ \bibinfo {pages}
  {245701} (\bibinfo {year} {2005})}\BibitemShut {NoStop}%
\bibitem [{\citenamefont {Anisimov}\ \emph {et~al.}(1991)\citenamefont
  {Anisimov}, \citenamefont {Zaanen},\ and\ \citenamefont
  {Andersen}}]{anisimov1991band}%
  \BibitemOpen
  \bibfield  {author} {\bibinfo {author} {\bibfnamefont {V.~I.}\ \bibnamefont
  {Anisimov}}, \bibinfo {author} {\bibfnamefont {J.}~\bibnamefont {Zaanen}}, \
  and\ \bibinfo {author} {\bibfnamefont {O.~K.}\ \bibnamefont {Andersen}},\
  }\href@noop {} {\bibfield  {journal} {\bibinfo  {journal} {Phys. Rev. B}\
  }\textbf {\bibinfo {volume} {44}},\ \bibinfo {pages} {943} (\bibinfo {year}
  {1991})}\BibitemShut {NoStop}%
\bibitem [{\citenamefont {Esslinger}(2010)}]{esslinger2010fermi}%
  \BibitemOpen
  \bibfield  {author} {\bibinfo {author} {\bibfnamefont {T.}~\bibnamefont
  {Esslinger}},\ }\href@noop {} {\bibfield  {journal} {\bibinfo  {journal}
  {Annu. Rev. Condens. Matter Phys.}\ }\textbf {\bibinfo {volume} {1}},\
  \bibinfo {pages} {129} (\bibinfo {year} {2010})}\BibitemShut {NoStop}%
\bibitem [{\citenamefont {Parsons}\ \emph {et~al.}(2016)\citenamefont
  {Parsons}, \citenamefont {Mazurenko}, \citenamefont {Chiu}, \citenamefont
  {Ji}, \citenamefont {Greif},\ and\ \citenamefont
  {Greiner}}]{parsons2016site}%
  \BibitemOpen
  \bibfield  {author} {\bibinfo {author} {\bibfnamefont {M.~F.}\ \bibnamefont
  {Parsons}}, \bibinfo {author} {\bibfnamefont {A.}~\bibnamefont {Mazurenko}},
  \bibinfo {author} {\bibfnamefont {C.~S.}\ \bibnamefont {Chiu}}, \bibinfo
  {author} {\bibfnamefont {G.}~\bibnamefont {Ji}}, \bibinfo {author}
  {\bibfnamefont {D.}~\bibnamefont {Greif}}, \ and\ \bibinfo {author}
  {\bibfnamefont {M.}~\bibnamefont {Greiner}},\ }\href@noop {} {\bibfield
  {journal} {\bibinfo  {journal} {Science}\ }\textbf {\bibinfo {volume}
  {353}},\ \bibinfo {pages} {1253} (\bibinfo {year} {2016})}\BibitemShut
  {NoStop}%
\bibitem [{\citenamefont {Boll}\ \emph {et~al.}(2016)\citenamefont {Boll},
  \citenamefont {Hilker}, \citenamefont {Salomon}, \citenamefont {Omran},
  \citenamefont {Nespolo}, \citenamefont {Pollet}, \citenamefont {Bloch},\ and\
  \citenamefont {Gross}}]{boll2016spin}%
  \BibitemOpen
  \bibfield  {author} {\bibinfo {author} {\bibfnamefont {M.}~\bibnamefont
  {Boll}}, \bibinfo {author} {\bibfnamefont {T.~A.}\ \bibnamefont {Hilker}},
  \bibinfo {author} {\bibfnamefont {G.}~\bibnamefont {Salomon}}, \bibinfo
  {author} {\bibfnamefont {A.}~\bibnamefont {Omran}}, \bibinfo {author}
  {\bibfnamefont {J.}~\bibnamefont {Nespolo}}, \bibinfo {author} {\bibfnamefont
  {L.}~\bibnamefont {Pollet}}, \bibinfo {author} {\bibfnamefont
  {I.}~\bibnamefont {Bloch}}, \ and\ \bibinfo {author} {\bibfnamefont
  {C.}~\bibnamefont {Gross}},\ }\href@noop {} {\bibfield  {journal} {\bibinfo
  {journal} {Science}\ }\textbf {\bibinfo {volume} {353}},\ \bibinfo {pages}
  {1257} (\bibinfo {year} {2016})}\BibitemShut {NoStop}%
\bibitem [{\citenamefont {Mazurenko}\ \emph {et~al.}(2017)\citenamefont
  {Mazurenko}, \citenamefont {Chiu}, \citenamefont {Ji}, \citenamefont
  {Parsons}, \citenamefont {Kan{\'a}sz-Nagy}, \citenamefont {Schmidt},
  \citenamefont {Grusdt}, \citenamefont {Demler}, \citenamefont {Greif},\ and\
  \citenamefont {Greiner}}]{mazurenko2017cold}%
  \BibitemOpen
  \bibfield  {author} {\bibinfo {author} {\bibfnamefont {A.}~\bibnamefont
  {Mazurenko}}, \bibinfo {author} {\bibfnamefont {C.~S.}\ \bibnamefont {Chiu}},
  \bibinfo {author} {\bibfnamefont {G.}~\bibnamefont {Ji}}, \bibinfo {author}
  {\bibfnamefont {M.~F.}\ \bibnamefont {Parsons}}, \bibinfo {author}
  {\bibfnamefont {M.}~\bibnamefont {Kan{\'a}sz-Nagy}}, \bibinfo {author}
  {\bibfnamefont {R.}~\bibnamefont {Schmidt}}, \bibinfo {author} {\bibfnamefont
  {F.}~\bibnamefont {Grusdt}}, \bibinfo {author} {\bibfnamefont
  {E.}~\bibnamefont {Demler}}, \bibinfo {author} {\bibfnamefont
  {D.}~\bibnamefont {Greif}}, \ and\ \bibinfo {author} {\bibfnamefont
  {M.}~\bibnamefont {Greiner}},\ }\href@noop {} {\bibfield  {journal} {\bibinfo
   {journal} {Nature}\ }\textbf {\bibinfo {volume} {545}},\ \bibinfo {pages}
  {462} (\bibinfo {year} {2017})}\BibitemShut {NoStop}%
\bibitem [{\citenamefont {Brown}\ \emph {et~al.}(2019)\citenamefont {Brown},
  \citenamefont {Mitra}, \citenamefont {Guardado-Sanchez}, \citenamefont
  {Nourafkan}, \citenamefont {Reymbaut}, \citenamefont {H{\'e}bert},
  \citenamefont {Bergeron}, \citenamefont {Tremblay}, \citenamefont {Kokalj},
  \citenamefont {Huse} \emph {et~al.}}]{brown2019bad}%
  \BibitemOpen
  \bibfield  {author} {\bibinfo {author} {\bibfnamefont {P.~T.}\ \bibnamefont
  {Brown}}, \bibinfo {author} {\bibfnamefont {D.}~\bibnamefont {Mitra}},
  \bibinfo {author} {\bibfnamefont {E.}~\bibnamefont {Guardado-Sanchez}},
  \bibinfo {author} {\bibfnamefont {R.}~\bibnamefont {Nourafkan}}, \bibinfo
  {author} {\bibfnamefont {A.}~\bibnamefont {Reymbaut}}, \bibinfo {author}
  {\bibfnamefont {C.-D.}\ \bibnamefont {H{\'e}bert}}, \bibinfo {author}
  {\bibfnamefont {S.}~\bibnamefont {Bergeron}}, \bibinfo {author}
  {\bibfnamefont {A.-M.}\ \bibnamefont {Tremblay}}, \bibinfo {author}
  {\bibfnamefont {J.}~\bibnamefont {Kokalj}}, \bibinfo {author} {\bibfnamefont
  {D.~A.}\ \bibnamefont {Huse}},  \emph {et~al.},\ }\href@noop {} {\bibfield
  {journal} {\bibinfo  {journal} {Science}\ }\textbf {\bibinfo {volume}
  {363}},\ \bibinfo {pages} {379} (\bibinfo {year} {2019})}\BibitemShut
  {NoStop}%
\bibitem [{\citenamefont {Schl{\"u}nzen}\ \emph {et~al.}(2017)\citenamefont
  {Schl{\"u}nzen}, \citenamefont {Joost}, \citenamefont {Heidrich-Meisner},\
  and\ \citenamefont {Bonitz}}]{FHnonequilibrium}%
  \BibitemOpen
  \bibfield  {author} {\bibinfo {author} {\bibfnamefont {N.}~\bibnamefont
  {Schl{\"u}nzen}}, \bibinfo {author} {\bibfnamefont {J.-P.}\ \bibnamefont
  {Joost}}, \bibinfo {author} {\bibfnamefont {F.}~\bibnamefont
  {Heidrich-Meisner}}, \ and\ \bibinfo {author} {\bibfnamefont
  {M.}~\bibnamefont {Bonitz}},\ }\href@noop {} {\bibfield  {journal} {\bibinfo
  {journal} {Phys. Rev. B}\ }\textbf {\bibinfo {volume} {95}},\ \bibinfo
  {pages} {165139} (\bibinfo {year} {2017})}\BibitemShut {NoStop}%
\bibitem [{\citenamefont {Pertot}\ \emph {et~al.}(2014)\citenamefont {Pertot},
  \citenamefont {Sheikhan}, \citenamefont {Cocchi}, \citenamefont {Miller},
  \citenamefont {Bohn}, \citenamefont {Koschorreck}, \citenamefont {K{\"o}hl},\
  and\ \citenamefont {Kollath}}]{pertot2014relaxation}%
  \BibitemOpen
  \bibfield  {author} {\bibinfo {author} {\bibfnamefont {D.}~\bibnamefont
  {Pertot}}, \bibinfo {author} {\bibfnamefont {A.}~\bibnamefont {Sheikhan}},
  \bibinfo {author} {\bibfnamefont {E.}~\bibnamefont {Cocchi}}, \bibinfo
  {author} {\bibfnamefont {L.~A.}\ \bibnamefont {Miller}}, \bibinfo {author}
  {\bibfnamefont {J.~E.}\ \bibnamefont {Bohn}}, \bibinfo {author}
  {\bibfnamefont {M.}~\bibnamefont {Koschorreck}}, \bibinfo {author}
  {\bibfnamefont {M.}~\bibnamefont {K{\"o}hl}}, \ and\ \bibinfo {author}
  {\bibfnamefont {C.}~\bibnamefont {Kollath}},\ }\href@noop {} {\bibfield
  {journal} {\bibinfo  {journal} {Phys. Rev. Lett.}\ }\textbf {\bibinfo
  {volume} {113}},\ \bibinfo {pages} {170403} (\bibinfo {year}
  {2014})}\BibitemShut {NoStop}%
\bibitem [{\citenamefont {Schuch}\ \emph {et~al.}(2011)\citenamefont {Schuch},
  \citenamefont {Harrison}, \citenamefont {Osborne},\ and\ \citenamefont
  {Eisert}}]{schuch2011information}%
  \BibitemOpen
  \bibfield  {author} {\bibinfo {author} {\bibfnamefont {N.}~\bibnamefont
  {Schuch}}, \bibinfo {author} {\bibfnamefont {S.~K.}\ \bibnamefont
  {Harrison}}, \bibinfo {author} {\bibfnamefont {T.~J.}\ \bibnamefont
  {Osborne}}, \ and\ \bibinfo {author} {\bibfnamefont {J.}~\bibnamefont
  {Eisert}},\ }\href {\doibase 10.1103/PhysRevA.84.032309} {\bibfield
  {journal} {\bibinfo  {journal} {Phys. Rev. A}\ }\textbf {\bibinfo {volume}
  {84}},\ \bibinfo {pages} {032309} (\bibinfo {year} {2011})}\BibitemShut
  {NoStop}%
\end{thebibliography}%
\bibliographystyle{apsrev4-1}

\clearpage
\newpage
\clearpage 
\setcounter{equation}{0}%
\setcounter{figure}{0}%
\setcounter{table}{0}%
\renewcommand{\thetable}{S\arabic{table}}
\renewcommand{\theequation}{S\arabic{equation}}
\renewcommand{\thefigure}{S\arabic{figure}}
\renewcommand{\thesection}{S\arabic{section}}

\newtheorem{proposition}{Proposition}
\onecolumngrid

\begin{center}
	{\Large Supplemental Material for\\ ``Bounding the FSE of quantum many-body dynamics simulations''}
	
	\vspace*{0.5cm}
	
	Zhiyuan Wang, Michael Foss-Feig, and Kaden R. A. Hazzard
	
	\vspace{1cm}
	
\end{center}

The supplemental material fills in technical details of the basic results in the main text.  We first prove the ground state FSE bound in Eq.~\eqref{eq:initialstateassumption} under the gap assumption we mentioned there. Then we give the full derivation of the dynamics FSE bound for correlated initial state, Eq.~\eqref{eq:deltaS_correlatedrho_res}. We then compare our dynamics error bounds to perturbation theory at early times and show that they are qualitatively tight for a large class of product initial states, i.e. they grow with time as $t^a$ with the correct exponent $a$. The remaining task is to give a detailed derivation of the improved error bound in PBC given in Eq.~\eqref{eq:deltaSPBC}. An important intermediate step is to prove Eq.~\eqref{eq:EB_PBC}, which is an analog of Eq.~\eqref{eq:deltaS0tintro}, expressing the FSE in terms of a LR commutator. We give three different methods to numerically upper bound the LR commutator that appears in the rhs of Eq.~\eqref{eq:EB_PBC}. The first one is to evaluate the series in Eq.~\eqref{eq:Lucas_bound_dbcommu}, which combines the methods in Refs.~\cite{Lucas2019operator,wang:tightening_2019}. This leads to the tightest bound, but is only efficiently computable in special cases. The second one is to numerically solve the differential equation \eqref{eq:DEGreen} and then calculate Eq.~\eqref{eq:dbcommufinal}. This method~(which is based on Ref.~\cite{wang:tightening_2019}) is only slightly looser than the first one but is computationally efficient in general. The third method makes further simplifications which result in the simple analytic expression in Eq.~\eqref{eq:deltaSPBC}, whose constants are given in Eqs.~(\ref{eq:const_Cp},\ref{eq:directionalv_p},\ref{eq:deltaS_simple_final}). We emphasize that for any desired system, one may compute the LR bound and use it in the error formulas Eqs.~\eqref{eq:deltaS0tintro} and \eqref{eq:EB_PBC}. In this way, as LR bounds are refined in the future or extended to more general systems (e.g. long-range interacting~\cite{Lucas2019longrange,kuwahara2020strictly,tran2020hierarchy}, bosonic~\cite{schuch2011information}, or continuum ones), these refinements can immediately be used in the FSE bounds.

\subsection{FSEs in gapped non-degenerate ground states}
In this section we prove the exponential decay of FSE in gapped non-degenerate ground states under the assumption stated in the main text. 
Recall that $\hat{H}$ is the Hamiltonian in the thermodynamic limit, and $\hat{H}'=\hat{H}-\Delta\hat{H}$ is obtained from $\hat{H}$ by removing the boundary links. The interpolated Hamiltonian $\hat{H}(\lambda)=(1-\lambda)\hat{H}+\lambda\hat{H}'=\hat{H}-\lambda \Delta\hat{H}$ is assumed to have a non-degenerate ground state $|G(\lambda)\rangle$ and is uniformly gapped $\Delta(\lambda)\geq \Delta>0$, for all $\lambda\in [0,1]$. The basic idea is that,  as the parameter $\lambda$ is varied from 0 to 1, we keep track of how fast $|G(\lambda)\rangle$ changes, as well as $\langle \hat{S}_l\rangle_\lambda\equiv \langle G(\lambda)| \hat{S}_l|G(\lambda)\rangle$. Suppose $|G(\lambda)\rangle$ is properly normalized such that $\langle G(\lambda)|G(\lambda)\rangle=1$ and phase chosen such that $\langle G(\lambda)|\frac{d}{d\lambda}|G(\lambda)\rangle=0$ for any $\lambda\in [0,1]$. Then first order non-degenerate perturbation theory gives
\begin{equation}
	\frac{d}{d\lambda}|G(\lambda)\rangle=-\frac{\bar{P}_G(\lambda)}{E_0(\lambda)-\hat{H}(\lambda)}\Delta\hat{H}|G(\lambda)\rangle,
\end{equation}
where $E_0(\lambda)$ is the ground state energy of $\hat{H}(\lambda)$ and $\bar{P}_G(\lambda)\equiv \hat{\mathds{1}}-|G(\lambda)\rangle\langle G(\lambda)|$ is the projection operator to the space of excited states. Therefore, we have the integral formula for the FSE $\delta S_l$:
\begin{equation}\label{eq:deltaSX}
	\delta S_l\equiv \langle \hat{S}_l\rangle_1-\langle \hat{S}_l\rangle_0=-\int^1_0d\lambda[\langle G(\lambda)|\hat{S}_l\frac{\bar{P}_G(\lambda)}{E_0(\lambda)-\hat{H}(\lambda)}\Delta\hat{H}|G(\lambda)\rangle+\mathrm{H.c.}].
\end{equation}
The integrand  looks similar to  the ground state correlator between $\hat{S}_l$ and $\Delta\hat{H}$, so it's natural to guess that it should decay exponentially in a gapped system. This is  verified by the following theorem~(which is similar to the exponential clustering theorem in non-degenerate gapped ground states~\cite{hastings2006}):
\begin{theorem}
	Let $\hat{A}_X, \hat{B}_Y$ be arbitrary local observables with unit norm, supported on non-overlapping regions $X,Y$, respectively. Let $|G\rangle$ be the unique ground state of locally-interacting Hamiltonian $\hat{H}$ with spectral gap $\Delta$. Then the quantity $f_{XY}\equiv\langle G|\hat{A}_X\frac{\bar{P}_G}{E_0-\hat{H}}\hat{B}_Y|G\rangle+\mathrm{c.c.}$ is upper bounded by
	\begin{equation}\label{eq:WXY}
		|f_{XY}|\leq C\sqrt{d_{XY}}e^{-d_{XY}/\xi}, \text{   ~~~where~~~   } 
			\xi^{-1}= \begin{cases}
				\frac{\Delta}{2v}&\text{ if }\Delta\leq v\\
				\frac{1}{2}W(\frac{\Delta^2e}{v^2})&\text{ if }\Delta> v,
			\end{cases}
	\end{equation}
	where $v$ is the LR velocity, $W(x)$ is the Lambert-$W$ function, $C$ is a constant depending on $\hat{A}_X$ and $\hat{B}_Y$, and $d_{XY}$ is the distance between $X$ and $Y$. 
\end{theorem}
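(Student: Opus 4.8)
The plan is to bound $f_{XY}$ by relating it to a time integral of a Lieb--Robinson commutator, exploiting the spectral-gap assumption to make the time integral converge. First I would use the standard integral representation of the resolvent acting on the orthogonal complement of the ground state: since $\hat{H}|G\rangle = E_0|G\rangle$ and $\bar{P}_G$ projects onto states with energy $\geq E_0 + \Delta$, one can write $\frac{\bar{P}_G}{E_0-\hat{H}} = -i\int_0^\infty \bar{P}_G\, e^{-i(\hat{H}-E_0)t}\,dt$ in a suitable regularized sense (or, to make convergence manifest, first smear with a Gaussian in time of width $\sim 1/\Delta$, as in Hastings' proof of exponential clustering). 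Substituting this into $f_{XY}$ and using $\langle G|\hat{A}_X \bar{P}_G = \langle G|\hat{A}_X - \langle G|\hat{A}_X|G\rangle\langle G|$, I would rewrite $f_{XY}$ (up to the subtracted disconnected piece, which cancels in the connected combination or can be absorbed) as an integral over $t$ of $\langle G|\hat{A}_X(t)\hat{B}_Y|G\rangle$-type terms, where $\hat{A}_X(t) = e^{i\hat{H}t}\hat{A}_X e^{-i\hat{H}t}$.

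The second step is to replace $\hat{A}_X(t)$ by $[\hat{A}_X(t),\hat{B}_Y]$: because $X$ and $Y$ are disjoint, at $t=0$ the commutator vanishes, and the ground-state expectation of the connected correlator equals the expectation of the commutator up to terms controlled by $\|[\hat{A}_X(t),\hat{B}_Y]\|$. Then I would invoke the Lieb--Robinson bound $\|[\hat{A}_X(t),\hat{B}_Y]\| \leq C' \min(1, e^{-(d_{XY}-v|t|)/\xi_0})$ (the same LR input used throughout the paper). The integral over $t$ then splits into a short-time part $|t| \lesssim d_{XY}/v$, where the LR factor is essentially $e^{-d_{XY}/\xi_0}$ times something and the Gaussian/oscillatory weight gives a bounded contribution, and a long-time part $|t| \gtrsim d_{XY}/v$, where the LR bound is useless but the explicit $e^{-\Delta^2 t^2/2}$ (or $e^{-\Delta |t|}$) decay from the gap smearing kills the tail. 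Optimizing the crossover — balancing $e^{-(d_{XY}-vt)/\xi_0}$ against the gap-induced decay — is exactly what produces the stated $\xi^{-1}$: in the regime $\Delta \leq v$ the LR exponent is the binding constraint and one gets $\xi^{-1} = \Delta/2v$, while for $\Delta > v$ the optimization is transcendental and the solution is expressed through the Lambert-$W$ function, $\xi^{-1} = \tfrac12 W(\Delta^2 e/v^2)$. The $\sqrt{d_{XY}}$ prefactor arises from the width of the near-optimal region in $t$ (a Gaussian saddle-point / Laplace-type estimate).

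The main obstacle I anticipate is handling the resolvent $\frac{\bar P_G}{E_0-\hat H}$ rigorously: the naive $\int_0^\infty e^{-i(\hat H - E_0)t}dt$ does not converge absolutely, so one must regularize carefully — either by inserting a convergence factor $e^{-\epsilon t}$ and taking $\epsilon \to 0^+$, or (cleaner) by writing the resolvent as a Gaussian-weighted integral $\frac{\bar P_G}{E_0-\hat H} = \int_{-\infty}^{\infty} g_\alpha(t)\, \bar P_G\, e^{-i(\hat H-E_0)t}\,dt$ for an appropriate kernel $g_\alpha$ whose Fourier transform behaves like $1/\omega$ for $\omega \geq \Delta$, and then controlling the error from the low-frequency behavior using the gap. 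Getting this representation right, and tracking how its parameters feed into the final optimization over the crossover time, is the delicate bookkeeping; the LR bound and the saddle-point estimate are then routine. A secondary technicality is the constant $C$: one must check it depends only on $\|\hat A_X\|, \|\hat B_Y\|$ (here $=1$) and geometric factors like $|X|$, not on $d_{XY}$, which follows because all $d_{XY}$-dependence has been isolated into the exponential and the $\sqrt{d_{XY}}$ prefactor.
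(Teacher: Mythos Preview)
Your proposal is correct and follows essentially the same strategy as the paper: represent the resolvent via a Gaussian-weighted time integral, which splits $f_{XY}$ into a gap-suppressed piece $\sim e^{-\Delta^2/4\alpha}/\Delta$ and a time integral of $\langle G|[\hat A_X(t),\hat B_Y]|G\rangle$ bounded by the LR commutator; then optimize the Gaussian width $\alpha$ (the paper sets $\alpha=1/(\lambda r)$ and optimizes $\lambda$). The only notable difference is that the paper uses the factorial LR bound $\|[\hat A_X(t),\hat B_Y]\|\lesssim (vt/r)^{r+1}$ rather than the exponential form you wrote, which is what makes the optimization in the $\Delta>v$ case transcendental and produces the Lambert-$W$ expression; with the exponential LR form you would get a slightly different (and generally looser) $\xi$.
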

\begin{proof}
	 Using the identity 
	\begin{eqnarray}\label{eq:GaussFourier}
		&&\int^\infty_{-\infty}e^{-\alpha T^2}dT\int^T_0dt \langle G|[\hat{A}_X(t),\hat{B}_Y]|G\rangle\nonumber\\
		&=&i\int^\infty_{-\infty}dT e^{-\alpha T^2} \{\langle G|\hat{A}_X\frac{1-e^{i(E_0-\hat{H})T}}{E_0-\hat{H}}\hat{B}_Y|G\rangle+\mathrm{H.c.}\}\nonumber\\
		&=&i\sqrt{\frac{\pi}{\alpha}}\{\langle G|\hat{A}_X\frac{\bar{P}_G}{E_0-\hat{H}}[1-e^{-\frac{(E_0-\hat{H})^2}{4\alpha}}]\hat{B}_Y|G\rangle+\mathrm{H.c.}\},
	\end{eqnarray}
	where $\alpha>0$ is a parameter to be specified later, we have
	\begin{eqnarray}
		f_{XY}&=&
		\{\langle G|\hat{A}_X\frac{\bar{P}_G}{E_0-\hat{H}}e^{-\frac{(E_0-\hat{H})^2}{4\alpha}}\hat{B}_Y|G\rangle+\mathrm{H.c.}\}\\
		&&-i\sqrt{\frac{\alpha}{\pi}}\int^\infty_{-\infty}e^{-\alpha T^2}dT\int^T_0dt \langle G|[\hat{A}_X(t),\hat{B}_Y]|G\rangle.\nonumber
	\end{eqnarray}
	Therefore
	\begin{eqnarray}\label{eq:proofofthm1}
		|f_{XY}|&\leq&2\frac{e^{-\frac{\Delta^2}{4\alpha}}}{\Delta}
		+\sqrt{\frac{\alpha}{\pi}}\int^\infty_{-\infty}e^{-\alpha T^2}dT\int^T_0dt \|[\hat{A}_X(t),\hat{B}_Y]\|\nonumber\\
		&\leq&2\frac{e^{-\frac{\Delta^2}{4\alpha}}}{\Delta}
		+2\sqrt{\frac{\alpha}{\pi}}\int^\infty_{0}e^{-\alpha T^2}[C\left(\frac{vT}{r}\right)^{r+1}_{\leq 1}+2(T-r/v)_{\geq 0}]dT\nonumber\\
		&\leq&2\frac{e^{-\frac{\Delta^2}{4\alpha}}}{\Delta}
		+2\sqrt{\frac{\alpha}{\pi}}\int^{r/v}_{0}e^{-\alpha T^2}C\left(\frac{vT}{r}\right)^{r+1}dT
		 +2\sqrt{\frac{\alpha}{\pi}}\int^\infty_{r/v}e^{-\alpha T^2}[C+2(T-r/v)]dT\nonumber\\
		&\leq&\frac{2}{\Delta}e^{-\frac{\Delta^2}{4\alpha}}
		+2C\sqrt{\frac{\alpha}{\pi}}\frac{r}{v}\max_{0\leq T\leq r/v}e^{-\alpha T^2}\left(\frac{vT}{r}\right)^{r+1}+C_3 e^{-\alpha r^2/v^2}\nonumber\\
		&=&\frac{2}{\Delta}e^{-\frac{\lambda\Delta^2 r}{4}}+C_2\max_{0\leq \tau\leq 1/v}e^{-\tau^2r/\lambda}\left(v\tau\right)^{r+1}+C_3 e^{-\frac{r}{\lambda v^2}},
	\end{eqnarray}
	where we use the notation $(x)_{\leq 1}=\min\{x,1\},(x)_{\geq 0}=\max\{x,0\}$, $C$ is a constant coming from the LR bound that does not depend on $r,\alpha$, $C_2,C_3$ are coefficients that only weakly depend on $r,\alpha$, and in the last line we substituted  $T=\tau r$ and $\alpha=1/(\lambda r)$. 
	
	Notice that Eq.~\eqref{eq:proofofthm1} holds for arbitrary positive $\lambda$, since the parameter $\alpha>0$ introduced in Eq.~\eqref{eq:GaussFourier} can be chosen arbitrarily. If we choose $\lambda$ to be any positive value, we can immediately prove the exponential decay of $|f_{XY}|$ in $d_{XY}$ which leads to Eq.~\eqref{eq:initialstateassumption} in the main text, since all the three terms in the rhs of Eq.~\eqref{eq:proofofthm1} decay exponentially in $r$. If we want a better bound, we can choose $\lambda$ to maximize the smallest decay coefficient $\min\{\lambda\Delta^2 /4,\min_{1\leq \tau\leq 1/v}[\tau^2/\lambda-\ln(v\tau)],1/(\lambda v^2)\}$,  to make the rhs of Eq.~\eqref{eq:proofofthm1} decay in $r$ as fast as possible.  For $\Delta\leq v$ we choose $\lambda=2/(\Delta v)$ and the maximum is at $\tau=1/v$, while
	for $\Delta>v$, we choose $\lambda$ to be the solution to the equation $\lambda \Delta^2/2+\ln(\lambda v^2/2)=1,$ and the maximum occurs at $\tau=\sqrt{\lambda/2}$. In the end we arrive at Eq.~\eqref{eq:WXY}. 
	This finishes the proof of our theorem.
\end{proof}

Inserting Eq.~\eqref{eq:WXY} into Eq.~\eqref{eq:deltaSX}, we get
\begin{equation}\label{eq:deltaSXfin}
	|\delta S_l|=|\mathrm{Tr}[\hat{\rho}\hat{S}_l-\hat{\rho}_L\hat{S}_l]|\leq C e^{-(L-l)/2\xi'},
\end{equation}
where $\xi'$ is chosen to be slightly larger than the $\xi$ in Eq.~\eqref{eq:WXY} to compensate the $\sqrt{d_{XY}}$ prefactor, and the constant $C$ is adjusted accordingly. This proves Eq.~\eqref{eq:initialstateassumption} in the main text. 

\subsection{Proof of Eq.~\eqref{eq:deltaS_correlatedrho_res}: bound for dynamics initiated from a correlated initial state}
We assume $2vt<L$, since these are the only times we will apply the FSE bounds; at longer times the error bounds become too large to be useful. We have 
\begin{eqnarray}
	|\mathrm{Tr}[(\hat{\rho}_L-\hat{\rho}_{[L]})\hat{S}_L(t)]|&=&|\mathrm{Tr}\{(\hat{\rho}_L-\hat{\rho}_{[L]})\sum^L_{l=0}[\hat{S}_l(t)-\hat{S}_{l-1}(t)]\}|\nonumber\\
&\leq&\sum^L_{l=0}\|\hat{S}_l(t)-\hat{S}_{l-1}(t)\||\mathrm{Tr}[(\hat{\rho}_L-\hat{\rho}_{[L]})\hat{O}_l]|\nonumber\\
&\leq&  C_1\sum^{\floor{2vt}}_{l=0}  e^{-(L-l)/2\xi} +C_2\sum^L_{l=\ceil{2vt}} (2vt/l)^{\eta l/2} e^{-(L-l)/2\xi} \nonumber\\
&\leq & C'_1  e^{(vt-L/2)/\xi}+C_2\sum^L_{l=\ceil{2vt}} e^{\eta(vt-l/2)} e^{-(L-l)/2\xi} \nonumber\\
&= & C'_1  e^{(vt-L/2)/\xi}+C'_2  e^{\eta(vt-L/2)},
\end{eqnarray}
where $\hat{O}_l$ is a unit norm operator that only acts nontrivially in $X_l$, and $\floor{x},\ceil{x}$ are the floor and ceiling of $x$, respectively. The first term in the third line comes from the trivial bound $\|\hat{S}_l(t)-\hat{S}_{l-1}(t)\|\leq 2\|\hat{S}\|$, while the second term comes from the LR bound $\|\hat{S}_l(t)-\hat{S}_{l-1}(t)\|\leq C(2vt/l)^{\eta l/2}$. In the fourth line we use the inequality $(2vt/l)^{\eta l/2}\leq e^{\eta(vt-l/2)}$ to facilitate the calculation, and we sum the geometric series in the last two lines, with constants $C'_1, C'_2$ depending at most weakly on $t,L$.
Combined with the second term of Eq.~\eqref{eq:deltaS_correlatedrho}, we obtain  Eq.~\eqref{eq:deltaS_correlatedrho_res}. One can explicitly compute as needed the constants appearing in this bound for a given model and initial state $\hat{\rho}$.

\subsection{Comparison of FSE bounds to perturbation theory at early times}
We can gain insights into the accuracy of our error bounds by comparing them to the true FSE at lowest order in time. We can do this analytically using the Taylor series expansion of the true FSE. Recall from  Eq.~\eqref{eq:deltadef} that FSE is defined as
\begin{equation}\label{eq:deltadef_S}
\delta \langle \hat{S}(t)\rangle_{\psi} \equiv|\langle e^{i\hat{H}_L t} \hat{S} e^{-i\hat{H}_L t}\rangle_{\psi_L}-\langle e^{i\hat{H}t}\hat{S} e^{-i\hat{H}t} \rangle_{\psi}|,
\end{equation}
where $\hat{H}_L$ is the $L$-site finite Hamiltonian and $\psi_L$ is the initial product state restricted to this finite chain.  The Taylor series expansion of   the time evolved operators can be expressed as sums of Lie clusters~(nested commutators) involving $\hat{S}$ and terms in the Hamiltonian~[see, e.g. Eq.~\eqref{eq:CBH}]. Many small clusters appear in both $\langle e^{i\hat{H}_L t} \hat{S} e^{-i\hat{H}_L t}\rangle_{\psi_L}$ and $\langle e^{i\hat{H}t}\hat{S} e^{-i\hat{H}t} \rangle_{\psi}$ and therefore cancel each other. Only those clusters whose spatial length is at least half of the system size may contribute to FSE.
In the following we treat OBC and PBC separately, starting with OBC. 

In OBC, FSE is due to those nested commutators in $\langle e^{-i\hat{H}t} \hat{S} e^{-i\hat{H}t}\rangle$ in which boundary terms appear at least once, since only such clusters are not canceled by any cluster in $\langle e^{-i\hat{H}_L t} \hat{S} e^{-i\hat{H}_Lt}\rangle$. The lowest order cluster containing at least a boundary term is proportional to $\langle \hat{C}(\hat{S},\hat{V}_j) \rangle_\psi t^{D(\hat{S},\hat{V}_j)}/D(\hat{S},\hat{V}_j)!$, where $\hat{V}_j$ the boundary term closest to $\hat{S}$, and $\hat{C}(\hat{S},\hat{V}_j)$ is the shortest nested commutator joining $\hat{S}$ and $\hat{V}_j$. Assuming that the expectation value $\langle \hat{C}(\hat{S},\hat{V}_j)\rangle_\psi$ does not vanish~(which is true for most initial product states $|\psi\rangle$), the true FSE has the same $t$ dependence as the bound Eq.~\eqref{eq:factorialLR}.

In PBC there is a qualitative difference. Any term in the Taylor expansion of $\langle e^{i\hat{H}t}\hat{S} e^{-i\hat{H}t} \rangle_{\psi}$ in Eq.~\eqref{eq:deltadef_S} whose spatial span is smaller than $L$ do not contribute to FSE, because they cancel the corresponding terms in  $\langle e^{i\hat{H}_L t} \hat{S} e^{-i\hat{H}_L t}\rangle_{\psi_L}$ due to translation invariance and the product nature of the initial state, as shown in Fig.~\ref{fig:lucasTFIMPBC}. Only those terms in $\langle e^{i\hat{H}t}\hat{S} e^{-i\hat{H}t} \rangle_{\psi}$ which are too long to be embeddable into an $L$-site periodic system can contribute. More precisely, in PBC, the rhs of Eq.~\eqref{eq:deltadef_S} is contributed by the following two classes of terms:\\
(1) terms in $\langle e^{i\hat{H}t}\hat{S} e^{-i\hat{H}t} \rangle_{\psi}$ whose spatial span is larger than $L$, i.e. terms that are too long to be embeddable in the $L$-site PBC chain;\\
(2) terms in $\langle e^{i\hat{H}_L t} \hat{S} e^{-i\hat{H}_L t}\rangle_{\psi_L}$ which wrap around the whole periodic system~(wraps around the torus, or the circle in $d=1$). Such terms do not generally cancel any terms in $\langle e^{i\hat{H}t}\hat{S} e^{-i\hat{H}t} \rangle_{\psi}$.\\
The leading order term in both classes are proportional to $t^{\mathcal L}$, where $\mathcal{L}$ is the number of Hamiltonian terms in the smallest $L$-site-unembeddable cluster starting from $\hat{S}$. This $t^{\mathcal L}$ behavior is the same as the improved PBC bound in Eq.~\eqref{eq:deltaSPBC}, and the exponent $\mathcal{L}$ is roughly twice as large as the OBC exponent $D(\hat{S},\hat{V}_j)$. In the 
next section, we give a rigorous bound for the sum of all terms in each class.
\begin{figure}[h]
	\center{\includegraphics[width=0.5\linewidth]{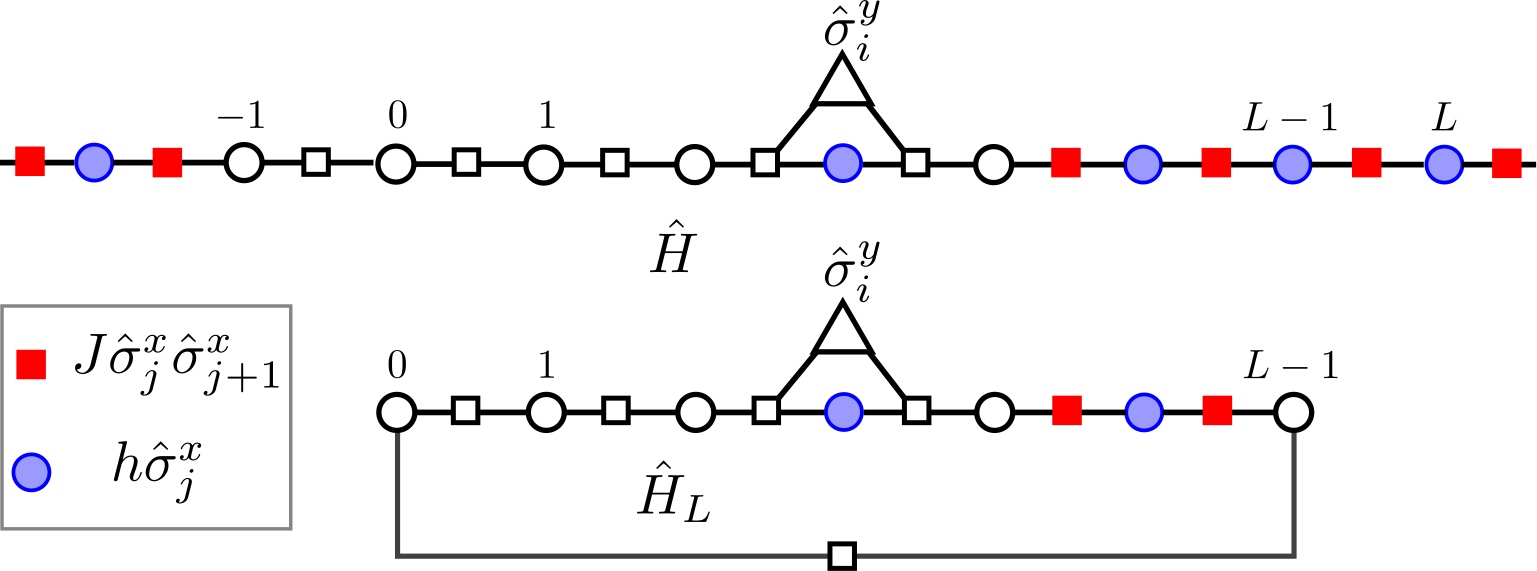}}
	\caption{\label{fig:lucasTFIMPBC} An example of canceling terms in  the Taylor expansions of $e^{i\hat{H}t}\hat{\sigma}^y_i e^{-i\hat{H}t}$ and $e^{i\hat{H}_L t} \hat{\sigma}^y_i e^{-i\hat{H}_L t}$, in the case of 1D TFIM. Vertices represent terms ${\hat h}_j$ of the Hamiltonian $\hat{H}=\sum_j {\hat h}_j$, as well as the observable $\hat{S}=\hat{\sigma}_i^y$, and edges are drawn between terms that do not commute.
		The spatial span of the two Lie clusters shown in this figure~(cluster of open circles, squares, and triangles) are equal and shorter than the system size. The expectation values of these two terms exactly cancel due to translation invariance and the initial state being a product state.}
\end{figure}

\section{Improved bound for PBC}%
	
	For simplicity, we first focus on the 1D case, and later we show that a bound in higher dimension can be obtained by repeatedly using the 1D bound.
	Consider a translation invariant quantum system on a 1D periodic lattice with $L$ unit cells, described by the Hamiltonian $\hat{H}_L$, and let $\hat{H}=\hat{H}_{L\to\infty}$ be the Hamiltonian in the thermodynamic limit. 
	We will derive an upper bound for the sum of all terms in $e^{-i\hat{H}t} \hat{S}e^{-i\hat{H}t}$ that may contribute to the FSE (i.e. whose spatial span is larger than $L$), using ideas motivated by Ref.~\cite{Lucas2019operator}, and then do the same for $e^{-i\hat{H}_Lt}\hat{S}e^{-i\hat{H}_Lt}$.

	Let us focus on the first class of terms, i.e. terms in $ e^{i\hat{H}t}\hat{S} e^{-i\hat{H}t} $ whose spatial span is larger than $L$, since the second class can be treated in an identical way. We write the Hamiltonian in the thermodynamic limit as 
	\begin{equation}\label{eq:H_TD}
	\hat{H}=\sum_{j\in G} \hat{h}_{j},
	\end{equation}
	where $\hat{h}_{j}$ denotes a local term in $\hat{H}$. It is convenient to introduce the notion of the commutativity graph $G$, defined in Ref.~\cite{wang:tightening_2019}. This is a graph whose vertices $j$ are associated with $\hat{h}_j$ and which has edges from $j$ to $j'$ if and only if $\hat{h}_j$ and $\hat{h}_{j'}$ do not commute. The observable $\hat{S}$ is represented as an external vertex $s$ on $G$, linked to all the vertices $j$ whose $\hat{h}_j$ do not commute with $\hat{S}$, as shown in Fig.~\ref{fig:lucasTFIMPBC}. Now we write down the Taylor expansion of $e^{i\hat{H}t}\hat{S} e^{-i\hat{H}t} $. We use bold letters $\mathbf{h}_j=\mathrm{ad}_{\hat{h}_j}$ to denote the adjoint of the corresponding operator $\hat{h}_j$, e.g. $\mathbf{h}_j(\hat{S})\equiv [\hat{h}_j,\hat{S}]$. We have
	\begin{eqnarray}\label{eq:CBH}
	e^{i\hat{H}t}\hat{S} e^{-i\hat{H}t}&=&e^{i\mathbf{H}t}(\hat{S})\nonumber\\
	&=&\sum^\infty_{n=0}\frac{(it)^n}{n!}\sum_{\substack{j_1,\ldots,j_n\in G\\ T(s,j_1,j_2,\ldots,j_n)\in \mathcal{T}_s}}\mathbf{h}_{j_n}\ldots \mathbf{h}_{j_2} \mathbf{h}_{j_1} \hat{S},
	\end{eqnarray}
	where $T(s,j_1,j_2,\ldots,j_n)$ denotes the causal forest of the sequence $(s,j_1,j_2,\ldots,j_n)$, as defined in Ref.~\cite{Lucas2019operator}, and $\mathcal{T}_s$ denotes the set of causal trees starting from the vertex $s$. In simple terms, we are summing over all the non-vanishing connected Lie clusters on $G$ starting from the vertex $s$.
	
	Our goal is to upper bound the sum over all the terms in Eq.~\eqref{eq:CBH} whose spatial span is larger than $L$. For such a Lie cluster, let $j_i$ be the first term in the sequence $j_1,\ldots,j_n$ such that the spatial span of the subsequence $\mathbf{h}_{j_i}\ldots  \mathbf{h}_{j_1} \hat{S}$ is larger than $L$. This means that the spatial span of $\mathbf{h}_{j_{i-1}}\ldots  \mathbf{h}_{j_1} \hat{S}$ is less than or equal to $L$. We call $j_i$ the \textit{earliest unembeddable vertex}~(EUV) of the sequence $(s,j_1,j_2,\ldots,j_n)$. 
	Notice that $\hat{h}_{j_i}$ must act nontrivially on at least two unit cells, because otherwise $j_i$ can never be the EUV of any sequence.
	
	The basic idea is to classify the Lie clusters in Eq.~\eqref{eq:CBH} into different families, with each family having the same EUV, and derive an upper bound for the sum over all terms within each family.
	To this end, let us denote by $[e^{i\mathbf{H}t}(\hat{S})]_k$ the sum of all the $L$-site unembeddable terms in the rhs of Eq.~\eqref{eq:CBH} whose EUV is $k$, i.e.
	\begin{equation}
	[e^{i\mathbf{H}t}(\hat{S})]_k\equiv\sum_{\substack{n\geq 0,\\T(s,j_1,j_2,\ldots,j_n)\in \mathcal{T}_s,\\ \text{EUV}(s,j_1,j_2,\ldots,j_n)=k}}\frac{(it)^n}{n!} \mathbf{h}_{j_n}\ldots \mathbf{h}_{j_2} \mathbf{h}_{j_1} \hat{S}.
	\end{equation}
	
	We limit our explicit proof to the nearest neighbor interacting case in which every term $\hat{h}_j$ in the Hamiltonian acts non-trivially on at most two neighboring unit cells. The proof for the general case is the same as for the nearest-neighbor interacting case, but involves keeping track of more complicated notation. Therefore for the general case we omit the proof and present only the final result. Returning to the nearest-neighbor interacting case, for an arbitrary operator $\hat{A}$, let $x^{\text{L}}_A, x^{\text{R}}_A$ denote the $x$-coordinates of the leftmost and rightmost unit cells on which the operator $\hat{A}$ acts, and let $n_A\equiv x^{\text{R}}_A-x^{\text{L}}_A+1$ denote the spatial span of $\hat{A}$~(we write $x^{\text{L,R}}_k,n_k$ for the operator $\hat{h}_k$ for simplicity). In the nearest neighboring interacting case, $n_k=2$ if $k$ is an EUV.  
	Without loss of generality, let us suppose that $x^{\text{L}}_k<x^{\text{L}}_S$~(the other case $x^{\text{R}}_k>x^{\text{R}}_S$ can be treated similarly). 
	In this case we need to have $x^{\text{R}}_S\leq x^{\text{R}}_k+L-1$, because otherwise the cluster is already unembeddable before $\hat{h}_k$ is attached, contradicting the assumption that $k$ is the EUV.
	Then we have the following theorem:
	\begin{theorem}\label{thm:sum_EUV}
		\begin{equation}\label{eq:sum_EUV_NN}
		[e^{i\mathbf{H}t}(\hat{S})]_k=i\int^t_0\! dt'\, e^{i\mathbf{H}(t-t')} \mathbf{h}_{k} (e^{i\mathbf{H}'_{k}t'}\hat{S} -e^{i\mathbf{H}_{k}t'}\hat{S}),
		\end{equation}
		where 
		\begin{equation}\label{def:H_kalpha_NN}
		\hat{H}'_{k}=\hat{H}_{[x^{\text{L}}_k+1,\ldots, x^{\text{L}}_k+L]},~~~~ \hat{H}_{k}=\hat{H}_{[x^{\text{L}}_k+1,\ldots, x^{\text{L}}_k+L-1]},
		\end{equation}
		where $\hat{H}_{[x^{\text{L}}_k+1,\ldots, x^{\text{L}}_k+L]}$ denotes the truncation of $\hat{H}$ to sites $[x^{\text{L}}_k+1,\ldots, x^{\text{L}}_k+L]$, and similarly for $\hat{H}_{[x^{\text{L}}_k+1,\ldots, x^{\text{L}}_k+L-1]}$.
	\end{theorem}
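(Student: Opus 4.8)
\emph{Proof strategy.} The right-hand side has the shape of a Duhamel / variation-of-constants formula, which is the clue: both sides are operator-valued functions of $t$ built from the convergent cluster expansion \eqref{eq:CBH} (and its analog), so it suffices to check that they satisfy the same linear first-order differential equation in $t$ with the same value at $t=0$, and invoke uniqueness. Write $F(t)\equiv[e^{i\mathbf{H}t}(\hat S)]_k$ for the left-hand side and $G(t)$ for the right-hand side. For $G$ this is immediate: differentiating under the integral and using $\partial_t e^{i\mathbf{H}(t-t')}(\cdot)=i\mathbf{H}\big(e^{i\mathbf{H}(t-t')}(\cdot)\big)$ gives $\dot G(t)=i\mathbf{H}\big(G(t)\big)+i\,\mathbf{h}_k\big(e^{i\mathbf{H}'_k t}\hat S-e^{i\mathbf{H}_k t}\hat S\big)$, with $G(0)=0$. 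The work is to derive the same equation for $F$.

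Since $F$ is the subseries of \eqref{eq:CBH} over causal trees whose earliest unembeddable vertex (EUV) is $k$, I would first record two elementary facts. First, appending one Hamiltonian vertex $j$ to a causal tree $(s,j_1,\dots,j_n)$ again yields a causal tree whenever $\mathbf{h}_j\mathbf{h}_{j_n}\cdots\mathbf{h}_{j_1}\hat S\neq 0$ (otherwise $\hat h_j$ commutes with $\hat S$ and with every $\hat h_{j_m}$, and repeated use of the Jacobi identity makes the term vanish), and deleting the last vertex of a causal tree gives a causal tree; second, appending a vertex never changes the EUV, because the spatial span of $\mathbf{h}_{j_m}\cdots\hat S$ is nondecreasing in $m$. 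Consequently every EUV-$k$ causal tree either arises by appending a vertex to a shorter EUV-$k$ tree, or has its final vertex equal to the EUV $k$ itself. Differentiating $F$ term by term, the first class resums (using $\sum_j[\hat h_j,\cdot]=\mathbf H$) into $i\mathbf{H}\big(F(t)\big)$, and the second class is $i\,\mathbf h_k$ applied to the generating series $\mathcal C_k(t)$ of those causal trees $C$ on $\hat S$ that are ``one vertex short of being unembeddable through $k$,'' i.e. $\mathrm{span}(C)\le L$ but $\mathrm{span}\big(\mathbf h_k(C)\big)>L$. In the case $x^{\mathrm L}_k<x^{\mathrm L}_S$ with nearest-neighbor interactions, a short support count identifies these with exactly the clusters on $\hat S$ whose support is the full $L$-site window $W_k=[x^{\mathrm L}_k+1,x^{\mathrm L}_k+L]$: spans are nondecreasing, and the two-site term $\hat h_k$ can only enlarge the support leftward, by the single cell $x^{\mathrm L}_k$, so it pushes the span above $L$ precisely when the cluster already fills $W_k$. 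This yields $\dot F=i\mathbf H\big(F\big)+i\,\mathbf h_k\big(\mathcal C_k(t)\big)$ with $F(0)=0$.

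It then remains to match $\mathbf h_k\big(\mathcal C_k(t)\big)=\mathbf h_k\big(e^{i\mathbf H'_k t}\hat S-e^{i\mathbf H_k t}\hat S\big)$. Because every term of $\hat H'_k=\hat H_{W_k}$ is supported inside $W_k$, the series $e^{i\mathbf H'_k t}(\hat S)$ is precisely the generating series of all clusters on $\hat S$ supported in $W_k$; subtracting $e^{i\mathbf H_k t}(\hat S)$, with $\hat H_k$ the truncation of $\hat H$ to $W_k$ minus its rightmost cell, removes those clusters that never touch cell $x^{\mathrm L}_k+L$, leaving the ones in $W_k$ that reach the right edge. Finally $\mathbf h_k=[\hat h_k,\cdot]$ annihilates any operator not acting on cell $x^{\mathrm L}_k+1$, so applying it further restricts to clusters reaching both edges of $W_k$, i.e. to support exactly $W_k$ — which is $\mathbf h_k\big(\mathcal C_k(t)\big)$. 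Hence $F$ and $G$ solve the same linear ODE from the same initial value and must coincide, which is \eqref{eq:sum_EUV_NN}. The case $x^{\mathrm R}_k>x^{\mathrm R}_S$ is the mirror image, and for general local (beyond nearest-neighbor) interactions one replaces $W_k$ and the single boundary term by a somewhat heavier window/boundary geometry, which is why the excerpt presents only the final formula in that case.

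The main obstacle is the middle step: making rigorous that the causal-tree structure of the cluster expansion factorizes cleanly ``at the EUV,'' so that $\dot F$ splits exactly as $i\mathbf H(F)$ plus the source $i\,\mathbf h_k\mathcal C_k$, and carefully delimiting which embeddable clusters become unembeddable the moment $\hat h_k$ is attached — in particular the borderline configurations in which $\hat S$ by itself already reaches an edge of $W_k$, which have to be reconciled with exactly which terms occur in $\hat H'_k-\hat H_k$. The remaining ingredients — the differentiation of $G$, the ODE uniqueness, and the dictionary between truncated-Hamiltonian evolutions and generating series of clusters — are routine once that combinatorial bookkeeping is settled.
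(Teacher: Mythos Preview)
Your proposal is correct and its substance matches the paper's own argument: the paper proves the identity by Taylor expanding both sides and comparing terms (citing the causal-tree machinery of Chen--Lucas), and its intuitive explanation is exactly your identification of $\mathbf{h}_k\big(e^{i\mathbf{H}'_k t}\hat S-e^{i\mathbf{H}_k t}\hat S\big)$ with the clusters that fill the window $W_k=[x^{\mathrm L}_k+1,x^{\mathrm L}_k+L]$ and hence become unembeddable the instant $\hat h_k$ is attached. Your packaging via an ODE/uniqueness (Duhamel) argument is a minor reformulation of the paper's ``expand and match'' sketch---differentiating the Taylor identity term by term---and relies on precisely the same EUV decomposition (either the last vertex is the EUV $k$, or it is appended to a shorter EUV-$k$ tree), so the two routes are equivalent.
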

	 Thm.~\eqref{thm:sum_EUV} can be proved by Taylor expanding both sides and explicitly comparing terms, using the same spirit as the proof of Lemma 5 and Lemma 6 in Ref.~\cite{Lucas2019operator}. Intuitively, $e^{i\mathbf{H}'_{k}t'}\hat{S}$ gives the sum of all terms in the Taylor expansion of $e^{i\mathbf{H}t'}\hat{S}$ that act inside the region $[x^{\text{L}}_k+1,\ldots, x^{\text{L}}_k+L]$, so $(e^{i\mathbf{H}'_{k}t'}\hat{S} -e^{i\mathbf{H}_{k}t'}\hat{S})$ gives the sum of all terms in  $e^{i\mathbf{H}t'}\hat{S}$ that act inside the region $[x^{\text{L}}_k+1,\ldots, x^{\text{L}}_k+L]$ and act non-trivially on $x^{\text{L}}_k+L$, since those terms which do not act on $x^{\text{L}}_k+L$ are canceled by $-e^{i\mathbf{H}_{k}t'}\hat{S}$. Further, only those terms in $(e^{i\mathbf{H}'_{k}t'}\hat{S} -e^{i\mathbf{H}_{k}t'}\hat{S})$ that act non-trivially on $x^{\text{R}}_k=x^{\text{L}}_k+1$ can survive the commutation with $\hat{h}_k$ in the rhs of Eq.~\eqref{eq:sum_EUV_NN}. In short, the surviving terms in $(e^{i\mathbf{H}'_{k}t'}\hat{S} -e^{i\mathbf{H}_{k}t'}\hat{S})$ are those terms that span exactly $L$ unit cells $[x^{\text{R}}_k,\ldots, x^{\text{L}}_k+L]$, which are the terms in $e^{i\mathbf{H}t'}\hat{S}$ that are $L$-site embeddable but become unembeddable immediately after attaching $\mathbf{h}_{k}$. Therefore the rhs of Eq.~\eqref{eq:sum_EUV_NN} gives the sum of all Lie clusters in $e^{i\mathbf{H}t}(\hat{S})$ with $k$ being the EUV, since the action of $e^{i\mathbf{H}(t-t')}$ happens at a time later than $t'$ and can not change the earliestness of $\hat{h}_k$. [The last sentence can be better understood by  noticing that $i\int^t_0\! dt'\, e^{i\mathbf{H}(t-t')} \mathbf{h}_{k} e^{i(\mathbf{H}-\mathbf{h}_k)t'}\hat{S}$ simply gives the sum of all terms in $e^{i\mathbf{H}t}\hat{S}$ in which $\hat{h}_k$ appears at least once, with the $\hat{h}_k$ at time $t'$ being the earliest appearance. Therefore it is natural to expect that $i\int^t_0\! dt'\, e^{i\mathbf{H}(t-t')} \mathbf{h}_{k} (e^{i\mathbf{H}'_{k}t'}\hat{S} -e^{i\mathbf{H}_{k}t'}\hat{S})$ gives a subset of the terms in $i\int^t_0\! dt'\, e^{i\mathbf{H}(t-t')} \mathbf{h}_{k} e^{i(\mathbf{H}-\mathbf{h}_k)t'}\hat{S}$, whose EUV is $k$.]

	For more general locally interacting Hamiltonians which involve terms that act non-trivially on more than 2 neighboring unit cells,  Thm.~\eqref{thm:sum_EUV} is generalized to 
	\begin{equation}\label{eq:sum_EUV}
	[e^{i\mathbf{H}t}(\hat{S})]_k=i\sum^{n_k-1}_{\alpha=1}\int^t_0\! dt'\, e^{i\mathbf{H}(t-t')} \mathbf{h}_{k} (e^{i\mathbf{H}'_{k,\alpha}t'}\hat{S} -e^{i\mathbf{H}_{k,\alpha}t'}\hat{S}),
	\end{equation}
	where 
	\begin{equation}\label{def:H_kalpha}
	\hat{H}'_{k,\alpha}=\hat{H}_{[x^{\text{L}}_k+\alpha,\ldots, x^{\text{L}}_k+\alpha+L-1]},~~~~ \hat{H}_{k,\alpha}=\hat{H}_{[x^{\text{L}}_k+\alpha,\ldots, x^{\text{L}}_k+\alpha+L-2]}.
	\end{equation}

	We can get an upper bound for the operator norm of the rhs of Eq.~\eqref{eq:sum_EUV} using  the triangle inequality and unitary invariance of operator norm. The result is 
	\begin{equation}\label{eq:EB_PBC_S1}
	\|e^{i\mathbf{H}t}\hat{S} |_{\text{span}> L}\|\leq \sum_{k,\alpha}\int^t_0\! dt'\, \|[\hat{h}_k, e^{i\mathbf{H}'_{k,\alpha}t'}(\hat{S}) -e^{i\mathbf{H}_{k,\alpha} t'}(\hat{S}) ]\|.
	\end{equation}
	The second class of terms~[those arising from $e^{i\mathbf{H}_L t}(\hat{S})$] can be bounded in a similar way, and it turns out that the resulting upper bound for the second class  is identical to the first class, Eq.~\eqref{eq:EB_PBC_S1}. Adding up the two classes and taking $n_k=2$, we get  
\begin{equation}\label{eq:EB_PBC}
|\delta \langle \hat{S}(t)\rangle|\leq 
2\sum_{k,\alpha}\int^t_0\! dt'\, \|[\hat{h}_k, e^{i\mathbf{H}'_{k,\alpha}t'}(\hat{S}) -e^{i\mathbf{H}_{k,\alpha} t'}(\hat{S}) ]\|.
\end{equation}

	In the following we give two different approaches to upper bound the commutator in the rhs of Eq.~\eqref{eq:EB_PBC_S1}. The first one is based on a combination of the methods in Ref.~\cite{Lucas2019operator,wang:tightening_2019}. This method leads to the tightest bound but has a high computational cost~(potentially exponential in the system size) except in a few simple cases. 
	The  second one is based on the differential equation method in Ref.~\cite{wang:tightening_2019}, which is slightly looser than the first one, but is much easier to compute, and can also lead to a simple analytic upper bound such as Eq.~\eqref{eq:deltaSPBC}.

	\section{Chen-Lucas bound for LR commutators}
	In this section we present the tightest LR bounds for locally interacting systems, obtained by applying the method in Ref.~\cite{Lucas2019operator} to the  commutativity graph introduced in Ref.~\cite{wang:tightening_2019}. 
	The goal is to upper bound $\|[\hat{h}_j,\hat{S}(t)]\|$ which appears in the simple bound in Eq.~\eqref{eq:deltaS0tintro} and $\|[\hat{h}_k, e^{i\mathbf{H}'_{k} t'}(\hat{S}) -e^{i\mathbf{H}_{k} t'}(\hat{S}) ]\|$ which appears in the improved PBC bound in Eq.~\eqref{eq:EB_PBC_S1}. We begin with the first one. The bound for the second one is a generalization of the first one and is qualitatively smaller at early times. For simplicity we focus on 1D in this section. 
	
	\begin{figure}
		\includegraphics[width=0.4\linewidth]{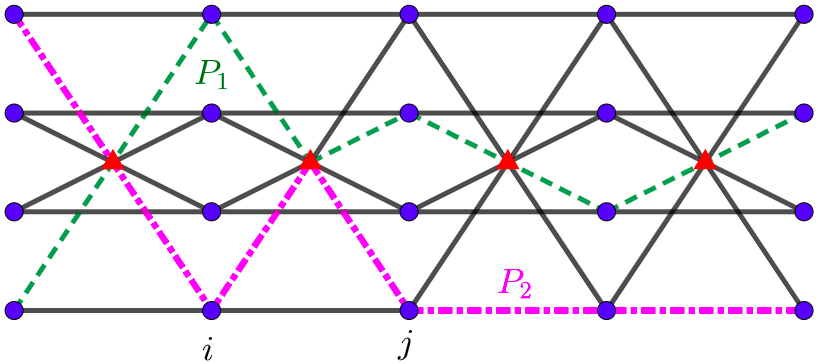}
		\includegraphics[width=0.42\linewidth]{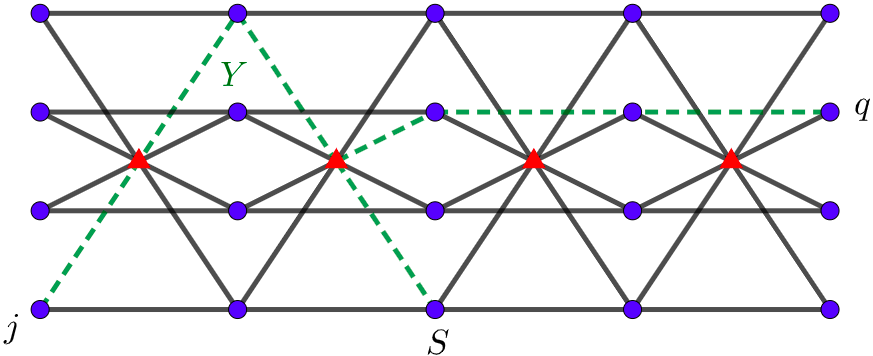}
		\caption{\label{fig:irredpath-FH} Examples of irreducible path and $Y$-shape on the commutativity graph of the FHM.  (Left) The dark green, dashed path $P_1$ is irreducible, while the pink, dot-dashed path $P_2$ is reducible since the pair $(i,j)$ is not consecutive in $P_2$ but $(i,j)\in G$. (Right) Example of an irreducible $Y$-shape.}
	\end{figure}
	The bound for $\|[\hat{h}_j,\hat{S}(t)]\|$ is obtained by generalizing Thm.~4 of Ref.~\cite{Lucas2019operator} to the commutativity graph $G$. The result is~(we present the time integrated version for simplicity)
	\begin{eqnarray}\label{eq:Lucas_bound_commu}
	\int^t_0 \|[\hat{h}_j,\hat{S}(t')]\|dt'\leq  \|\hat{S}\|\sum_{P\in\mathcal{P}_{jS}} \frac{(2t)^{n(P)-1}}{[n(P)-1]!}\prod_{i\in P,i\ne s} h_i,
	\end{eqnarray}
	where $\mathcal{P}_{jS}$ is the set of all irreducible paths on $G$ from $S$ to $j$, $n(P)$ is the number of vertices in $P$, and $h_i=\|\hat{h}_i\|$. An irreducible path $P$ on graph $G$ is a simple path~(a path without repeated vertices) in which any two non-consecutive vertices in $P$ are not  adjacent in $G$. That is, let $P=(i_{n}=j,i_{n-1},i_{n-2},\ldots,i_2,i_1=S)$, then we have $(i_m,i_{m-1})\notin G, ~2\leq m\leq n$. See Fig.~\ref{fig:irredpath-FH} for examples of irreducible paths. 
	
	The bound for $\|[\hat{h}_j, e^{i\mathbf{H}'_{j} t'}(\hat{S}) -e^{i\mathbf{H}_{j} t'}(\hat{S}) ]\|$ is obtained in a similar way:
	\begin{eqnarray}\label{eq:Lucas_bound_dbcommu}
	\int^t_0\|[\hat{h}_j, e^{i\mathbf{H}'_{j}t'}(\hat{S}) -e^{i\mathbf{H}_{j}t'}(\hat{S}) ]\| dt'\leq \|\hat{S}\|\sum_{\hat{h}_q\in \hat{H}'_{j}- \hat{H}_{j}} \sum_{Y\in\mathcal{Y}_{s,jq}}\binom{n_j(Y)+n_q(Y)-1}{n_q(Y)} \frac{(2t)^{n(Y)-1}}{[n(Y)-1]!}\prod_{i\in Y,i\ne s} h_i,
	\end{eqnarray}
	where the first sum is over all vertices $q$ that appears in $\hat{H}'_{j}- \hat{H}_{j}$, $\mathcal{Y}_{s,jq}$ is the set of all irreducible $Y$-shapes on $G$ with root $s$ and end points $j,q$,  $n_j(Y)$ is the number of vertices on the $j$-branch of $Y$ and similarly for $n_q(Y)$, so that $n(Y)=n_j(Y)+n_q(Y)+n_s(Y)+1$. The definition of an irreducible $Y$-shape with root $s$ and endpoints $j,q$ generalizes that of an irreducible path: it is a three-branch tree with a branch point $y$~($y$ may coincide with one of $s,j,q$) such that the three branches $P_{sy}, P_{jy},P_{qy}$ are irreducible paths, and 
	any vertex in $P_{sy}\backslash\{y\}$ is not linked~(with respect to $G$) to any vertex in $(P_{jy}\cup P_{qy}) \backslash\{y\}$.
	The binomial coefficient in Eq.~\eqref{eq:Lucas_bound_dbcommu} arises due to the different ways of relative time ordering the operators on the $j$-branch and the $q$-branch of $Y$.
	
	We can see that at early times, the improved PBC bound in Eq.~\eqref{eq:Lucas_bound_dbcommu} grows like $ t^{\min n(Y_{s})-1}$, where the minimum is taken over all the $L$-cell-unembeddable $Y$-shapes with root $s$ (notice that the set of smallest $L$-cell-unembeddable $Y$-shapes always contain an irreducible one). Put it another way, the small-$t$ exponent of the improved PBC bound is equal to the minimum number of Hamiltonian terms needed to be attached to $\hat{S}$ to make the cluster unembeddable~(or, the number of Hamiltonian terms in the smallest unembeddable Lie cluster containing $S$), which agrees with perturbation theory~(provided that the expectation value in $|\psi\rangle$ of the leading term doesn't vanish).
	
	In general, the rhs of Eqs.~(\ref{eq:Lucas_bound_commu},\ref{eq:Lucas_bound_dbcommu}) can only be calculated numerically. 
	Yet there are a few special cases in which we can obtain simple analytic expressions due to the simple structure of the commutativity graph. In the following we show the bound for 1D TFIM as an example. Similar bounds apply to any 1D model whose commutativity graph is a single chain, another example is the FHM in the large-$U$ limit~\cite{wang:tightening_2019}.
	\paragraph{Example: 1D TFIM in PBC}
	We write the Hamiltonian as
	\begin{equation}\label{eq:HIsing_S}
	\hat{H}=-J\sum_{j}\hat{Z}_{j,j+1}-h\sum_{j}\hat{X}_j,
	\end{equation}
	where $\hat{Z}_{j,j+1}\equiv\hat{\sigma}^z_j\hat{\sigma}^z_{j+1}$ and $\hat{X}_j\equiv \hat{\sigma}^x_j$. For illustrative purpose, take $\hat{S}=\hat{\sigma}^x_i$. The commutativity graph $G$ is simply a 1D ring, as shown in Fig.~\ref{fig:lucasTFIMPBC}. In this case, there are only two irreducible paths between any two points in $G$. Inserting Eq.~\eqref{eq:Lucas_bound_commu} evaluated for the PBC TFIM  into Eq.~\eqref{eq:deltaS0tintro} we obtain
	\begin{equation}\label{eq:TFIM_Lucas_simple}
	|\delta\langle \hat{\sigma}^x_j(t)\rangle_\psi| \leq 4\sqrt{\frac{J}{h}}\frac{(2\sqrt{Jh}t)^L}{L!}+2\sqrt{\frac{J}{h}}\frac{(2\sqrt{Jh}t)^{L+2}}{(L+2)!}, 
	\end{equation}
	where we assume for simplicity that $L$ is an odd integer.  
	
	The improved PBC error bound for $\delta\langle\hat{\sigma}^x_i(t)\rangle_\psi$ is given by Eq.~\eqref{eq:EB_PBC}, which in the current case becomes
	\begin{equation}\label{eq:EB_PBC_TFIM}
	|\delta\langle\hat{\sigma}^x_i(t)\rangle_\psi|\leq 2\sum_{|j-i|\leq L}\int^t_0\! dt'\, \|[J \hat{Z}_{j,j+1}, e^{i\mathbf{H}'_{j} t'}(\hat{X}_i) -e^{i\mathbf{H}_{j}t'}(\hat{X}_i) ]\|,
	\end{equation}
	where $\hat{H}'_{j}=\hat{H}_{[j+1,\ldots, j+L]},~ \hat{H}_{j}=\hat{H}_{[j+1,\ldots, j+L-1]}$ for $j<i$, and we have a similar expression for $j\geq i$.
	Now we apply Eq.~\eqref{eq:Lucas_bound_dbcommu} to bound the rhs. In this case, $\hat{h}_q=JZ_{j+L-1,j+L}$, and since both $\hat{H}'_{j},\hat{H}_{j}$ have open boundary, there is only one irreducible $Y$-shape with endpoints $j,s,q$. Eq.~\eqref{eq:Lucas_bound_dbcommu} becomes
	\begin{equation}\label{eq:TFIMcommutator_bound}
	\int^t_0\! dt'\,\|[J \hat{Z}_{j,j+1}, e^{i\mathbf{H}'_{j} t'}(\hat{X}_i) -e^{i\mathbf{H}_{j}t'}(\hat{X}_i) ]\|\leq \frac{(2t)^{2L-2}}{(2L-2)!}J^L h^{L-2}\binom{2L-3}{n_{j}},~~n_{j}=\begin{cases}
	2(i-j)-1, & i-L<j<i,\\
	2(j-i)+1,& i\leq j< i+L.
	\end{cases}
	\end{equation}
	When $|j-i|=L$, we have $e^{i\mathbf{H}_{j}t'}(\hat{X}_i)=\hat{X}_i$, so we can simply use Eq.~\eqref{eq:Lucas_bound_commu}. The result is similar to the first term in Eq.~\eqref{eq:TFIM_Lucas_simple} with substitution $L\to 2L-1$. Inserting Eq.~\eqref{eq:TFIMcommutator_bound} along with the $|j-i|=L$ case into Eq.~\eqref{eq:EB_PBC_TFIM}, we get
	\begin{eqnarray}\label{eq:FSETFIMPBC}
	|\delta \langle\sigma^x_j(t)\rangle|\leq 4\sqrt{\frac{J}{h}}\frac{(2\sqrt{Jh}t)^{2L-1}}{(2L-1)!}+\frac{J}{h}\frac{(4\sqrt{Jh}t)^{2L-2}}{(2L-2)!}.
	\end{eqnarray}
	
	\section{Bounding the PBC error bound by solving a linear differential equation}
	Apart from a few special cases, the computational complexity of the Chen-Lucas bound, Eq.~\eqref{eq:Lucas_bound_dbcommu}, grows exponentially with system size, since the number of irreducible paths on $G$ grows exponentially in general. 
	For some models with very complicated $G$, the computation of the Chen-Lucas bound may take even longer than the quantum dynamics simulation itself. For this reason, in this section we give an alternative method based on Ref.~\cite{wang:tightening_2019}, which is slightly looser than the Chen-Lucas bound but whose computational time complexity grows only quadratically with system size. In addition, with some further simplifications this method leads  to the simple analytic expression in Eq.~\eqref{eq:deltaSPBC}.

	The goal here is to bound the FSE of a dynamics simulation on a periodic cluster of size $L_1\times L_2\times \ldots \times L_d$. The first step is to extend the 1D bound to $d$ dimensions. To this end, denote by $\mathcal{C}_j$ a cluster of size $L_1\times L_2\times\ldots \times L_{j}\times \infty\times\ldots\times \infty$, that is, for $i\leq j$, the $i$-th direction is periodic with size $L_i$, while for $i>j$ the $i$-th direction is infinite. Then we have
	\begin{eqnarray}\label{eq:deltaS_DDim}
	|\delta \langle \hat{S}(t)\rangle_{L_1\times \ldots \times L_d}| &\equiv&|\langle \hat{S}(t)\rangle_{\infty\times \ldots \times \infty}-\langle \hat{S}(t)\rangle_{L_1\times \ldots \times L_d}|\nonumber\\
	&\leq&  \sum^d_{j=1}|\langle \hat{S}(t)\rangle_{\mathcal{C}_{j-1}}-\langle \hat{S}(t)\rangle_{\mathcal{C}_{j}}|
	\end{eqnarray}
	Since for each $j$, the clusters $\mathcal{C}_{j-1}$ and $\mathcal{C}_j$ only differ in the $j$-th direction, the difference $|\langle \hat{S}(t)\rangle_{\mathcal{C}_{j-1}}-\langle \hat{S}(t)\rangle_{\mathcal{C}_{j}}|$ can be upper bounded using the 1D method. In the following we first focus on the $j=1$ term, since other terms can be treated in an almost identical way. As before, we mainly focus on the ``nearest-neighbor interacting'' case~(only allow interactions between neighboring unit cells), as the generalization to non-nearest-neighbor interactions is straightforward. We have
	\begin{eqnarray}\label{eq:EB_PBC_S2}
	|\langle \hat{S}(t)\rangle_{\mathcal{C}_{0}}-\langle \hat{S}(t)\rangle_{\mathcal{C}_{1}}|
	\leq 2 \int^t_0\! dt'\, \sum_k\|[\hat{h}_k, e^{i\mathbf{H}'_{k} t'}(\hat{S}) -e^{i\mathbf{H}_{k} t'}(\hat{S}) ]\|.
	\end{eqnarray}
	Notice that
	\begin{eqnarray}\label{eq:dbcommu}
	[\hat{h}_k, e^{i\mathbf{H}'_{k} t}(\hat{S}) -e^{i\mathbf{H}_{k}t}(\hat{S}) ]
	&=& \int^t_0 \! dt'\, \mathbf{h}_k  e^{i\mathbf{H}'_{k}t}\frac{d}{dt'}(\hat{S} - e^{-i\mathbf{H}'_{k}t'}e^{i\mathbf{H}_{k}t'}\hat{S}) \nonumber\\
	&=&i\int^t_0 \! dt'\, \mathbf{h}_k e^{i\mathbf{H}'_{k}(t-t')} (\mathbf{H}'_{k}-\mathbf{H}_{k}) e^{i\mathbf{H}_{k} t'}(\hat{S}).
	\end{eqnarray}
	Ref.~\cite{wang:tightening_2019} introduces a method to bound unequal time commutator of the form $\mathbf{h}_{k} e^{i\mathbf{H} t}(\hat{S})$ by solving a first order linear differential equation on the commutativity graph $G$. In the following we extend this method to bound double commutators of the form $\mathbf{h}_k e^{i\mathbf{H}_2(t-t')} \mathbf{h}_{j} e^{i\mathbf{H}_1 t'}(\hat{S})$, where $\hat{H}_2=\hat{H}'_{k}, \hat{H}_1=\hat{H}_{k}$, as required for Eq.~\eqref{eq:dbcommu}.
	
	To begin, let us first recall some basic results from Ref.~\cite{wang:tightening_2019} and fix our notations. The thermodynamic limit Hamiltonian with commutativity graph $G$ is written in Eq.~\eqref{eq:H_TD} as $\hat{H}=\sum_{j\in G} \hat{h}_j$. Since both $\hat{H}_{1}$ and $\hat{H}_2$ are sums of subset of terms in  $\hat{H}_\infty=\hat{H}$~[see Eq.~\eqref{def:H_kalpha}],  we can write
	\begin{equation}
	\hat{H}_a=\sum_{j\in G} \hat{h}^a_j,~~~ a=1,2,\infty,
	\end{equation}
	where $\hat{h}^a_j=\hat{h}_j$ if the term $\hat{h}_j$ is contained in $\hat{H}_a$ and $\hat{h}^a_j=0$ otherwise. Let $G^a_{ij}(t)$ be the solution to the differential equation
	\begin{eqnarray}\label{eq:DEGreen}
	\frac{d}{dt}G^a_{ij}(t)=2\sum_{k:\langle ki\rangle\in G} \sqrt{h^a_i h^a_k} G^a_{kj}(t)\equiv \sum_{k\in G} M^a_{ik} G^a_{kj},~~a=1,2,\infty,
	\end{eqnarray}
	with initial condition $G^a_{ij}(0)=\delta_{ij}$, where $h^a_i\equiv \|\hat{h}^a_i\|\geq 0$ and $M^a_{ik}\equiv 2\sqrt{h^a_i h^a_k} \cdot (\langle ik\rangle\in G)$. The solutions can be written formally as
	$G^a_{ij}(t)=[e^{M^at}]_{ij}$. Notice that since $0\leq h^1_j\leq h^2_j\leq h_j$ we have $0\leq M^1_{ij}\leq M^2_{ij}\leq M_{ij}$ and therefore $G^{1}_{ij}(t)\leq G^{2}_{ij}(t)\leq G_{ij}(t)$ for $t\geq 0$. In translation invariant systems, $G_{ij}(t)$ has a Fourier integral expression
	\begin{equation}\label{eq:G_ijt_fourier}
	G_{ij}(t)=[e^{M t}]_{ij} =\int_{\vec{k}}~ [e^{M_{\vec{k}}t }]_{\alpha_i\alpha_j} e^{i\vec{k}\cdot(\vec{r}_i-\vec{r}_j)},
	\end{equation}
	where $\vec{r}_i$ denotes the lattice translation vector of the unit cell containing vertex $i$, $\alpha_i$ labels the index of $i$ inside a unit cell, we use the notation $\int_{\vec{k}}=\int \frac{d^dk}{(2\pi)^d}$, and
	\begin{equation}
	[M_{\vec{k}}]_{\alpha_i\alpha_j}\equiv \sum_{\vec{r}_j} M_{ij} e^{-i\vec{k}\cdot(\vec{r}_i-\vec{r}_j)}.
	\end{equation}
	In the following we first upper bound $\|\mathbf{h}_k e^{i\mathbf{H}_2(t-t')} \mathbf{h}_{j} e^{i\mathbf{H}_1 t'}(\hat{S})\|$ in terms of $G^a_{ij}(t)$ and then apply Eq.~\eqref{eq:G_ijt_fourier} to get a simple final expression. 

	First consider the case when $t'=t$. For an arbitrary local operator $\hat{A}$, denote $\hat{A}^{i}(t)=[\hat{h}_i,\hat{A}(t)]$ and $~\hat{A}^{ij}(t)=[\hat{h}_i,[\hat{h}_j,\hat{A}(t)]]$, where $\hat{A}(t)\equiv e^{i\mathbf{H}_1t} (\hat{A})$. We want to find an upper bound for $\|\hat{S}^{ij}(t)\|$. Taking the time derivative using Heisenberg's equation, we have~[note: from Eq.~\eqref{eq:HeisenbergAij} to Eq.~\eqref{eq:sol_gammaij} we write $\hat{h}_j$ to mean $\hat{h}^1_j$ for notational simplicity]
	\begin{eqnarray}\label{eq:HeisenbergAij}
	i\frac{d}{dt}\hat{A}^{ij}(t)&=&[\hat{h}_i,[\hat{h}_j,[\hat{A}(t),\sum_{k:\langle k A\rangle\in G}\hat{h}_k(t)]]]\nonumber\\
	&=&\sum_{k:\langle k A\rangle\in G}\{[\hat{A}^{ij}(t),\hat{h}_k(t)]+[\hat{A}(t),\hat{h}^{ij}_k(t)]+[\hat{A}^i(t),\hat{h}^j_k(t)]+[\hat{A}^j(t),\hat{h}^i_k(t)]\}.
	\end{eqnarray}
	We can use the same derivations as in Eqs.~(6-8) in Ref.~\cite{wang:tightening_2019} to prove that
	\begin{eqnarray}\label{eq:Aijtint}
	\|\hat{A}^{ij}(t)\|\leq 2\sum_{k:\langle k A\rangle\in G} \int^t_0 \{\|\hat{A}\|  \|\hat{h}^{ij}_k(t')\|+\|\hat{A}^i(t')\|\|\hat{h}^j_k(t')\|+\|\hat{A}^j(t')\| \|\hat{h}^i_k(t')\|\}dt'.
	\end{eqnarray}
	The last two terms can be bounded by Eqs.~(16,17) of Ref.~\cite{wang:tightening_2019}: 
	\begin{equation}\label{eq:Ait}
	\|\hat{A}^i(t)\|\leq \bar{A}^i(t)\equiv \sum_{k:\langle kA\rangle\in G} G^1_{ik}(t)\sqrt{h_k h_i}2\|\hat{A}\|.
	\end{equation}
	Notice that $\frac{d}{dt}\bar{A}^i(t)= \sum_{k:\langle kA\rangle\in G}2\|A\|\bar{h}_k^i(t)$, so the last two terms in Eq.~\eqref{eq:Aijtint} can be combined to $\frac{d}{dt'}[\bar{A}^i(t')\bar{A}^i(t')]/\|\hat{A}\|$, and we get
	\begin{equation}\label{eq:Aijt}
	\|\hat{A}^{ij}(t)\|\leq 2\sum_{k:\langle k A\rangle\in G} \int^t_0 \|\hat{A}\|  \|\hat{h}^{ij}_k(t')\|dt' + \bar{A}^i(t)\bar{A}^j(t)/\|\hat{A}\|.
	\end{equation}
	Now take $\hat{A}=\hat{h}_l$, a term in  $\hat{H}$. Using Gr\"{o}nwall's inequality, we can prove that  $\|\hat{h}^{ij}_l(t)\|\leq \bar{h}^{ij}_l(t)$ where $\bar{h}^{ij}_l(t)$ is the solution to the differential equation
	\begin{equation}\label{eq:DEgamma_ijl}
	\frac{d}{dt}\bar{h}^{ij}_l(t)= 2\sum_{k:\langle k l\rangle\in G} h_l  \bar{h}^{ij}_k(t) + \frac{1}{h_l}\frac{d}{dt}[\bar{h}^i_l(t)\bar{h}^j_l(t)],
	\end{equation}
	with initial condition $\bar{h}^{ij}_l(0)=0$. If we substitute $\bar{\Gamma}_l(t)=\bar{h}^{ij}_l(t)/\sqrt{h_l}$, Eq.~\eqref{eq:DEgamma_ijl} is of the form $\frac{d}{dt}\bar{\Gamma}=M^a\cdot \bar{\Gamma}+B$, which has formal solution $\bar{\Gamma}(t)=\int_0^t \! dt'\,e^{M^a(t-t')}B(t')dt'$, i.e.
	\begin{equation}\label{eq:sol_gammaij}
	\bar{h}^{ij}_l(t)=\sum_k \int^t_0 \sqrt{\frac{h_l}{h^3_k}}G^1_{lk}(t-t')d[\bar{h}^i_k(t')\bar{h}^j_k(t')].
	\end{equation}
	An upper bound for $\| S^{ij}(t)\|$ can be obtained by taking $\hat{A}=\hat{S}$ in Eq.~\eqref{eq:Aijt} and inserting Eqs.~(\ref{eq:Ait}, \ref{eq:sol_gammaij}). 
	Finally, Eq.~(15) in Ref.~\cite{wang:tightening_2019} allows us to upper bound $\|\boldsymbol{h}_i  e^{i\mathbf{H}_2(t-t')} \boldsymbol{h}_{j} e^{i\mathbf{H}_1 t'}(\hat{S})\|$ in terms of $\|\hat{S}^{ij}(t)\|$ and $G^2_{ij}(t)$ by taking $\hat{B}=\boldsymbol{h}_{j} e^{i\mathbf{H} t'}(\hat{S})$:
	\begin{equation}\label{eq:dbcommufinal}
	\|\boldsymbol{h}_i  e^{i\mathbf{H}_2(t-t')} \boldsymbol{h}_{j} e^{i\mathbf{H}_1 t'}(\hat{S})\|\leq 2 \sum_{l\in G}G^2_{il}(t-t')\sqrt{\frac{h_i}{h_l}}\|\hat{S}^{lj}(t')\|.
	\end{equation}
	
	In summary, to get a bound for FSE $|\delta \langle \hat{S}(t)\rangle_{L_1\times \ldots \times L_d}|$, one need to first solve the differential equation Eq.~\eqref{eq:DEGreen} to get $G^1_{ij}(t),G^2_{ij}(t)$~[note that since $G^1_{ij}(t)\leq G^2_{ij}(t)$, having a bound for $G^2_{ij}(t)$ is enough], then use Eqs.~(\ref{eq:Ait},\ref{eq:Aijt},\ref{eq:sol_gammaij}) to get a bound for $\|\hat{S}^{lj}(t')\|$, then insert into Eq.~\eqref{eq:dbcommufinal} to get a bound for the double commutator, and finally use Eqs.~(\ref{eq:deltaS_DDim},\ref{eq:EB_PBC_S2},\ref{eq:dbcommu}) to bound $|\delta \langle \hat{S}(t)\rangle_{L_1\times \ldots \times L_d}|$. All steps in this procedure are efficient, with total computational cost scaling at most quadratically with the system size.
	\subsection{Derivation of Eq.~\eqref{eq:deltaSPBC}}\label{sec:derive_simple_analytic}
	We now derive the simple bound in Eq.~\eqref{eq:deltaSPBC}.
	We use $G^a_{ij}(t)\leq G_{ij}(t)$ and Eq.~\eqref{eq:G_ijt_fourier} to simplify the expression. Eq.~\eqref{eq:sol_gammaij} becomes
	\begin{eqnarray}\label{eq:sol_gammaij_Fourier}
	\bar{h}^{ij}_l(t)=\sqrt{h_jh_jh_l}\sum_{m\in G} \int_0^t \frac{1}{\sqrt{h_m}}[e^{M(t-t')}]_{lm} [M^2 e^{Mt'}]_{im}[M e^{Mt'}]_{jm} dt' +(i\leftrightarrow j). 
	\end{eqnarray}
	Taking $\hat{A}=\hat{S}$ in Eq.~\eqref{eq:Aijt} and inserting Eqs.~(\ref{eq:Ait}, \ref{eq:sol_gammaij_Fourier}), we get
	\begin{eqnarray}\label{eq:Sij_Fourier}
	\bar{S}^{ij}(t_3)=\sqrt{h_ih_j}\sum_{\substack{m,l\in G}} \int_{t_{1,2}} \frac{S_l}{\sqrt{h_m}}[e^{M(t_2-t_1)}]_{lm} [M^2 e^{Mt_1}]_{im}[M e^{Mt_1}]_{jm}  +(i\leftrightarrow j)+\bar{S}^i(t_3)\bar{S}^j(t_3)/S,
	\end{eqnarray}
	where $\int_{t_{1,2}}\equiv \int_{0\leq t_1\leq t_2\leq t_3}dt_1dt_2 $ and $S_l\equiv 2\sqrt{h_l}(\langle ls\rangle\in G)\|\hat{S}\|$. Notice that $M$ is symmetric, and so are $e^{Mt},Me^{Mt}$, etc.
	Inserting Eq.~\eqref{eq:Sij_Fourier} into Eq.~\eqref{eq:dbcommufinal}, we obtain
	\begin{eqnarray}\label{eq:dbcommufinal_Fourier}
	\|\boldsymbol{h}_i  e^{i\mathbf{H}_2(t-t_3)} \boldsymbol{h}_{j} e^{i\mathbf{H}_1 t_3}(\hat{S})\|
	&\leq& 2\int_{t_{1,2}}\sum_{\substack{m,l\in G\\a=1,2}}\sqrt{\frac{h_i h_j} {h_m}}[M^{3-a} e^{M(t-t_3+t_2-t_1)}]_{im} [M^a e^{M(t_2-t_1)}]_{jm} [e^{M t_1}]_{ml} S_l+2\bar{S}^i(t)\bar{S}^j(t_3)/S\nonumber\\
	&=& 2\int_{t_{1,2},\vec{k}_{1,2}} \sum_{\substack{\alpha_m,l\\a=1,2}}\sqrt{\frac{h_i h_j} {h_m}}[M_{\vec{k}_1}^{3-a} e^{M_{\vec{k}_1}(t-t_3+t_2-t_1)}]_{\alpha_i\alpha_m} [M^a_{\vec{k}_2} e^{M_{\vec{k}_2}(t_2-t_1)}]_{\alpha_j\alpha_m} \nonumber\\
	&&\times [e^{M_{\vec{k}_1+\vec{k}_2} t_1}]_{\alpha_m\alpha_l}  e^{i\vec{k}_1\cdot (\vec{r}_i-\vec{r}_l)+i\vec{k}_2\cdot (\vec{r}_j-\vec{r}_l)} S_l+2\bar{S}^i(t)\bar{S}^j(t_3)/S,
	\end{eqnarray}
	where $\alpha_m$ runs over all vertices in a unit cell. Now we insert Eq.~\eqref{eq:dbcommufinal_Fourier} into Eqs.~\eqref{eq:dbcommu} and \eqref{eq:EB_PBC_S2} to bound FSE $|\langle \hat{S}(t)\rangle_{\mathcal{C}_{0}}-\langle \hat{S}(t)\rangle_{\mathcal{C}_{1}}|$. Eqs.~\eqref{eq:dbcommu} and \eqref{eq:EB_PBC_S2} combines to be
	\begin{equation}\label{eq:deltaS_dbcommu}
	|\langle \hat{S}(t)\rangle_{\mathcal{C}_{0}}-\langle \hat{S}(t)\rangle_{\mathcal{C}_{1}}|
	\leq 2\sum_{(i,j)\in \mathcal{S}}\int^t_0\! dt_4\, \int^{t_4}_0 \|\boldsymbol{h}_i  e^{i\mathbf{H}_2(t_4-t_3)} \boldsymbol{h}_{j} e^{i\mathbf{H}_1 t_3}(\hat{S})\|dt_3, 
	\end{equation}
	where 
	\begin{eqnarray}\label{def:mathcalS}
	\mathcal{S}&\equiv& \{(i,j)|n_i=n_j=2~\land~ x^{\text{L}}_i<x^{\text{L}}_S~\land~ x^{\text{R}}_S\leq x^{\text{R}}_i+L-1~\land~x^{\text{L}}_j=x^{\text{L}}_i+L-1\}  \nonumber\\
	&&\cup\{(i,j)|n_i=n_j=2~\land~ x^{\text{R}}_i>x^{\text{R}}_S~\land~ x^{\text{L}}_S\geq x^{\text{L}}_i-(L-1)~\land~ x^{\text{R}}_j=x^{\text{R}}_i-(L-1)\}.
	\end{eqnarray}
	The restriction on the summation over $(i,j)$ follows from the discussion above Thm.~\ref{thm:sum_EUV} and the definition of $\hat{H}_i, \hat{H}'_i$ in Eq.~\eqref{def:H_kalpha_NN}, which is essentially the requirement that  $i$ is the EUV of some Lie cluster starting from the vertex $s$, and that $\hat{h}_j$ is a term in $\hat{H}_2-\hat{H}_1\equiv \hat{H}'_{i}-\hat{H}_{i}$. Notice that the set $\mathcal{S}$ is translation invariant in directions $L_2,\ldots,L_d$. Namely, if $(i,j)\equiv((\vec{r}_i,\alpha_i),(\vec{r}_j,\alpha_j))\in \mathcal{S}$, then for any $\vec{r}_{1\perp}\perp \hat{x},\vec{r}_{2\perp}\perp \hat{x}$, we have $(i',j')\equiv((\vec{r}_i+\vec{r}_{1\perp},\alpha_i),(\vec{r}_j+\vec{r}_{2\perp},\alpha_j))\in \mathcal{S}$. Let $\mathcal{T}_\perp$ denote the group of all lattice translation vectors in directions $L_2,\ldots,L_d$, such that $\mathcal{S}\cong \mathcal{S}/\mathcal{T}_\perp\times \mathcal{T}_\perp$. We can therefore decompose the sum over $i,j$ as $\sum_{(i,j)\in \mathcal{S}}=\sum_{(x_i,x_j)\in \mathcal{S}/\mathcal{T}_\perp}\sum_{(\vec{r}_{i\perp},\vec{r}_{j\perp})\in \mathcal{T}_\perp}$. We have~(we abbreviate $x_i=x^{\text{L}}_i,x_j=x^{\text{L}}_j$)
	\begin{eqnarray}
	\sum_{\substack{(\vec{r}_{i\perp},\vec{r}_{j\perp})\\ \in \mathcal{T}_\perp}}\|\boldsymbol{h}_i  e^{i\mathbf{H}_2(t-t_3)} \boldsymbol{h}_{j} e^{i\mathbf{H}_1 t_3}(\hat{S})\|
	&\leq& 2\sqrt{h_{\alpha_i}h_{\alpha_j}}\int_{t_{1,2},k_{1x},k_{2x}}\sum_{\substack{\alpha_m,l\\a=1,2}}[M_{k_{1x}}^{3-a} e^{M_{k_{1x}}(t-t_3+t_2-t_1)}]_{\alpha_i\alpha_m} [M^a_{k_{2x}} e^{M_{k_{2x}}(t_2-t_1)}]_{\alpha_j \alpha_m} \nonumber\\
	&&\times [e^{M_{k_{1x}+k_{2x}} t_1}]_{\alpha_m \alpha_l} \frac{1}{\sqrt{h_{\alpha_m}}}  e^{i k_{1x}(x_i-x_l)+i k_{2x} (x_j-x_l)} S_l+2\bar{S}_x^i(t)\bar{S}_x^j(t_3)/S,
	\end{eqnarray}
	where $M_{k_x}\equiv M_{\vec{k}=(k_x,0,\ldots,0)} $, and
	\begin{equation}
	\bar{S}_x^j(t)=\sum_{\vec{r}_{j\perp}\in \mathcal{T}_\perp} \bar{S}^j(t)=\sum_{\langle nS\rangle}\int_{k_x}[e^{M_{k_x}t}]_{\alpha_j\alpha_n} e^{i k_x(x_j-x_n)}2\sqrt{h_jh_n}S.
	\end{equation}
	We now need to sum over $x_i, x_j$ satisfying restrictions defined in Eq.~\eqref{def:mathcalS}. Notice that all such $(x_i,x_j)$ satisfy $|x_i-x_j|=L-1$. It turns out to be more convenient to extend the summation to all $(x_i,x_j)$ satisfying $|x_i-x_j|=L-1$. This still gives an upper bound since the rhs of Eq.~\eqref{eq:dbcommufinal_Fourier} is always non-negative. We let $x_j=x_i\pm(L-1)$ and sum over $x_i$ from $-\infty$ to $\infty$:
	\begin{eqnarray}\label{eq:direction_1_k}
	\sum_{(i,j)\in\mathcal{S}}\|\boldsymbol{h}_i  e^{i\mathbf{H}_2(t_4-t_3)} \boldsymbol{h}_{j} e^{i\mathbf{H}_1 t_3}(\hat{S})\| 
	&\leq &2\sum_{\substack{\alpha_m,\alpha_i,\alpha_j,l\\a=1,2\\\Delta x=\pm (L-1)}}\sqrt{h_{\alpha_i}h_{\alpha_j}}\int_{t_{1,2},k_{x}}[M_{-k_{x}}^{3-a} e^{M_{-k_{x}}(t_4-t_3+t_2-t_1)}]_{\alpha_i\alpha_m} [M^a_{k_{x}} e^{M_{k_{x}}(t_2-t_1)}]_{\alpha_j\alpha_m} \nonumber\\
	&&\times [e^{M_{0} t_1}]_{\alpha_m\alpha_l} \frac{1}{\sqrt{h_{\alpha_m}}} e^{i k_{x}\Delta x} S_l +\sum_{(i,j)\in\mathcal{S}/\mathcal{T}_\perp}2\bar{S}_x^i(t_4)\bar{S}_x^j(t_3)/S.
	\end{eqnarray}
	This is the bound for the $j=1$ term in Eq.~\eqref{eq:deltaS_DDim}. Those $j>1$ terms in Eq.~\eqref{eq:deltaS_DDim} can be treated in an almost identical way; the only difference is that since the directions $L_1,\ldots,L_{j-1}$ are now periodic, the integrals over $\vec{k}$ in those $j-1$ directions have to be replaced by a discrete sum, $k_m\in \{\frac{2\pi n}{L_m}| n=0,1,\ldots, L_m-1\}, 1\leq m\leq j-1$. The result, however, remains completely the same as Eq.~\eqref{eq:direction_1_k}. In summary, we have
	\begin{eqnarray}\label{eq:direction_all_k}
	|\delta \langle \hat{S}(t)\rangle_{L_1\times \ldots \times L_d}|
	&\leq &4\sum_{\substack{\alpha_m,\alpha_i,\alpha_j,l\\ 1\leq p\leq d,a=1,2\\\Delta x_p=\pm (L_p-1)}}\sqrt{h_{\alpha_i}h_{\alpha_j}}\int_{t_{1,2,3,4},k_{p}}[M_{-k_{p}}^{3-a} e^{M_{-k_{p}}(t_4-t_3+t_2-t_1)}]_{\alpha_i\alpha_m} [M^a_{k_{p}} e^{M_{k_{p}}(t_2-t_1)}]_{\alpha_j\alpha_m} \nonumber\\
	&&\times [e^{M_{0} t_1}]_{\alpha_m\alpha_l} \frac{1}{\sqrt{h_{\alpha_m}}} e^{i k_{p}\Delta x_p} S_l+4\sum_{\substack{(i,j)\in\mathcal{S}_p\\1\leq p\leq d}}\int_{t_{3,4}}\bar{S}_p^i(t_4)\bar{S}_p^j(t_3)/S,
	\end{eqnarray}
	where the time integrations are restricted to $0<t_1<t_2<t_3<t_4<t$.
	We now use the same derivation in Eq.~(28) of Ref.~\cite{wang:tightening_2019} to upper bound the $k$ integral by its analytic continuation to $i\kappa$. We focus on the first term, since the second term can be treated in a similar way.  We have
	\begin{eqnarray}\label{eq:direction_all_kappa}
	|\delta \langle \hat{S}(t)\rangle_{L_1\times \ldots \times L_d}|
	&\leq &4\sum_{\substack{\alpha_m,\alpha_i,\alpha_j,l\\ 1\leq p\leq d,a=1,2}}\sqrt{h_{\alpha_i}h_{\alpha_j}}\int_{t_{1,2,3,4}}\{[M_{i\kappa_{p}}^{3-a} e^{M_{i\kappa_{p}}(t_4-t_3+t_2-t_1)}]_{\alpha_m \alpha_i} [M^a_{i\kappa_{p}} e^{M_{i\kappa_{p}}(t_2-t_1)}]_{\alpha_j\alpha_m}+(\kappa_p\to-\kappa_p)\} \nonumber\\
	&&\times [e^{M_{0} t_1}]_{\alpha_m\alpha_l} \frac{1}{\sqrt{h_{\alpha_m}}} e^{- \kappa_{p}(L_p-1)} S_l+(\text{Term 2})\\
	&\leq &4\sum_{\substack{\alpha_i,\alpha_j,l\\ 1\leq p\leq d,a=1,2}}\sqrt{h_{\alpha_i}h_{\alpha_j}}\int_{t_{1,2,3,4}}\{[M^a_{i\kappa_{p}} e^{M_{i\kappa_{p}}(t_2-t_1)} D M_{i\kappa_{p}}^{3-a} e^{M_{i\kappa_{p}}(t_4-t_3+t_2-t_1)}]_{\alpha_j \alpha_i} +(\kappa_p\to-\kappa_p)\} \nonumber\\
	&&\times \|e^{M_{0} t_1}\|  e^{- \kappa_{p}(L_p-1)} S_l+(\text{Term 2})\nonumber\\
	&\leq &8\sum_{\substack{ 1\leq p\leq d}}c_H c_S  c_{\kappa_p}\int_{t_{1,2,3,4}} \omega_{i\kappa_p}^3 e^{\omega_{i\kappa_p}(t_4-t_3+2t_2-2t_1)+\omega_0 t_1-\kappa_p(L_p-1)} +(\text{Term 2})\nonumber\\
	&\leq &2\sum_{\substack{ 1\leq p\leq d}}c_H c_S  c_{\kappa_p} \frac{e^{2\omega_{i\kappa_p}t-\kappa_p(L-1)}}{2\omega_{i\kappa_p}-\omega_0}+(\text{Term 2}),
	\end{eqnarray}
	where $D=\mathrm{diag}\{\frac{1}{\sqrt{h_{\alpha_m}}}\}$, $c_H=\sum_{\alpha_i} h_{\alpha_i}$, $c_S=\sum_{\langle lS\rangle} 2\sqrt{h_l}S$, $c_{\kappa_p}=\|U_{i\kappa_p}\|\|U^{-1}_{i\kappa_p}\|\|U^{-1}_{i\kappa_p}DU_{i\kappa_p}\|+(\kappa_p\to-\kappa_p)$, $\omega_{i\kappa_{p}}$ is the Perron-Frobenius eigenvalue of $M_{i\kappa_p}$~(i.e. the eigenvalue with the largest magnitude; this must be real and positive by the Perron-Frobenius theorem), and $U_{i\kappa_p}$ is the matrix that diagonalizes $M_{i\kappa_p}$, i.e. $M_{i\kappa_p}=U_{i\kappa_p}\Omega_{i\kappa_p}U^{-1}_{i\kappa_p}$ for some diagonal matrix $\Omega_{i\kappa_p}$. In the third line we use Cauchy-Schwarz inequality $\sum_{i,j} \sqrt{h_i}C_{ij}\sqrt{h_j}\leq \sum_i h_i \|C\|$. Combined with the second term, the final result is
	\begin{equation}\label{eq:deltaSikappa}
	|\delta \langle \hat{S}(t)\rangle_{L_1\times \ldots \times L_d}|\leq \sum_{\substack{ 1\leq p\leq d}}C_{\kappa_p} e^{2\omega_{i\kappa_p}t-\kappa_p(L-1)},
	\end{equation}
	where 
	\begin{equation}\label{eq:const_Cp}
	C_{\kappa_p}=2 \frac{c_H c_S  c_{\kappa_p}}{2\omega_{i\kappa_p}-\omega_0}+8 Sc_H \frac{\|U_{i\kappa_p}\|^2\|U^{-1}_{i\kappa_p}\|^2}{\omega_{i\kappa_p}^2}s_{\kappa_p}s_{-\kappa_p}, ~~s_{\kappa_p}=\sqrt{\sum_{\langle lS\rangle}h_l e^{2\vec{\kappa}_p \cdot (\vec{r}_l-\vec{r}_s)}}. 
	\end{equation}
	Notice that $\omega_{i\kappa_p}$ means $\omega_{i\vec{\kappa}_p}$ where the only nonzero component of $\vec{\kappa}_p$ is  in the $p$-th direction equal to $\kappa_p$.
	
	With Eq.~\eqref{eq:deltaSikappa} we are ready to derive the bound in Eq.~\eqref{eq:deltaSPBC}. The arguments in Sec.~IV of Ref.~\cite{wang:tightening_2019} can be generalized to prove the following 
	\begin{proposition}\label{prop:1}
		Let $f(t)$ be a function with Taylor expansion $f(t)=\sum_{n\geq m} f_n t^n$, where $m$ is a positive integer and $f_n\geq 0, \forall n\geq m$. If $f(l/v)\leq C$, then $f(t)\leq C(vt/l)^m$, for $ 0\leq t\leq l/v$.
	\end{proposition}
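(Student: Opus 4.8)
The plan is to factor out the leading power $t^m$ and reduce the statement to a monotonicity fact about power series with nonnegative coefficients, in the same spirit as the argument in Sec.~IV of Ref.~\cite{wang:tightening_2019}. Concretely, I would introduce the function $g(t)\equiv f(t)/t^m=\sum_{n\geq m} f_n t^{n-m}$ for $t>0$, with the convention $g(0)\equiv f_m$, so that $f(t)=t^m g(t)$ identically on $[0,l/v]$.

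The first step is to justify that these manipulations are legitimate on the interval of interest. Since $f(l/v)$ is assumed to be a (finite) number and $f_n\geq 0$ for all $n\geq m$, the series $\sum_{n\geq m} f_n (l/v)^n$ converges; by the comparison $f_n t^n\leq f_n (l/v)^n$ for $0\leq t\leq l/v$, the power series for $f$ — and hence the rearranged series for $g$ — converges absolutely and uniformly on $[0,l/v]$, so term-by-term estimates are valid there. The second step is the monotonicity observation: for each $n\geq m$ the exponent $n-m$ is a nonnegative integer and $f_n\geq 0$, so $t\mapsto f_n t^{n-m}$ is nondecreasing on $[0,\infty)$; a (locally uniform) sum of nondecreasing functions is nondecreasing, hence $g$ is nondecreasing on $[0,l/v]$.

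The final step is to combine these: for any $t$ with $0\leq t\leq l/v$ we have $g(t)\leq g(l/v)=f(l/v)/(l/v)^m\leq C\,(v/l)^m$, and multiplying by $t^m\geq 0$ gives $f(t)=t^m g(t)\leq C\,(vt/l)^m$, which is exactly the asserted bound. I do not expect a genuine obstacle here: the only point requiring any care — and it is minor — is the convergence/interchange at the endpoint $t=l/v$, which is handled by the comparison argument above; once that is in place the result is immediate from the nonnegativity of the Taylor coefficients and the positivity of the powers. This Proposition is then applied with $f$ the right-hand side of Eq.~\eqref{eq:deltaSikappa} (as a function of $t$, expanded in its manifestly nonnegative Taylor series) and $C$ the value of that bound at $t=L_p/2v_p$, yielding the $(2v_p t/L_p)^{\mathcal{L}_p}$ scaling of Eq.~\eqref{eq:deltaSPBC}.
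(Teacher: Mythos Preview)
Your proof is correct and is precisely the natural argument: factor out $t^m$, observe that the residual series $g(t)=\sum_{n\ge m}f_n t^{n-m}$ is nondecreasing on $[0,l/v]$ by nonnegativity of the coefficients, and bound $g(t)$ by its endpoint value. The paper does not actually spell out a proof of this Proposition --- it simply refers to Sec.~IV of Ref.~\cite{wang:tightening_2019} --- so your write-up is in the same spirit as (and fills in) what the paper leaves implicit.

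One small inaccuracy in your final paragraph, concerning the \emph{application} rather than the proof: the function $f$ to which the Proposition is applied is not the right-hand side of Eq.~\eqref{eq:deltaSikappa} itself (that exponential has a nonzero $t^0$ Taylor coefficient and so does not start at order $m=\mathcal{L}_p$). Rather, as the paper explains just after the Proposition, $f(t)$ is taken to be the upper bound built from Eqs.~(\ref{eq:dbcommufinal}) and (\ref{eq:deltaS_dbcommu}) via the Taylor expansion of $G_{ij}(t)$, which does have vanishing coefficients below order $\mathcal{L}_p$; Eq.~\eqref{eq:deltaSikappa} is then used only to furnish the constant $C=C_{\kappa_{p0}}e^{\kappa_{p0}}$ bounding $f(L_p/2v_p)$.
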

	Now take $f(t)=|\langle \hat{S}(t)\rangle_{\mathcal{C}_{p-1}}-\langle \hat{S}(t)\rangle_{\mathcal{C}_{p}}|$. Eq.~\eqref{eq:deltaSikappa} proves that $f(\frac{L_p}{2v_p})\leq C_{\kappa_{p0}}e^{\kappa_{p0}}$~(one has to redo the above derivations for each direction separately), where 
	\begin{equation}\label{eq:directionalv_p}
	v_{p}=\min_{\kappa_p>0}\frac{\omega_{i\kappa_p}}{\kappa_p}
	\end{equation}
	is the LR speed in the $p$-th direction, and $\kappa_{p0}$ is the position of the minimum. On the other hand, using the Taylor expansion of $G_{ij}(t)$ in Eq.~\eqref{eq:G_ijt_fourier}, one can show that the upper bound for $f(t)$ given in Eqs.~(\ref{eq:dbcommufinal}) and (\ref{eq:deltaS_dbcommu}) has a Taylor series expansion  with the same leading term as the Chen-Lucas bound in Eq.~\eqref{eq:Lucas_bound_dbcommu} and with non-negative coefficients, i.e. $f(t)=\sum_{n\geq m} f_n t^n$ where $f_n\geq 0$, and $m={\min n_p(Y_{S})-1}$ is the number of Hamiltonian terms of the smallest  $Y$-shape starting from $S$ that is $L_p$-cell unembeddable in the $p$-th direction. Therefore, Proposition~\ref{prop:1} says 
	\begin{equation}
	|\langle \hat{S}(t)\rangle_{\mathcal{C}_{0}}-\langle \hat{S}(t)\rangle_{\mathcal{C}_{1}}|\leq C_{\kappa_{p0}}e^{\kappa_{p0}}\left(\frac{2v_p t}{L_p}\right)^{\min n_p(Y_{S})-1}.
	\end{equation}
	In summary, we have 
	\begin{equation}\label{eq:deltaS_simple_final}
	|\delta \langle \hat{S}(t)\rangle_{L_1\times \ldots \times L_d}|\leq \sum_{\substack{ 1\leq p\leq d}} C_p\left(\frac{2v_pt}{L_p}\right)^{\mathcal{L}_p},
	\end{equation}
	where $C_p\equiv C_{\kappa_{p0}}e^{\kappa_{p0}}$ and $\mathcal{L}_p\equiv \min n_p(Y_{S})-1$.
	In general, the exponent $\mathcal{L}_p$ is linearly related to $L_p$, i.e. $\mathcal{L}_p=\eta_p L_p-\mu_{p,S}$. This finishes the proof of Eq.~\eqref{eq:deltaSPBC}.
	
	\bibliographyA{finite-size-error}
	\bibliographystyleA{apsrev4-1}
\end{document}